\newcommand{\X}{\mathbf{X}}
\newcommand{\XX}{{\X_M}}
\newcommand{\UNP}{\emph{UNP}}
\newcommand{\UNPB}{\emph{UNPB}}
\def\namedlabel#1#2{\begingroup
    #2%
    \def\@currentlabel{#2}%
    \phantomsection\label{#1}\endgroup
} %
\DeclareMathOperator*{\argmax}{argmax}
\newtheorem{remark}{Remark}
\newtheorem{theorem}{Theorem}
\newtheorem{lemma}{Lemma}[section]
\newtheorem{definition}[lemma]{Definition}
\newtheorem{claim}[lemma]{Claim}
\newtheorem{observation}[lemma]{Observation}
\title{On the existence of EFX allocations in multigraphs}
\author{ 
	\href{https://orcid.org/0000-0003-3997-5131}{\includegraphics[scale=0.06]{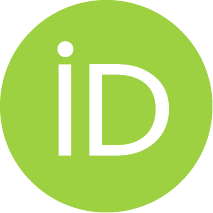}\hspace{1mm}Alkmini Sgouritsa} \\
	Athens University of Economics and Business\\
	Archimedes/Athena RC\\
	Greece \\
    \AND
	\href{https://orcid.org/0009-0007-3016-980X}{\includegraphics[scale=0.06]{orcid.pdf} \hspace{1mm}Minas Marios Sotiriou} \\
    National and Kapodistrian University of Athens\\
    National Technical University of Athens\\
	Athens University of Economics and Business\\
	Greece \\
}
\date{}
\begin{document}
\maketitle

\begin{abstract}
   We study the problem of ``fairly'' dividing indivisible goods to several agents that have valuation set functions over the sets of goods. As fair we consider the allocations that are envy-free up to any good (EFX), i.e., no agent envies any proper subset of the goods given to any other agent. The existence or not of EFX allocations is a major open problem in Fair Division, and there are only positive results for special cases. 

   Christodoulou et al.\cite{EFXsimplegraphs} introduced a restriction on the agents' valuations according to a graph structure: the vertices correspond to agents and the edges to goods, and each vertex/agent has zero marginal value (or in other words, they are indifferent) for the edges/goods that are not adjacent to them. The existence of EFX allocations has been shown for simple graphs with general monotone valuations \cite{EFXsimplegraphs}, and for multigraphs for restricted additive valuations \cite{kaviani2024envyfreeallocationindivisiblegoods}. 
   
   In this work, we push the state-of-the-art further, and show that the EFX allocations always exists in {\em multigraphs} and {\em general monotone valuations} if any of the following three conditions hold: either (a) the multigraph is bipartite, or (b) each agent has at most $\lceil \frac{n}{4} \rceil -1$ neighbors, where $n$ is the total number of agents, or (c) the shortest cycle with non-parallel edges has length at least 6.
\end{abstract}

\newpage

\section{Introduction}
We study a problem of ``fairly'' dividing indivisible goods to many agents. The question of how to divide resources to several agents in a fair way dates back to the ancient times, e.g., dividing land, and it raised important research questions since the late 40's \cite{h__steihaus_1948}. One prominent notion in fair division is {\em envy-free allocations}, where nobody envies what is allocated to any other agent, which was formally introduced a bit later  \cite{gamow1958puzzle, foley1966resource, VARIAN197463}. Initially, the problem was studied under the scope of divisible resources, where envy-free allocations are known to always exist \cite{Stromquist1980HowTC, Woo80, Aziz2016}.

The focus of this work is on indivisible goods, with multiple applications, such as dividing inheritance, and assigning courses to students \cite{courses}. The non-proﬁt website Spliddit (\url{http://www.spliddit.org/}) provides mechanisms for several such applications.  
It is easy to see that envy-free allocations are not guaranteed to exist; for instance consider two agents and one indivisible good, then whoever gets the good is envied by the other agent. 
This example demonstrates how strong the requirement of completely envy-freeness is for the scenario of indivisible goods. 

This has led to the study of two basic relaxations of envy-freeness, namely envy-freeness up to one good (EF1) \cite{budish} and envy-freeness up to any good (EFX) \cite{EFXCara}. EF1 is a weaker notion than EFX, and it is guaranteed to always exist and can be found in polynomial time \cite{LiptonEtAl}. On the other hand, it is not known if EFX allocations are guaranteed to exist in general, and it has been characterized as "Fair Division's Most Enigmatic Question" \cite{Procaccia}. EFX allocations are known to exist for special cases: 
e.g., for 2 agents with general monotone valuations \cite{PlautRough}, for 3 agents with additive valuations or a slightly more general class  \cite{CGM24, akrami2022efxallocationssimplificationsimprovements}, and for many agents with identical monotone valuations \cite{PlautRough}, or with additive valuations where each agent is restricted to have one of the two fixed values for each good \cite{TwoValuedInstanses}. 

Surprisingly, it was recently shown that EFX allocations need not exists in the case with chores, i.e., negatively valued items \cite{christoforidis2024pursuitefxchoresnonexistence2approx}. This is the first result of non-existence of EFX for monotone valuation functions, and the construction requires only 3 agents and 6 goods. This is an interesting separation between goods and chores, as for the case of goods it is known that EFX allocations are guaranteed to exist when the number of goods are at most 3 more than the number of agents \cite{mahara2021extensionadditivevaluationsgeneral}. 

Unfortunately, little is known for the case with multiple agents and multiple goods; additionally to the works that have been already mentioned \cite{TwoValuedInstanses, PlautRough}, EFX allocations are known to exist when agents' preference follow a lexicographic order defined by their preference over singletons \cite{DBLP:conf/aaai/HosseiniSVX21}, and when the valuations have dichotomous marginals, i.e., the marginal value when a good is added to a set is either 0 or 1 \cite{babaioff2020fairtruthfulmechanismsdichotomous}. All those works consider high restrictions and resemblance on the agents valuations. Towards broadening our understanding for the case of multiple agents and goods, Christodoulou et al.~\cite{EFXsimplegraphs} introduced a setting that is related to our work, where the valuations are defined based on a graph: given a graph, the agents correspond to the vertices of the graph, and the goods to the edges. Then, each agent is indifferent for the goods/edges that are not adjacent to them. In \cite{EFXsimplegraphs}, they showed that EFX allocations always exist on graphs.

In this work we consider a multigraph, which can be interpreted as follows: {\em each good is of interest for at most two agents}. The motivation in the multigraph setting is, similarly to \cite{EFXsimplegraphs}, the division of territories between nations, areas of interest between neighboring countries and more generally division of geographic settings. 
Another application is to allocate available space for research teams and collaborators, which is always a challenging task, and becomes even more difficult when there are multiple conflicts for available areas.
It was recently showed that EFX allocation always exists in multigraphs when all agents have restricted additive valuations \cite{kaviani2024envyfreeallocationindivisiblegoods}, i.e., each good $g$ has a fixed value $v_g$, and each agent may value the good by $v_g$ or not value it at all, in which case he has value $0$. We generalize this result with respect to the valuation functions, where we use general monotone valuations, on the expense of having restrictions on the multigraph, where either the multigraph is bipartite, or each agent has at most $\lceil \frac{n}{4} \rceil-1$ neighbors, where $n$ is the total number of agents, or the shortest cycle with no parallel edges has length at least 6. 

Two other papers independently and in parallel showed EFX allocations involving multigraphs \cite{afshinmehr2024efxallocationsorientationsbipartite, bhaskar2024efxallocationsmultigraphclasses}. We next discuss their results and the comparison to ours. \citet{afshinmehr2024efxallocationsorientationsbipartite}  independently showed that EFX allocations exist for bipartite multigraphs, which coincides with our first result. They further showed the existence of EFX allocations for multicycles with additive valuations. Our first and third results (for bipartite multigraphs and graphs with cycles of non-parallel edges of length at least $6$) include all multicycles apart from the ones of length $3$ and $5$; we note however that we consider the more general monotone valuations.    \citet{bhaskar2024efxallocationsmultigraphclasses} independently showed the existence of EFX allocations in bipartite multigraphs when agents have cancelable valuations and in multi-trees for general monotone valuations. Our first result generalizes those two results. Moreover, they showed that multigraphs with chromatic number $t$ admit EFX allocations when the shortest cycle using non-parallel edges is of length at least $2t-1$; this result holds when agents has cancelable valuations. Note that the multigraphs with chromatic number $2$ is just the bipartite multigraphs. When the chromatic number is at least $4$, we improve the requirement for shortest cycles using non-parallel edges, from length at least $7$ (and greater depending on the chromatic number of the multigraph) to $6$ (no matter the chromatic number), and furthermore our results is more general as it applies to general monotone valuations. On the other hand, for multigraphs with chromatic number 3, \citet{bhaskar2024efxallocationsmultigraphclasses} showed the existence of EFX allocation (for cancelable valuations) for an improved length of $5$ (comparing to $6$ in our result) for the shortest cycle using non-parallel edges.

\subsection{Our Results}

We show that an EFX allocation exists for multigraphs when agents have general monotone valuations and:
\begin{itemize}
    \item The graph is bipartite
    \item Each agent has at most $\lceil \frac{n}{4} \rceil-1$ neighbors, where $n$ is the total number of agents
    \item The shortest cycle with non-parallel edges of the multigraph is at least 6
\end{itemize}

Our results are summarized in the following three theorems: 

\begin{theorem}\label{theorem:bipartite}
    In bipartite multigraphs, an EFX allocation always exists.
\end{theorem}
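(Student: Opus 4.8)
Since an EFX allocation of a disconnected multigraph is obtained by combining EFX allocations of its connected components (each agent has zero value for goods outside its component), it suffices to handle a connected bipartite multigraph $G$ with parts $U,W$. The plan is to construct an \emph{EFX orientation}: an allocation in which every good $g=\{i,j\}$ is given to $i$ or to $j$ (such an allocation is automatically complete). Orientations are convenient because of locality: writing $d(i)$ for the goods incident to $i$, $E_{ij}$ for the parallel goods shared by adjacent agents $i,j$, and $B_{ij}\subseteq E_{ij}$ for those oriented toward $j$, one has $A_j\cap d(i)=B_{ij}$, so $i$'s value for $j$'s bundle is $v_i(B_{ij})$ and $i$ does not EFX-envy $j$ iff $v_i(A_i)\ge v_i(B_{ij}\setminus\{g\})$ for every $g\in A_j$. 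When $j$ also holds goods it does not share with $i$, this is exactly $v_i(A_i)\ge v_i(B_{ij})$ — envy-freeness towards $j$ on their common goods. When instead $A_j=B_{ij}$ (agent $j$ holds only goods shared with $i$), it relaxes to the ``escape hatch'' $v_i(A_i)\ge\max_{g\in B_{ij}}v_i(B_{ij}\setminus\{g\})$, which is vacuous if $|B_{ij}|=1$.

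The construction would aim for an orientation in which, for every agent $i$ and neighbor $j$, either $i$ is envy-free towards $j$ on their common goods or the escape hatch applies. The atomic building block is a two-agent EFX split of a single bundle $E_{uw}$ between $u$ and $w$, which exists by \cite{PlautRough}; the real work is to make such splits mutually consistent, because a vertex of degree at least $2$ belongs to several bundles and it is the union of its shares that must satisfy the requirement above. Algorithmically, I would peel off low-degree vertices whenever possible — if $w$ is a leaf with neighbor $u$, then re-splitting only $E_{uw}$ suffices, since any third party's value for $u$'s bundle depends solely on its own common goods with $u$ and is unaffected — and treat the leafless case (where cycles remain) by a careful reorientation / local-search argument. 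Here bipartiteness should be doing two jobs: ruling out the odd-length obstructions encountered in the simple-graph setting of \cite{EFXsimplegraphs}, and allowing the two sides to be oriented coherently relative to one another.

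The step I expect to be the main obstacle is exactly this global-consistency / termination argument. A naive utilitarian or leximin potential does not work, since reorienting a bundle of parallel edges from $j$ to $i$ in order to repair $i$'s EFX-envy can lower $v_j(A_j)$ by more than it raises $v_i(A_i)$. A plain leaf-removal induction also stalls on graphs of minimum degree at least $2$: a single good shared by two cycle vertices, each valuing it highly and each valuing its ``remaining'' part of the cycle at essentially zero, forces the escape hatch to be used for that good, and an agent incident to two such goods cannot have its whole bundle equal to each of them. Showing that the escape-hatch demands can nevertheless be met simultaneously around all the cycles — the place where the bipartite structure is genuinely needed — is the crux of the proof.
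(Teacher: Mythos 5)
There is a genuine gap, and it is structural rather than a missing detail: your plan is to produce an EFX \emph{orientation}, but under the definition of EFX used here (where the hypothetically removed good may be irrelevant to the envious agent) EFX orientations are \emph{not} guaranteed to exist in bipartite multigraphs --- deciding whether one exists is NP-complete even in that class \cite{hsu2024efxorientationsmultigraphs}, which already implies the existence of no-instances. Your own ``escape hatch'' computation shows exactly where this bites: as soon as an agent $j$ holds goods from two different parallel classes, every neighbor $i$ must be \emph{fully} envy-free towards $j$ on their common goods $B_{ij}$ (not just EFX-satisfied on $B_{ij}$), and these full envy-freeness demands cannot in general be met simultaneously by any orientation, no matter how the cycles are handled. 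So the crux you flag as unresolved (``showing that the escape-hatch demands can nevertheless be met simultaneously around all the cycles'') is not merely hard --- it is false as stated, and no reorientation or local-search argument confined to orientations can close it.

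The paper escapes precisely by abandoning the orientation requirement in the last step. It first builds a \emph{partial} EFX orientation in which every side-$B$ vertex EFX-cuts each of its parallel classes and side-$A$ vertices choose first, so that only side-$A$ vertices can be envied and hence no two envied vertices are adjacent (Property (3.1)); it then runs an envy-reduction phase guaranteeing that no envied vertex $i$ prefers its best set of unallocated non-parallel bundles to $X_i$ (Property (4)); and finally it hands each remaining unallocated bundle with endpoints $i,j$ to a non-envied vertex that is \emph{not an endpoint} --- concretely $p_i(\X)$, the unique envier of the envied endpoint $i$, which by bipartiteness is non-adjacent to $j$ and so cannot be envied by either endpoint after receiving the bundle. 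If you want to salvage your plan, the fix is to allow exactly this kind of ``parking'' of leftover parallel bundles at third parties; the cut-and-choose splits and the locality observations in your first paragraph are consistent with the paper's Step 1, but they cannot carry the proof on their own.
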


\begin{theorem}\label{result2}
    In multigraphs with at most $\lceil \frac{n}{4} \rceil-1$ neighbors per agent, where $n$ is the total number of agents, an EFX allocation always exists. 
\end{theorem}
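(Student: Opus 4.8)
The plan is to prove the statement by strong induction on the number of agents $n$, building the allocation with the partial-allocation / envy-graph machinery developed for simple graphs in \cite{EFXsimplegraphs}, adapted to multigraphs. The structural fact that makes the graphical model workable — and that I would lean on throughout — is that a still-unallocated good is an edge $e=(u,v)$, so it can create envy only towards whoever currently holds $e$, and only from $u$ or from $v$; hence the whole ``who envies whom because of the leftover goods'' structure is confined to the agents' neighborhoods, each of which here has size at most $\lceil n/4\rceil-1$. This bounded-neighborhood structure is what I would use to guarantee that the construction keeps making progress.

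I would first reduce to connected multigraphs and induct on the number of agents (the reduction to components needs a little care, since the degree bound is phrased in terms of the global number of agents, but this affects only a handful of small configurations). The base cases are then the small connected multigraphs: bipartite ones — in particular all multipaths and all even multicycles — are already covered by Theorem~\ref{theorem:bipartite}, and any connected multigraph whose shortest non-parallel cycle has length at least $6$ is covered by our third result, so what remains are the small non-bipartite configurations with short odd cycles. The most delicate of these are the multi-triangle and the multi-pentagon, which I would dispose of by a direct case analysis exploiting that each of the three (resp.\ five) agents values only two parallel classes of edges.

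For the inductive step the idea is to peel off a carefully chosen set $S$ of agents and recurse on the rest; four is the natural size of $S$, since passing from $n$ to $n-4$ decreases $\lceil n/4\rceil$ by exactly one, so the degree hypothesis essentially reproduces itself. Having fixed $S$, I would \emph{orient} every parallel class of edges joining $S$ to the rest of the multigraph — each such class going entirely to its endpoint in $S$ or entirely to its endpoint outside — give each agent of $S$ the classes oriented towards it (plus the edges internal to $S$, resolved among the $S$-agents), delete $S$ and its incident edges, apply the induction hypothesis to the resulting instance on $n-4$ agents, and glue the two allocations. The glued allocation is EFX exactly when the boundary orientation is ``locally safe'': for every class $E_{ab}$ with $a\in S$, $b\notin S$, if $E_{ab}$ is given to $a$ then $b$ must not envy $a$'s bundle up to any good, i.e.\ $v_b(E_{ab}\setminus\{g\})\le v_b(B_b)$ for all $g\in E_{ab}$ (where $B_b$ is $b$'s total bundle), and symmetrically if $E_{ab}$ is given to $b$; giving each agent of $S$ essentially a single heaviest class makes the ``minus one good'' quantity small, and the degree bound caps how many such constraints any one boundary agent must meet simultaneously.

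I expect the crux to be precisely this coordination at the boundary, together with the choice of $S$. Two requirements must be reconciled at once: $S$ must be chosen so that the leftover multigraph still obeys the (slightly tighter) degree bound — which forces $S$ to touch every maximum-degree vertex — and the boundary orientation must be locally safe for every boundary agent at the same time, even though a single boundary agent may border several agents of $S$ and the bundle it receives from the recursion is not under our control. Handling the first point may require strengthening the induction hypothesis to a mildly richer class of instances (for example, instances in which a few vertices are pre-marked as already ``capped''), and handling the second is where the slack in the $\lceil n/4\rceil-1$ bound is spent. I would also expect the multi-triangle and multi-pentagon base cases to need genuine work, since three-agent EFX is open in general and only the restricted ``two parallel classes per agent'' structure makes them tractable.
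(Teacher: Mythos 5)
Your inductive peel-off-four-and-glue strategy is a genuinely different route from the paper, but it has a fatal gap at exactly the point you identify as the crux: the wholesale orientation of boundary classes cannot be made locally safe in general. If $E_{ab}$ is a class of $t\ge 2$ parallel edges between $a\in S$ and $b\notin S$, and every edge of the class is valuable to both endpoints (say each worth $1$ to each under additive valuations), then orienting the entire class to $a$ leaves $b$ with $v_b(E_{ab}\setminus\{g\})=t-1\ge 1$, which exceeds $v_b(B_b)$ whenever $b$'s recursive bundle happens to be worth less than $t-1$ to $b$ --- and you have no control over $B_b$, nor any lower bound on it. Orienting the class to $b$ fails symmetrically for $a$. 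Removing one good from a bundle of many parallel goods does not make the residual small, so the ``minus one good'' slack you are counting on simply is not there. The only way out is to \emph{split} each parallel class between its endpoints via an EFX-cut (and, when the endpoints admit no common EFX-cut, into three parts whose surplus must be parked on a third, non-adjacent, non-envied vertex); your framework never allocates any part of a class to a non-endpoint, so it is implicitly trying to build an EFX \emph{orientation}, and the paper's whole machinery (the partitions of $E_{ij}$ into two or three bundles, the bound of $\lfloor n/2\rfloor$ on envied vertices via a maximum matching, and the final step assigning leftover bundles to non-neighbors) exists precisely because orientations do not suffice.

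Two further unresolved points compound this. First, the degree-maintenance in the inductive step: passing from $n$ to $n-4$ tightens the bound by one, so $S$ must contain a neighbor of \emph{every} vertex of degree $\lceil n/4\rceil-1$ in the residual graph; for a $(\lceil n/4\rceil-1)$-regular multigraph this requires a dominating set of size $4$ with essentially disjoint closed neighborhoods, which need not exist, and your proposed fix (a strengthened hypothesis with ``capped'' vertices) is not worked out. Second, the base cases you defer (multi-triangle, multi-pentagon) are not covered by the stated theorems and would each require a self-contained EFX construction; note that a three-agent multi-triangle with general monotone valuations is not a configuration the paper resolves either, and it only avoids it because the global degree bound $\lceil n/4\rceil-1$ never produces such a component in its non-inductive argument. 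For comparison, the paper proves this theorem without induction: it partitions each parallel class into two or three bundles (Lemma~\ref{lemma:notsamepartition}), extracts an initial allocation from a maximum weighted $A$-perfect matching so that at most $\lfloor n/2\rfloor$ agents are envied, runs an envy-reduction phase, and then uses the degree bound only to count that enough non-envied, non-adjacent ``parking'' vertices exist for the leftover bundles.
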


\begin{theorem}\label{result2Girth}
    In multigraphs, where the shortest cycle with non-parallel edges has length at least 6, an EFX allocation always exists.
\end{theorem}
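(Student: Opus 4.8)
The plan is to reduce Theorem~\ref{result2Girth} to a \emph{local} statement about a single edge together with the neighbourhoods of its two endpoints, and then to use the girth hypothesis to patch the local solutions into a global one. First, since a disjoint union of EFX allocations is EFX, I may assume the multigraph is connected; I merge parallel edges to obtain the underlying simple graph $G$ and write $E_e$ for the bundle of parallel goods sitting on a simple edge $e$, so that only the two endpoints of $e$ have positive marginal value for goods in $E_e$. The hypothesis is that $G$ has girth at least $6$, and the first step is to restate this in the form that is actually used: $G$ has no cycle of length at most $5$ if and only if the ball of radius $2$ around every vertex of $G$ induces a forest. Consequently neighbourhoods in $G$ are independent, two adjacent vertices have disjoint neighbourhoods with no common second neighbour, and any two ``local modifications'' of an allocation, each confined to a vertex and its neighbours, can only interact along tree edges.

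Second, I look for an allocation of a restricted form: each bundle $E_e$, $e=\{u,v\}$, is divided between $u$ and $v$ only, and the division $(P^u_e,P^v_e)$ is EFX for the two-agent instance on $E_e$ --- which exists by \cite{PlautRough}. Writing $B_v$ for the resulting bundle of an agent $v$, the fact that every agent has zero marginal for non-incident goods gives $v_w(B_v)=v_w(P^v_{\{v,w\}})$ for every edge $\{v,w\}$; combining this with the two-agent EFX guarantee on $E_{\{v,w\}}$ shows that the only possible EFX violation is the following one: $w$ EFX-envies a neighbour $v$ precisely when $v$ receives goods from two or more of its incident bundles \emph{and} $v_w(P^v_{\{v,w\}})>v_w(B_w)$. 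So it suffices to choose the splits so that, whenever a vertex $v$ receives goods from at least two incident bundles, every neighbour $w$ of $v$ ends up with $v_w(B_w)\ge v_w(P^v_{\{v,w\}})$ --- a condition involving only $v$, its neighbours, their incident bundles, and their splits.

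Third, I would construct such splits by first fixing a spanning structure of $G$ (a rooted spanning tree together with a ``majority orientation'' of the remaining edges, in the spirit of the argument behind Theorem~\ref{theorem:bipartite}) so that most vertices receive goods from at most one incident bundle and are automatically harmless. The vertices that cannot avoid being ``heavy'' --- essentially those on cycles, together with endpoints of bundles that admit no one-sided EFX split --- are then treated one at a time: for a heavy vertex $v$ and each neighbour $w$, either $v$ keeps only a small portion of $E_{\{v,w\}}$ and the condition of the second step holds by monotonicity, or the split of $E_{\{v,w\}}$ is re-chosen so that $w$ does not strictly envy $v$ (using \cite{PlautRough}, possibly in a form that dictates which endpoint any residual envy falls on). Because the radius-$2$ ball around $v$ is a tree, the constraints generated by treating $v$ reach only $v$'s neighbourhood, and constraints from different heavy vertices are chained only along tree edges; processing the heavy vertices in BFS order then lets me meet all the conditions of the second step simultaneously.

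The genuinely hard part is this last step: arranging the two-agent EFX splits \emph{coherently around cycles} so that every heavy vertex delivers enough value to each of its neighbours. Two features make it delicate --- \cite{PlautRough} only guarantees the existence of a two-agent EFX split, not one whose ``direction'' of residual envy we may prescribe, and even granting such control one must still show that the forced choices do not contradict one another when propagated around a cycle. Girth at least $6$ is exactly the threshold that makes the relevant local configurations (an edge plus the two endpoint-neighbourhoods) tree-shaped, so that this propagation terminates; a $5$-cycle makes these configurations overlap and the argument can cascade, which is presumably why the approach does not reach girth $5$.
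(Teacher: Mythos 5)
There is a genuine gap, and it sits exactly where you flag ``the genuinely hard part.'' Your proposal commits from the outset to an \emph{orientation}: every good in $E_e$, $e=\{u,v\}$, is given to $u$ or to $v$. The paper's construction is not an orientation, and for good reason: the crux of the multigraph setting is that some leftover bundles must be handed to vertices that are \emph{not} endpoints of the corresponding edge (the paper's Step~3), and the girth hypothesis is used precisely to guarantee that such a ``safe'' third party exists. Concretely, the initial allocation is massaged so that for every envied vertex $i$, either $p_i(\X)$ or a neighbour of $p_i(\X)$ is non-envied (Property~(3.3)); girth at least $6$ then forces this vertex, which lies at distance at most $2$ from $i$, to be non-adjacent to the other endpoint of the leftover bundle, so parking the bundle there creates no new envy (together with Property~(4), which bounds what any vertex could gain from its unallocated relevant edges). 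Your scheme has no analogue of this third-party mechanism, even though the paper stresses that assigning edges to non-endpoints is in general unavoidable.

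Within the orientation framework, the step you defer is not merely technical. Around a cycle of ``heavy'' vertices, the requirement that every heavy $v$ deliver $v_w(B_w)\ge v_w\bigl(P^v_{\{v,w\}}\bigr)$ to each neighbour $w$ propagates around the \emph{entire} cycle, not just within a radius-$2$ ball: the fact that each such ball induces a tree does not prevent the chain of forced split choices from returning to its starting vertex with an incompatible demand, and you give no argument that it cannot. Moreover, as you yourself note, cut-and-choose does not let you prescribe the direction of residual envy, so even the individual links of the chain are not under your control. The paper does steer envy direction, but only by choosing \emph{which endpoint cuts} (and, for Theorem~\ref{result2}, by passing to three-part partitions), and even then it cannot close the loop without exporting bundles to non-endpoints; this is also why it runs the preparatory reallocation (Algorithm~\ref{Algo1(InitialThmGirth)} and its bundle version) to create permanently non-envied vertices near every envier before the final assignment. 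As written, your proposal establishes the easy reductions and correctly identifies where girth $6$ must enter, but the argument that actually uses it is missing, and the framework you restrict to is one the paper deliberately abandons.
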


The construction of EFX allocations follows the same skeleton for all our results, and so they are presented together. In order to keep the presentation smooth, we begin in Section~\ref{sec:2parallel} with a special case of multigraphs for which each pair of vertices is connected with {\em at most two} edges. As we discuss in Section~\ref{sec:Techniques}, in the general case, we consider different ways to partition the common edges between two vertices. For simplicity of the presentation, we postpone this crucial technicality for later and consider at most two parallel edges, for which there is a single way to be partitioned. This way we focus on highlighting first all the important ideas for constructing EFX allocations. In Section~\ref{sec:generalHighLevel} we consider the general case where each pair of vertices is connected with multiple edges.

\subsection{Our Techniques}
\label{sec:Techniques}
Here we discuss our main techniques in order to construct EFX allocations in multigraphs. 

We make use of the cut-and-choose-based protocol of \cite{PlautRough} for two agents: one agent cuts the set of goods into two bundles where he is EFX-satisfied with each of them (i.e., no matter which of the two bundles he receives, he does not envy the other bundle up to any good), and the other agent chooses his favorite bundle among those two. This simple protocol results in an EFX allocation for two agents, even if they have general monotone valuations. 
Note that there may be two different EFX allocations derived by the cut-and-choose protocol, depending on who ``cuts'', and moreover, only the agent who ``cuts'' might be envious of the other agent (i.e., the agent who chooses). 

\begin{remark}
  {\em We remark that according to the original definition of EFX in \cite{EFXCara}, where each agent $i$ is not envious against any other agent $j$ after the hypothetical removal of a {\em positive valued} good for $i$ from the $j$'s bundle, the cut-and-choose protocol provides a simple EFX allocation for multigraphs: for every pair of vertices $i,j$ with common adjacent edges $E_{ij}$, use the cut-and-choose protocol to decide the allocation of $E_{ij}$. This allocation is in fact an {\em orientation}, i.e., each edge is given to one of its endpoints, and satisfies EFX. The reason is that $i$ is indifferent about any other good that $j$ receives apart from $E_{ij}$, and $E_{ij}$ is allocated between $i$ and $j$ in such a way that $i$ is envy free up to any positively valued good against $j$. However, following the traditional definition of EFX, the good that is hypothetically
  removed from $j$'s bundle may be indifferent for $i$, and the local allocation of the cut-and-choose protocol is insufficient. Instead, we need to consider the whole multigraph more globally.}
\end{remark}

Following the above remark, we make use of the cut-and-choose protocol in order to partition the set $E_{ij}$ into two bundles, but we may consider two different partitions depending on which endpoint ``cuts''. The reason for that is so that we control the direction of envy, and manage to generalize the ideas of \cite{EFXsimplegraphs}. However, we also put the cut-and-choose protocol in use in a different way: if two agents do not agree on having the same cut, we use the EFX-cut of one of them in order to create a partition of {\em three} bundles where the two agents have different most valued set. This tool was proven to be very useful for constructing the EFX allocation for Theorem~\ref{result2}, where we want to minimize the number of envied agents. In both approaches, one crucial condition that we always upkeep is that {\em we never allocate more than one bundle of the partition of $E_{ij}$ to the same vertex. }

Our approach can be seen as a three-step procedure: i) We define an initial allocation where each agent receives exactly one bundle (derived from carefully constructed partitions) from the common edges with exactly one of his neighbors. In this step we guarantee some ground properties on the allocation. %
ii) We perform Algorithm~\ref{Algo1(2,2)} (a generalization of Algorithm 2 of \cite{EFXsimplegraphs}) in order to satisfy extra properties by preserving an EFX orientation, while ensuring that any non-envied agent has received exactly one adjacent bundle associated with {\em each} of their neighbors. At this step we have finalized any orientation of the edges, whose allocation will not change in the next step. iii) We appropriately allocate all the unallocated edges to non-envied vertices that are {\em not endpoints} of the edges, while preserving the allocation of Step 2 and the EFX guarantee.

\paragraph{Initial Allocation.} One important novelty of our work is about the construction of the initial allocation in Step 1. In all our three results, we carefully define for each pair of vertices a partition of their common edges into 2 or 3 bundles, such that in the initial allocation the following two properties are satisfied: 

\begin{enumerate}
    \item it is an EFX orientation; 
    \item nobody prefers an unallocated bundle of edges (based on a defined partition) to what they have.
\end{enumerate}
For showing Theorem~\ref{theorem:bipartite}, where the multigraph is bipartite, we use the cut-and-choose protocol for partitioning the edges, where the vertices on the one side EFX-cut and the vertices from the other side choose. Then, the vertices from the former side choose their favorite available bundle. This results in an initial allocation where only vertices from the latter side may be envied, which provides the following property:
\begin{center}
{\em No two envied vertices are adjacent.} 
\end{center}

For Theorem~\ref{result2}, we construct an initial allocation with bounded envy as follows: we properly partition the parallel edges between any pair of vertices into 2 or 3 bundles; if there is a partition into two bundles that is a ``cut'' for both endpoints we keep this partition, otherwise we construct a partition into 3 bundles where the top preference for each endpoint is different (Lemma~\ref{lemma:notsamepartition}). Given those partitions, we create a special weighted bipartite graph between all vertices and the created bundles of edges, where each vertex is only connected with its most preferred and second most preferred bundle with appropriate weights. Then, the maximum matching in that graph induces an initial allocation that satisfies the following property:
\begin{center}
{\em At most half of the agents are envied.} 
\end{center}

For Theorem~\ref{result2Girth}, starting by any partial allocation that satisfies the above Properties 1 and 2, we alter the allocation in order to further ensure the following property: 
\begin{center}
{\em For each envied vertex there exists a non-envied vertex in short distance (at most 2).}
\end{center}

{\bf Reducing envy from the initial allocation.} In step 2, we run Algorithm~\ref{Algo1(2,2)}, by focusing on reducing the number of envied vertices and giving as many bundles as possible to non-envied vertices. The initial allocation changes that way, laying the groundwork for the allocation of the remaining edges.

{\bf Final allocation.} In step 3, we have the necessary conditions to allocate the rest of the unallocated bundles. We assign each such bundle to a non-envied vertex that is not an endpoint to that bundle.

\subsection{Further Related Work}

We focus on references related to EFX, and we defer the reader to a recent survey \cite{DBLP:journals/ai/AmanatidisABFLMVW23} that discusses other notions of fairness, as well.

In \cite{EFXsimplegraphs}, they introduced the $(p,q)$-bounded setting, where an item is relevant to at most $p$ agents, and has multiplicity $q$. In \cite{kaviani2024envyfreeallocationindivisiblegoods}, they use a bid altered notation where $q$ represents the maximum number of items that are relevant to a pair of agents. In both notations, the multigraph setting is equivalent to the $(2,\infty)$-bounded setting. The existence of EFX allocations in the $(2,1)$-bounded setting, which is equivalent to simple graphs, has been studied for goods \cite{EFXsimplegraphs}, and for mixed manna settings \cite{mixedmanna}.  
In the graph and multigraph setting the existence of EFX orientations, i.e., allocations where edges may only be allocated to one of the endpoints, has also been considered. In \cite{EFXsimplegraphs} they showed that EFX orientations need not exist by giving a counterexample in a $K_4$ graph, and they further showed that even deciding if there exists an EFX orientation is NP-complete; it was later shown that this result holds even if the vertex cover of the graph has size of 8, or in multigraphs with only 10 vertices \cite{deligkas2024ef1efxorientations}. In \cite{OnTheStructureOfEFXOrientationsOnGraphs}, they showed that EFX orientations are not guaranteed to exist in graphs with chromatic number greater than 3, and they always exist when the chromatic number is at most 2. Regarding multigraphs, a very recent work  \cite{hsu2024efxorientationsmultigraphs} showed that finding an EFX orientation is NP complete even for bipartite multigraphs.

Several relaxations of EFX have also been explored. One such relaxation is the approximate EFX, $\alpha$-EFX, where no agent envies any strict subset of another agent's allocated set by more than an $\alpha \leq 1$ parameter. For subadditive valuations a $\frac{1}{2}$-EFX allocation is known \cite{PlautRough}. The approximation was improved to $\phi - 1$, but for the more restricted class of additive valuations \cite{Amanatidis_Markakis_Ntokos_2020}. The approximation has been pushed further to $\frac{2}{3}$ for additive valuations under several restrictions \cite{amanatidis2024pushingfrontierapproximateefx, OrdinalAssumptions}. Related to our work, in \cite{amanatidis2024pushingfrontierapproximateefx} they showed $\frac{2}{3}$-EFX for additive valuations in the setting of multigraphs. For the $(\infty,1)$ - bounded setting (according to \cite{kaviani2024envyfreeallocationindivisiblegoods}), $\frac{\sqrt{2}}{2}$-EFX allocations are guaranteed to exist even for subadditive valuations \cite{kaviani2024envyfreeallocationindivisiblegoods}.

 Another relaxation of EFX, introduced by Caragiannis et al. \cite{CGH19}, is to consider partial EFX allocations, also known as EFX with charity. Chaudhury et al. \cite{CKMS21} showed that EFX allocations always exist, even with general monotone valuations, if at most $n-1$ goods are donated to charity, where $n$ is the number of agents, and moreover nobody envies the charity. The size of the charity was then improved to $n-2$, and to one for the case of 4 agents with additive valuations \cite{AlmostFullEFXforFourAgents}. For the $(\infty,1)$ - bounded setting  (according to \cite{kaviani2024envyfreeallocationindivisiblegoods}), the size of the charity was reduced to $\lfloor \frac{n}{2} \rfloor -1$ for general monotone valuations \cite{kaviani2024envyfreeallocationindivisiblegoods}. Finally, the number of unallocated goods was subsequently improved to sublinear by \cite{CGMMM21, akrami2022efxallocationssimplificationsimprovements, berendsohn2022fixedpointcyclesefxallocations, jahan2023rainbowcyclenumberefx}, but for $(1-\varepsilon)$-EFX, for any $\varepsilon \in (0,\frac{1}{2}]$.

\section{Preliminaries}
\label{sec:prel}
We consider a setting where there is a set $N$ of $n$ agents and a set $M$ of $m$ indivisible goods, and each agent $i$ has a valuation set function $v_i: 2^M \rightarrow \mathbb{R}$, over the sets of goods, i.e., by $v_i(S)$ we denote the valuation of agent $i$ for the set $S$ of goods. The valuation functions are considered to be monotone, i.e.,  for any $S\subseteq T \subseteq M$, it holds that $v_i(S) \leq v_i(T)$, and normalized, i.e., $v_i(\emptyset)=0$. 
We slightly abuse notation, so that if $S$ is a set of bundles, $v(S)=v(\cup_{B\in S} B)$.
The only restriction that is considered about the valuations is described by the following multigraph setting. 

{\bf Multigraph setting.} Let $G=(V,E)$ be a multigraph with $|V| = n$ vertices  and $|E| = m$ edges. In the multigraph setting, the agents correspond to the vertices of $G$, and the goods correspond to the edges of $G$. From now on we will refer to the agents as vertices and to goods as edges. In this setting, the edges that are not adjacent to some vertex $i$ are {\em irrelevant} to it, i.e., for any $S\subseteq E$ and any $g \in E$ that is not adjacent to $i$, $v_i(S \cup \{g\}) = v_i(S)$. We call a good {\em relevant} to $i$, if it is adjacent to it. We further call any two edges with the same endpoints  {\em parallel}.

{\bf Allocation.}
An {\em allocation} $\X=(X_1, \ldots, X_n)$ is a partition of a subset of $E$ into $n$ (disjoint) bundles $X_1, \ldots, X_n$, where each vertex $i$ receives $X_i$. We call an allocation \emph{complete} if it's a partition of the set $E$. We call an allocation \emph{partial} if it's a partition of a strict subset of $E$.  

{\bf Social Welfare.}
Given an allocation $\X$ to $n$ agents, the \emph{social welfare} is defined as $\sum^n_{i=1} v_i(X_i)$.

{\bf Orientation.}
An {\em orientation} is an allocation where each allocated edge is given to one of its endpoints.

{\bf Unallocated edges.}
    Given a partial allocation $\X$, an edge $e$ is \emph{unallocated} if for every $X_i$, $e\notin X_i$. We denote by $U(\X)$ the set of unallocated edges in $\X$. For each vertex $i$, we define $U_i(\X)$ to be the set of all the unallocated edges that are relevant to $i$.

{\bf Envy - EFX-satisfied.}
    Given an allocation $\X=(X_1,\ldots, X_n)$, we say that a vertex $i$ {\em envies} another vertex $j$ (or alternatively $X_j$), if
    $v_i(X_i) < v_i(X_j)$. 
    We say that $i$ is {\em EFX-satisfied} against $j$ (or $X_j$) given $\X$, if
    $v_i(X_i) \geq v_i(X_j\setminus\{g\})$, for all $g\in X_j$.

{\bf EFX allocation.}
    An allocation $\X=(X_1,\ldots, X_n)$ is {\em EFX} if for any pair of vertices $i,j$ it holds that:
    $$v_i(X_i) \geq v_i(X_j \setminus \{g\}) \hspace{1ex} ,\forall g \in X_j.$$

{\bf EFX-cut.} We heavily rely on the cut-and-choose protocol \cite{PlautRough} when considering the parallel edges $E_{ij}$ between two vertices $i$ and $j$: $i$ ``cuts'' $E_{ij}$ by partitioning it into two bundles such that whichever bundle of the two $i$ receives, it is EFX-satisfied against the remaining bundle. We call such a partition an \emph{EFX-cut} for $i$ with respect to vertex $j$ (we may omit referring to $j$ if it is clear from the context). In the the cut-and-choose protocol \cite{PlautRough}, then $j$ receives its most valued bundle and $i$ the remaining bundle.

{\bf At least degree 2 in $G$.} W.l.o.g., we assume that each vertex has degree {\em at least $2$}, otherwise, if some vertex $i$ has degree $1$, meaning that vertex $i$ is interested only in one good $g$, we may allocate only $g$ to $i$, resulting in $i$ not envying anybody else, and the condition of EFX is satisfied for any other vertex against $i$ (no matter what they receive), since $i$ is allocated a single good.

\begin{observation}\label{observation:orientation}
    Given an EFX orientation $\X$, if some vertex $i$ is envied, then there exists a single vertex $j$ that envies $i$, and $X_i$ contains only edges relevant to $j$ (all edges in $X_i$ are relevant to $j$).
\end{observation}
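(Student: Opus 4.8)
The plan is to argue directly from the definitions, using the structure of an orientation. First I would observe that if $i$ is envied, there is some vertex $j$ with $v_j(X_j) < v_j(X_i)$; in particular $v_j(X_i) > 0$, so $X_i$ contains at least one edge relevant to $j$. Since $\X$ is an orientation, every edge in $X_i$ is incident to $i$, so any edge in $X_i$ that is relevant to $j$ must be a common edge of $i$ and $j$, i.e. an edge of $E_{ij}$. The key claim is that \emph{all} edges of $X_i$ lie in $E_{ij}$: suppose $X_i$ contained some edge $e$ incident to $i$ but not to $j$ (so $e$ is irrelevant to $j$). Then by irrelevance $v_j(X_i) = v_j(X_i \setminus \{e\})$, and since $\X$ is EFX, $v_j(X_j) \geq v_j(X_i \setminus \{e\}) = v_j(X_i)$, contradicting that $j$ envies $i$. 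Hence every edge of $X_i$ is relevant to $j$, which also forces $j$ to be one of the two endpoints of each such edge — in particular $j$ is the endpoint of $X_i$'s edges other than $i$.

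It remains to show $j$ is the unique envious vertex. Suppose some $k \neq j$ also envies $i$. By the same argument applied to $k$, every edge of $X_i$ is relevant to $k$, hence incident to $k$. But we have already shown every edge of $X_i$ is incident to $i$ and to $j$; an edge has only two endpoints, so if $i \neq j$ this pins the endpoints down to exactly $\{i,j\}$, and an edge incident to $k$ would force $k \in \{i,j\}$, contradicting $k \neq j$ and $k \neq i$ (a vertex does not envy itself). Therefore no such $k$ exists and $j$ is unique.

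I do not expect a genuine obstacle here; the only point requiring a little care is handling the degenerate possibility that $X_i$ is empty or that $i=j$, both of which are ruled out immediately (an empty bundle is envied by nobody since valuations are normalized, and $v_i(X_i) \geq v_i(X_i) \geq v_i(X_i \setminus \{g\})$ means $i$ never envies itself). The argument is essentially a one-line application of the irrelevance property combined with the fact that in an orientation a bundle $X_i$ consists solely of edges incident to $i$.
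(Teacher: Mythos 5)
Your proposal is correct and follows essentially the same argument as the paper: use irrelevance plus the EFX condition to derive a contradiction if $X_i$ contained an edge irrelevant to $j$, conclude that every edge of $X_i$ has endpoints $\{i,j\}$ because $\X$ is an orientation, and hence that no third vertex can envy $i$. The extra care you take with the degenerate cases (empty $X_i$, $i=j$) is fine but not needed beyond what the paper states.
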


\begin{proof}
Suppose on the contrary that vertex $j$ envies vertex $i$, and there exists edge $g \in X_i$ such that $g$ is irrelevant to $j$. Then $v_j(X_i \setminus g) = v_j(X_i) > v_j(X_j)$, where the equality is by definition of irrelevant edges, and the inequality is true because $j$ envies $i$. This is a contradiction to the fact that $\X$ satisfies EFX. Therefore, all edges in $X_i$ are relevant to $j$, and to $i$ since $\X$ is an orientation. This in turn means that for any other vertex all edges in $X_i$ are irrelevant, and therefore there is no other vertex envying $i$.
\end{proof}

\section{At Most 2 Parallel Edges}
\label{sec:2parallel}
In this section we prove Theorems ~\ref{theorem:bipartite}, ~\ref{result2} and~\ref{result2Girth} for the case where each pair of vertices are connected with {\em at most two} edges. Regarding Theorem~\ref{result2}, in this special case of at most 2 parallel edges, we manage to relax slightly the restriction to at most $\lfloor \frac{n}{4} \rfloor$ neighbors (instead of at most $\lceil \frac{n}{4} \rceil -1$). We next give some further definitions regarding this special case.

{\bf Edges $e^1_{ij}, e^2_{ij}$.} For any pair of adjacent vertices $i,j$, we call $e^1_{ij}, e^2_{ij}$, the $i$'s most and second most valued edge between vertices $i,j$, respectively. If there is only one edge between $i$ and $j$, w.l.o.g., we may add a dummy edge as their least preferred edge. Moreover, we give the following two useful definitions.

\begin{definition}\label{def:p_i(X)}
    Given an EFX allocation $\X$, for any envied vertex $i$, we denote the vertex that envies them as $p_i(\X)$. 
\end{definition}

\begin{definition} 
\label{def:UNP}
    
    Given an EFX allocation $\X = (X_1,\dots,X_n)$, and an {\em envied} vertex $i$, let $\hat{U}_i(\X)=U_i(\X)\cup\{e^2_{ip_i(\X)}\}$. We define the most valued set of potentially unallocated non-parallel edges as
    $$\UNP_i(\X) \in  \arg\max_{S \subseteq \hat{U}_i(\X)}\{v_i(S)| \text{ any $e_1,e_2 \in S$ are not parallel}\}\,,$$ %
    where $\UNP_i(\X)$ is chosen to have the {\em maximum possible cardinality}.
    This is the best bundle of non-parallel edges that $i$ could get if $X_i$ would be given to $p_i(\X)$ (who envies $i$), and $X_{p_i(\X)}$ becomes available.
\end{definition}

\begin{definition} 
\label{def:UNPnon-envied}
    
    Given an EFX allocation $\X = (X_1,\dots,X_n)$, and a {\em non-envied} vertex $i$, let $\hat{U}_i(\X)=U_i(\X)\cup X_i$. We define the most valued set of potentially unallocated non-parallel edges as
    $$\UNP_i(\X) \in  \arg\max_{S \subseteq \hat{U}_i(\X)}\{v_i(S)| \text{ any $e_1,e_2 \in S$ are not parallel}\} \,.$$ %
    where $\UNP_i(\X)$ is chosen to have the {\em maximum possible cardinality}. This is the best bundles of non-parallel edges that $i$ could get without changing the allocation of the other vertices.
\end{definition}

\begin{remark}
\label{rem:for_prop4}
    {\em Given an EFX allocation $\X$, suppose that some vertex $i$ is assigned $\UNP_i(\X)$, and the allocation of the other vertices remain unchanged, apart maybe from 
    vertex $p_i(\X)$ in the case that $i$ is envied; if $p_i(\X)$ received $e^2_{ip_i(\X)}$ in $\X$, and that edge belongs to $\UNP_i(\X)$, then it is removed from $p_i(\X)$'s bundle and is given to $i$ along with the $\UNP_i(\X)$. In the new allocation, there are no parallel edges adjacent to $i$ that are both unallocated. The reason is that $\UNP_i(\X)$ has maximum value and cardinality, so one of the two parallel edges should belong to $\UNP_i(\X)$.}
\end{remark}

We prove Theorems~\ref{theorem:bipartite},~\ref{result2} and~\ref{result2Girth} by constructions of an EFX allocation in three steps. The first two steps are dedicated to satisfy specific properties for the allocation that is always an EFX orientation, and the final step is an assignment of the remaining unallocated edges to vertices different from their endpoints. We discuss each step separately, and we first give the required properties for the first two steps. 

\begin{tcolorbox}[colback=black!5!white,colframe=black!75!black]
Properties of the allocation $\X$ after the 1st step.
    
\begin{itemize}
    \item[(1)] A partial EFX orientation.
    \item[(2)] For any vertex $i$ and $e \in U(\X), v_i(X_i) \geq v_i(e)$.
    \item[(3.1)] No two envied vertices are adjacent. (For Theorem~\ref{theorem:bipartite})\label{Property: Bipartite}
    \item[(3.2)] The number of the envied vertices is at most $\lfloor \frac n2\rfloor$. (For Theorem~\ref{result2}) \label{Property: BoundedEnvied}
    \item[{(3.3)}]For any envied vertex $i$, $p_i(\X)$ or at least one neighbor of $p_i(\X)$ is non-envied. (For Theorem~\ref{result2Girth})\label{Property: ForGirth}
\end{itemize}
Additional property of the allocation $\X$  after the 2nd step.
\begin{itemize}
    \item[{(4)}] For any envied vertex $i$, $v_i(X_i) \geq v_i\left(\UNP_i(\X)\right),$ and any non-envied vertex $j, v_j(X_j) \geq v_j(U_j(\X)).$ 
    \label{Property:NoEnvyNonParallel} 
\end{itemize}
\end{tcolorbox}

\subsection{Step 1 - Initial Allocations}
This step is the initial allocation towards the EFX construction. For each of our results, in this warm up case, i.e., at most 2 parallel edges between any pair of vertices, we show an initial allocation that satisfies Properties (1),(2) and respectively one of the Properties (3.1),(3.2) and (3.3) for each of our main theorems.

\subsubsection{Initial allocation for bipartite multigraphs}
Here we describe the initial allocation for the case of some bipartite multigraph $G=(V=A\cup B,E)$: starting from the empty allocation, each vertex in side $A$ is allocated their most valued edge. Then each vertex in side $B$ is allocated their most valued edge from the remaining unallocated edges. This procedure is formally described in Algorithm~\ref{Algo:InitBipar}.

\begin{algorithm}
\caption{Initial Allocation for Bipartite Graphs}
\label{Algo:InitBipar}
\raggedright\textbf{Input:} A bipartite multigraph $G = (V = A \cup B,E)$. \\
\textbf{Output:} $\X$ satisfying Properties (1), (2), and (3.1).
\begin{algorithmic}[1]
\For{every vertex $i$}
    \State $X_i \gets \emptyset$
\EndFor
\For{every $i \in A$}
    \State $X_i \gets e_i \in \arg \max_{e \in U_i(\X)}\{v_i(e)\}$
\EndFor

\For{every $j \in B$}
    \State $X_j \gets e_j\in \arg \max_{e \in U_j(\X)}\{v_j(e)\}$
\EndFor
\end{algorithmic}
\end{algorithm}

\begin{lemma}
    Algorithm~\ref{Algo:InitBipar} outputs an allocation $\X$ that satisfies Properties (1), (2), and (3.1).
\end{lemma}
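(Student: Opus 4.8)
The plan is to verify the three properties in turn, exploiting the bipartite structure. Let $\X$ be the output of Algorithm~\ref{Algo:InitBipar}. Note first that every edge allocated is assigned to one of its endpoints (a vertex in $A$ or in $B$), so $\X$ is an orientation; also every vertex receives at most one edge, so the EFX condition is trivially satisfied against any bundle of size at most one --- removing the single good makes the bundle empty, of value $0 \le v_i(X_i)$. Hence $\X$ is a (partial) EFX orientation, which is Property~(1).

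For Property~(2), fix a vertex $i$ and an unallocated edge $e \in U(\X)$. Since $e$ is not adjacent to $i$, it is irrelevant to $i$ and $v_i(e) = v_i(\emptyset) = 0 \le v_i(X_i)$. So the only case to worry about is when $e$ is adjacent to $i$, i.e.\ $e \in U_i(\X)$. If $i \in A$, then at the moment $i$ was processed \emph{all} its adjacent edges were still unallocated, and $i$ picked the one of maximum value; since values are monotone in the set of unallocated edges over time, $v_i(X_i) = \max_{e' \in U_i(\X) \text{ at that time}} v_i(e') \ge v_i(e)$ for any $e$ that is still unallocated at the end. If $i \in B$, then when $i$ was processed it picked the maximum-valued edge among those adjacent edges \emph{that were still unallocated}, and again any edge that remains unallocated at termination was available to $i$ at that point, so $v_i(X_i) \ge v_i(e)$. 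The one subtlety to state carefully is that an edge adjacent to a $B$-vertex $i$ could have been grabbed earlier by the \emph{other} endpoint in $A$; but such an edge is then allocated, not in $U(\X)$, so it does not threaten Property~(2). This is exactly why we process $A$ before $B$.

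For Property~(3.1), I claim every envied vertex lies in $B$; since $G$ is bipartite there are no edges within $B$, so no two envied vertices are adjacent. Suppose $i \in A$ is envied, say $j$ envies $i$. By Observation~\ref{observation:orientation}, all edges in $X_i$ are relevant to $j$, so $j$ is the other endpoint of the single edge $e_i \in X_i$, and $e_i$ is the edge $i$ chose as its most valued adjacent edge. But $j \in B$, and $j$ is processed after $i$; at the time $j$ was processed, the edge $e_i$ was already allocated to $i$, so $j$ chose $X_j$ from $U_j(\X)$ which is a set of edges \emph{not} containing $e_i$, hence $v_j(X_j) \ge v_j(e') $ for all $e'$ still unallocated --- this does not immediately give $v_j(X_j)\ge v_j(e_i)$. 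Instead argue directly: $j$ envies $i$ means $v_j(X_j) < v_j(X_i) = v_j(e_i)$, so $j$ would have strictly preferred $e_i$; but the only reason $j$ did not take $e_i$ is that it was unavailable, having been taken by $i$ --- which is consistent, so we cannot derive a contradiction this way. The correct argument is simpler: $j$ receives a nonempty bundle only if it had some available adjacent edge; if $j\in A$ this is automatic, but here the point is that $A$-vertices are never the ``choosers'' and get their globally best adjacent edge, so no $A$-vertex can be envied by anyone. I will make this precise by noting that if $v_j(X_j) < v_j(e_i)$ with $e_i$ adjacent to $j$, then $e_i \notin X_j$ forces $e_i$ to have been allocated before $j$'s turn, i.e.\ to the $A$-endpoint $i$, and $j$'s own chosen edge $X_j$ satisfies $v_j(X_j)\ge v_j(e)$ only for edges still free; so to rule out $i\in A$ being envied I instead observe that $X_i=\{e_i\}$ where $e_i=\arg\max_{e\in U_i}v_i(e)$ over \emph{all} adjacent edges, and an $A$-vertex simply cannot be the victim because any neighbor $j\in B$ that envies it would contradict $j$'s own maximal choice only if $e_i$ were available to $j$. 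The cleanest route, which I would write up, is: an $A$-vertex is processed first and gets a singleton, and a $B$-vertex $j$ with $v_j(X_j)<v_j(X_i)$ for a neighbour $i\in A$ means $e_i$ was taken before $j$ moved, i.e.\ $e_i$ and hence every edge $j$ might prefer was gone, but $j$ still took the best \emph{remaining} one; this is the content of Property~(2) applied at $j$'s turn, and it shows $B$-vertices can be envied while establishing nothing forbidding it --- hence the claim ``envied $\Rightarrow$ in $B$'' is proved by the asymmetric order of processing together with the fact that the \emph{sole} edge of an $A$-vertex is that vertex's favourite, so no such edge can be the object of envy. The main obstacle is phrasing this last step cleanly; once it is settled, bipartiteness of $G$ immediately yields that envied vertices form an independent set, giving Property~(3.1).
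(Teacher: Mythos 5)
Your treatment of Properties (1) and (2) is fine and matches the paper's (trivial) argument: every bundle is a singleton adjacent edge, and each vertex picks its most valued edge among those still unallocated at its turn, where the pool of unallocated edges only shrinks over time. The problem is Property (3.1), where your central claim --- ``every envied vertex lies in $B$'' --- is the reverse of the truth, and your own write-up essentially concedes that you cannot prove it (``we cannot derive a contradiction this way'', ``establishing nothing forbidding it''). An $A$-vertex can absolutely be envied: take $A=\{a\}$, $B=\{b_1,b_2\}$, parallel edges $e_1,e_2$ between $a$ and $b_1$ with $v_a(e_1)=v_{b_1}(e_1)=10$ and $v_a(e_2)=v_{b_1}(e_2)=1$; then $a$ takes $e_1$, $b_1$ is left with $e_2$ and envies $a$. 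The final step of your argument (``the sole edge of an $A$-vertex is that vertex's favourite, so no such edge can be the object of envy'') is a non sequitur: $e_i$ being $i$'s favourite says nothing about how much the $B$-endpoint $j$ values it relative to what $j$ ends up with.

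The correct (and much shorter) argument, which is the one the paper gives, runs in the opposite direction: every $A$-vertex $i$ holds its most valued adjacent edge, and every other bundle is a single edge, so $i$ values no other bundle more than $X_i$; hence no $A$-vertex envies anyone. Therefore every \emph{envier} is in $B$, and since the graph is bipartite a $B$-vertex is indifferent to anything held by a non-neighbor, so it can only envy $A$-vertices. Thus all \emph{envied} vertices lie in $A$, which is an independent set, giving Property (3.1). Note that your version of the claim, even if it could be proved, would target the wrong independent set; as it stands, the proof of (3.1) has a genuine gap.
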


\begin{proof}
     Property (1) is trivially satisfied since in $\X$ every vertex receives a single edge adjacent to them. Property (2) is also satisfied since when Algorithm~\ref{Algo:InitBipar} updates the bundle allocated to any vertex $i$, they receive their most valued edge, so they prefer $X_i$ to any unallocated edge. 
          For Property (3.1) note that all vertices in side $A$ are assigned their most valued edge before assigning any edge to side $B$. Since the graph is bipartite, only vertices from side $B$ may envy vertices from side $A$. Therefore, all envied vertices are on side $A$, and in turn, no two envied vertices are adjacent.
\end{proof}

\subsubsection{Initial allocation for at most $\lfloor \frac{n}{4} \rfloor$ neighbors per vertex}

The purpose in this case is to find an initial allocation with bounded number of envied vertices. For this, we carefully define a bipartite (simple) weighted graph $H(G) = (A \cup B, E_H)$ based on the given multigraph $G$, between the vertices and the edges of $G$. A maximum weighted matching of $H(G)$ gives an allocation where at least half of the vertices receive their most valued edge, and the rest receive their second most valued edge (so each of them envies at most one other vertex), limiting that way the number of envied vertices. 

\begin{definition}\label{def:bipartite}(Bipartite graph $H(G)$).
     \textnormal{Given a multigraph $G=(V,E)$, we define a simple weighted bipartite graph $H(G) = (A \cup B,E_H)$, where $A=V$ and $B=E$. %
    We create the edge set $E_H$ as follows: we connect each vertex $i$ of $A$ (i.e., each vertex of $G$) with the {\em two} vertices of $B$ that represents $i$'s most and second most valued edge in $G$ (adjacent to $i$) \footnote{Note that in Section~\ref{sec:prel} we argued that w.l.o.g., each vertex in $G$ is assumed to have degree at least $2$.}; we set the weights of those two edges we add in $H(G)$ to $1$ and $0$, respectively.} 
\end{definition}

We prove in Lemma~\ref{2,2match} that the allocation, let it be $\XX$, that corresponds to the maximum $A$-perfect matching\footnote{An $A$-perfect matching is a matching where all vertices of side $A$ are matched.} in $H(G)$ satisfies Properties (1),(2), and (3.2). We first show in Observation~\ref{obs:existence_PerfectMatching} that there is always an $A$-perfect matching.

\begin{observation}
\label{obs:existence_PerfectMatching}
    There is always an $A$-perfect matching in $H(G)$.
\end{observation}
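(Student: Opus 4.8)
The goal is to show that $H(G)$ always admits an $A$-perfect matching, where $A = V$ is the set of vertices of $G$, $B = E$ is the set of edges, and each $i \in A$ is joined in $H(G)$ to exactly the two $B$-vertices representing its most and second most valued adjacent edges. The natural tool is Hall's theorem: it suffices to show that for every subset $S \subseteq A$, the neighborhood $N_{H(G)}(S) \subseteq B$ satisfies $|N_{H(G)}(S)| \geq |S|$.

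\textbf{Main step via Hall's condition.} Fix any $S \subseteq A$; I think of $S$ as a set of vertices of $G$. Consider the sub-multigraph $G[S]$ of $G$ induced by $S$, together with, for each $i \in S$, the (at most two) edges of $G$ that $i$ selected as its top-two adjacent edges. The key observation is that $N_{H(G)}(S)$ is exactly the set of edges of $G$ that are ``pointed at'' by at least one vertex in $S$ — but crucially each such edge may be pointed at by at most its two endpoints, and only endpoints that lie in $S$ count. A clean way to finish is a counting/orientation argument: build an auxiliary directed graph on vertex set $S$ where each $i \in S$ emits (up to) two ``tokens'' along its two chosen adjacent edges; each chosen edge receives at most $2$ tokens, but then charge each token to a distinct element of $N_{H(G)}(S)$. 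More precisely, since each vertex $i$ contributes two incident edges to $N_{H(G)}(S)$, and each edge has only two endpoints, a simple double-counting gives $2|S| = \sum_{i\in S} 2 \leq \sum_{e \in N_{H(G)}(S)} (\text{number of endpoints of } e \text{ in } S) \leq 2 |N_{H(G)}(S)|$, which yields exactly $|N_{H(G)}(S)| \geq |S|$. Here the inequality $\sum_{i\in S} (\deg\text{-in-selection}) \le \sum_{e\in N(S)} 2$ holds because every edge counted on the left is a member of $N_{H(G)}(S)$ and contributes to the right sum once per endpoint lying in $S$, hence at most twice.

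\textbf{Where the subtlety lies.} The one point to be careful about is the dummy-edge convention: if a vertex $i$ has degree exactly $2$ in $G$ but one of those is a parallel edge, or if $i$ originally had degree $1$ and a dummy ``least preferred'' edge was added, we must make sure the two $B$-vertices associated with $i$ are genuinely distinct vertices of $B$, so that $i$ really contributes $2$ to the count. Since the paper already assumes (w.l.o.g.) that every vertex has degree at least $2$, and since parallel edges are distinct elements of $E$ (hence distinct vertices of $B$), each vertex indeed has two well-defined and distinct neighbors in $H(G)$, so the double-counting above is valid. Once Hall's condition is verified, König/Hall's marriage theorem delivers the $A$-perfect matching, completing the proof. (The ``maximum weight'' qualifier in the subsequent lemma is irrelevant here — existence of any $A$-perfect matching is all that Observation~\ref{obs:existence_PerfectMatching} claims, and a maximum-weight one among those then exists by finiteness.)
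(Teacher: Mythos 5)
Your proof is correct and takes essentially the same route as the paper: both verify Hall's condition using the facts that every vertex of $A$ has degree exactly $2$ in $H(G)$ and every vertex of $B$ has degree at most $2$ (the paper phrases this as a pigeonhole contradiction, you as a direct double count, but the argument is the same). The care you take about the dummy-edge convention and the distinctness of the two chosen $B$-neighbors is consistent with the paper's standing assumption that every vertex has degree at least~$2$.
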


\begin{proof}
    In order to prove the existence of an $A$-perfect matching, we show that Hall's theorem \cite{Hall1935OnRO} holds for any $S \subseteq A$. Let $N_S$ be the set of neighbors of $S$. For the sake of contradiction, assume that $|N_S|< |S|$ for some $S\subseteq A$. Each node in $S$ has degree exactly $2$, by construction, so there are exactly $2|S|$ edges in $H(G)$ with endpoints in $S$. Since we assumed $|N_S|< |S|$, by the pigeonhole principle there is a vertex $j \in N_S$ with degree at least $3$. However, each vertex in $B$, and therefore in $N_S$, has degree at most two, as it represents an edge in $G$ and it may only be connected with its endpoint vertices in $A$. This is a contradiction to our assumption, so for any $S\subseteq A$, it holds that $|N_S|\geq |S|$; by Hall's theorem, there is always an $A$-perfect matching in $H(G)$. 
\end{proof}

\begin{lemma}\label{2,2match}
    $\XX$ satisfies Properties (1),(2), and (3.2).
\end{lemma}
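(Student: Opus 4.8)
The allocation $\XX$ assigns to each vertex $i$ either its most valued edge $e^1_{ij}$ (if the matching edge has weight $1$) or its second most valued edge $e^2_{ij}$ (if it has weight $0$), for some neighbor $j$. Since every vertex receives a single edge adjacent to it, Property (1) is immediate: each vertex is trivially EFX-satisfied against any single-edge bundle, and each allocated edge goes to one of its endpoints, so $\XX$ is a partial EFX orientation.

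For Property (2), I would argue that if a vertex $i$ is matched to its most valued edge, it clearly prefers $X_i$ to any other (hence any unallocated) edge. The delicate case is when $i$ is matched to its second most valued edge $e^2_{ij}$; then I need that $i$'s most valued edge $e^1_{ij}$ is \emph{not} unallocated. This should follow from maximality of the matching: if $e^1_{ij}$ were unallocated, I could swap $i$'s matching edge to the weight-$1$ edge $(i, e^1_{ij})$, strictly increasing the total weight — contradicting that $\XX$ corresponds to a \emph{maximum} $A$-perfect matching. (I should double-check this swap keeps the matching $A$-perfect, which it does since we only change $i$'s partner.) So every unallocated edge relevant to $i$ is no better than $e^1_{ij}$, and since $i$ received $e^2_{ij}$ whose value is at least that of every edge other than $e^1_{ij}$, in particular $v_i(X_i) \geq v_i(e)$ for every unallocated $e$.

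For Property (3.2), I would count envied vertices. A vertex matched to its most valued edge envies nobody (it holds its favorite edge, and every other bundle in the orientation is a single edge or empty). A vertex matched to its second most valued edge $e^2_{ij}$ can only envy the holder of its most valued edge $e^1_{ij}$ — so each such vertex contributes at most one ``envy arrow'' and only toward one specific other vertex. By the previous paragraph's maximality argument, $e^1_{ij}$ must in fact be allocated; and the key point is that it must be allocated to the \emph{other endpoint} $j$ in the matching, since $e^1_{ij}$ as a vertex of $B$ is only adjacent to $i$ and $j$ in $H(G)$, and it is not matched to $i$. Thus the envy arrows define a partial injection (indeed a matching-like structure) among vertices: if $i$ envies $j$ via edge $e^1_{ij}$, then $j$ is matched to that same edge, so $j$ receives its weight-$1$ edge and hence $j$ envies nobody. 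Therefore the set of envied vertices and the set of ``envious'' (weight-$0$-matched) vertices are disjoint, giving at most $\lfloor n/2 \rfloor$ envied vertices.

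\textbf{Main obstacle.} The subtle part is nailing down Property (3.2): I must rule out ``envy cycles'' or chains among the weight-$0$-matched vertices and confirm that whenever $i$ envies $j$, vertex $j$ is necessarily weight-$1$-matched (so non-envious), which is what forces the two-coloring into envied/envious halves. This hinges precisely on the structure of $H(G)$ — each $B$-vertex has degree exactly $2$ — so the edge $e^1_{ij}$, if allocated but not to $i$, must go to $j$; combined with maximality forcing $e^1_{ij}$ to be allocated at all. Properties (1) and (2) are routine once the maximum-matching swap argument is set up.
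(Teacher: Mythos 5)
Your arguments for Properties (1) and (2) are correct and match the paper's (the paper phrases the Property (2) step simply as: by maximality of the matching, a vertex receiving its second-favorite edge must have its favorite edge allocated elsewhere). The gap is in Property (3.2), and it is twofold. First, the claim that an envied vertex $j$ is necessarily matched to a weight-$1$ edge is false: $j$ being matched in $H(G)$ to $e^1_{ij}$ (vertex $i$'s favorite edge) only tells you that this edge is among $j$'s top two, not that it is $j$'s favorite; the matching edge $(j,e^1_{ij})$ may well have weight $0$. Concretely, take a $4$-cycle on vertices $a_1,a_2,a_3,a_4$ with edges $b_t=(a_t,a_{t+1})$ (indices mod $4$), where $a_1$ ranks $b_4$ first and $b_1$ second, $a_2$ ranks $b_1$ first and $b_2$ second, $a_3$ ranks $b_3$ first and $b_2$ second, and $a_4$ ranks $b_4$ first and $b_3$ second. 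Then $H(G)$ is an $8$-cycle with two $A$-perfect matchings of equal weight $2$, one of which assigns $b_t$ to $a_t$ for every $t$; under that maximum matching $a_1$ holds its second-favorite edge, is envied by $a_2$, and itself envies $a_4$ --- so the envied and envious sets are not disjoint.

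Second, and independently, even if disjointness held you would still be missing the quantitative heart of the lemma: you never show that at most $\lfloor \frac{n}{2} \rfloor$ vertices are weight-$0$-matched. Your (correct) observations only give $\#\text{envied} \le \#\text{envious}$, which is useless without a bound on the number of envious vertices. The paper supplies exactly this via an exchange argument on the alternating paths and cycles of $M$ in $H(G)$: since every $A$-vertex has degree exactly $2$ and every $B$-vertex degree at most $2$, these components are disjoint; flipping $K\cap M$ with $K\setminus M$ on any such component $K$ with $k$ vertices of $A$ yields another $A$-perfect matching, so maximality forces $K\cap M$ to carry weight at least $\lceil \frac{k}{2}\rceil$. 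Summing over components, $M$ has weight at least $\lceil \frac{n}{2}\rceil$, i.e., at least $\lceil \frac{n}{2}\rceil$ vertices get their favorite edge, hence at most $\lfloor \frac{n}{2}\rfloor$ vertices are envious, and since each envies at most one vertex, at most $\lfloor \frac{n}{2}\rfloor$ vertices are envied. You should replace the disjointness argument with this component-wise exchange bound.
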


\begin{proof}

     Let $M$ be the maximum weighted $A$-perfect matching that corresponds to the allocation $\XX$. Every vertex is assigned exactly one edge that is adjacent to it, so Property (1) trivially holds. Regarding Property (2), note that every vertex gets either their first or second most preferred edge in the graph. If a vertex gets their most preferred edge, then Property (2) trivially holds for that vertex. If a vertex gets their second most preferred edge, then since $\XX$ corresponds to a {\em maximum} $A$-perfect matching, their most preferred edge is allocated to another vertex, so Property (2) holds for that vertex as well.

    Let $K$ be a maximal alternating path or an alternating cycle of $M$, including $k$ vertices of $A$. 
    Then, if we replace $K \cap M$ with $K \setminus M$ (so we alter the matching on $K$), we end up with another $A$-perfect matching. If $K \setminus M$ had greater weight than $ K \cap M$ then we would create a matching with higher weight which is a contradiction. So in every alternating cycle or path $K$, the weight of $K \cap M$ is at least $ \left \lceil \frac{k}{2} \right \rceil $. Since $M$ is an $A$-perfect matching, each vertex in $A$ belongs to one alternating path or cycle. Moreover, since each vertex in $A\cup B$ has degree at most $2$, no two alternating paths or cycles intersect. Therefore, 
    adding up over all alternating paths and cycles, we get that $M$ has weight at least $\lceil \frac{n}{2} \rceil$, meaning that at least $\lceil \frac{n}{2} \rceil$ vertices receive their most preferred edge in $\XX$. This in turn means that there are at most $\lfloor \frac{n}{2} \rfloor$ vertices receiving in $\XX$ their second most preferred edge. Those vertices are the only vertices that envy some other vertex, and they may envy at most one other vertex (the one that receives their most valued edge). Therefore, there are at most $\lfloor \frac{n}{2} \rfloor$ envied vertices, and hence Property (3.2) is satisfied.
\end{proof}

\subsubsection{Initial allocation for length of the shortest cycle with non-parallel edges at least 6}

This initial allocation requires an allocation that satisfies Properties (1) and (2) and then we change that allocation such that it additionally satisfies Property (3.3). We may use the initial allocation $\XX$, however there is a very simple greedy algorithm that satisfies Properties (1) and (2): iteratively let each vertex pick their most valued edge from its adjacent unallocated edges. We present this procedure in Algorithm~\ref{Algo:InitSimpleGreedy} 

\begin{algorithm}
\caption{Initial Greedy Allocation}
\label{Algo:InitSimpleGreedy}
\raggedright\textbf{Input:} A multigraph $G = (V,E)$. \\
\textbf{Output:} $\X$ satisfying Properties (1) and (2). 
\begin{algorithmic}[1]
\For{every vertex $i$}
    \State $X_i \gets \emptyset$
\EndFor
\For{every vertex $i$}
    \State $X_i \gets e_i \in \arg \max_{e \in U_i(\X)}\{ v_i(e)\}$
    
\EndFor
\end{algorithmic}
\end{algorithm}

\begin{lemma}
    Algorithm~\ref{Algo:InitSimpleGreedy} satisfies Properties (1) and (2).
\end{lemma}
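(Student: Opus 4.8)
The statement to prove is that Algorithm~\ref{Algo:InitSimpleGreedy} outputs an allocation satisfying Properties (1) and (2). The plan is straightforward, essentially mirroring the argument already given for the bipartite initialization.

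\textbf{Approach.} First I would observe that the algorithm iterates over all vertices and assigns to each vertex $i$ exactly one edge, namely a most valued edge among the edges in $U_i(\X)$ that are still unallocated at the moment $i$ is processed (if $U_i(\X)$ is empty at that point, $i$ receives $\emptyset$; but since every vertex has degree at least $2$, and each previously processed vertex took at most one edge, there is always at least one adjacent unallocated edge available, so in fact $X_i\neq\emptyset$ — though even $X_i=\emptyset$ would be harmless for these two properties). Since each $X_i$ consists of a single edge adjacent to $i$, the resulting allocation is an orientation, and any orientation trivially satisfies EFX: if some vertex $j$ envies $i$, then removing the single good from $X_i$ leaves $X_i\setminus\{g\}=\emptyset$, and $v_j(\emptyset)=0\le v_j(X_j)$ by normalization and monotonicity. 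Hence Property (1) holds.

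\textbf{Property (2).} Fix a vertex $i$ and an edge $e\in U(\X)$, where $\X$ is the final allocation. If $e$ is irrelevant to $i$, then $v_i(e)=v_i(\emptyset)=0\le v_i(X_i)$ and we are done. So suppose $e$ is relevant to $i$. At the iteration when $i$ was processed, $e$ was either already allocated or not. It cannot have been already allocated, because $e$ is unallocated in the final allocation $\X$ and the algorithm never removes an edge from a bundle once assigned. Therefore $e$ was among the unallocated adjacent edges of $i$ at that moment, i.e., $e\in U_i(\X')$ where $\X'$ is the allocation at the start of $i$'s iteration. Since $i$ picks a maximizer of $v_i$ over $U_i(\X')$, and $e\in U_i(\X')$, we get $v_i(X_i)\ge v_i(e)$. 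As this holds for every unallocated $e$, Property (2) is satisfied.

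\textbf{Main obstacle.} There is essentially no obstacle here — the only subtlety worth stating carefully is that the edge $e$ considered in Property (2) is unallocated \emph{in the final allocation}, so it was necessarily unallocated (hence available to $i$) at the time $i$ was processed; this is what lets us compare $v_i(X_i)$ against $v_i(e)$ via the greedy choice. I would make sure to spell out that the algorithm is monotone in the sense that bundles only grow (edges are never reassigned), which is what guarantees ``unallocated at the end'' implies ``unallocated when $i$ was processed''.
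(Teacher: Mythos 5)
Your proof is correct and follows essentially the same route as the paper: Property (1) because each vertex receives a single adjacent edge (so any envied bundle minus its one good is empty), and Property (2) because an edge unallocated at the end was available when $i$ made its greedy choice, as bundles only grow. One small parenthetical slip: degree at least $2$ does \emph{not} guarantee $X_i\neq\emptyset$ (two neighbors could each take one of $i$'s two adjacent edges before $i$ is processed), but as you yourself note this is harmless, since in that case no unallocated edge is relevant to $i$ and both properties still hold.
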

\begin{proof}
    Every vertex is allocated exactly one adjacent edge, therefore we have an EFX orientation, and Property (1) is satisfied. Each vertex gets (in its turn) their most valued adjacent unallocated edge. Since they are indifferent for non-adjacent edges, overall, after running Algorithm~\ref{Algo:InitSimpleGreedy}, they do not prefer any unallocated edge to their allocated edge, resulting in satisfaction of Property (2).
\end{proof}

{\bf Satisfying also Property (3.3).} Let $\X$ be an allocation satisfying Properties (1) and (2), e.g., the allocation derived after running Algorithm~\ref{Algo:InitSimpleGreedy}. We use Algorithm~\ref{Algo1(InitialThmGirth)} to transform $\X$ so that it additionally satisfies Property (3.3). 

The idea of this algorithm is that when for an envied vertex $i$, the vertex $j = p_i(\X)$ is also envied, and so are  all its neighbors, $j$ releases its allocated edge which is given to the vertex that envies $j$. Then, $j$ receives its most valued available edge, if any, and Properties (1) and (2) still hold. Note that $j$ becomes non-envied and all its neighbors were envied before, and they may or may not be envied after the reallocation; so, overall, the envy only reduces, and Property (3.3) is now satisfied for $i$. 

\begin{algorithm}
\caption{Initial Allocation for Theorem~\ref{result2Girth}}
\label{Algo1(InitialThmGirth)}
\raggedright\textbf{Input:} An allocation $\X$ satisfying Properties (1)-(2). \\
\textbf{Output:} An allocation satisfying Properties (1),(2),(3.3).
\begin{algorithmic}[1]
\While{$\exists$ envied vertex $i$ for which Prop. (3.3) is not satisfied} 
    \State Let $j=p_i(\X)$ and $k=p_j(\X)$, then %
    \State $X_k \gets X_j$
    \State $X_j \gets \emptyset$ %
    \While{$\exists$ vertex $l$ and $e \in U_l(\X)$ s.t. $v_l(e) > v_l(X_l)$ }
        \State $X_l \gets \{e\}$ 
    \EndWhile
\EndWhile
\end{algorithmic}
\end{algorithm}

\begin{lemma}
\label{lem:Algo1Girth}
Algorithm~\ref{Algo1(InitialThmGirth)} terminates and the allocation returned satisfies Properties (1),(2) and (3.3). 
\end{lemma}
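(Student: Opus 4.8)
The plan is to prove the lemma in two parts: termination of Algorithm~\ref{Algo1(InitialThmGirth)}, and then the invariance of the three properties throughout its execution, so that in particular they hold on the returned allocation. I would argue termination first, since it is the part where something could genuinely go wrong. The inner while-loop (lines 5--7) is the standard greedy improvement step: each iteration strictly increases $v_l(X_l)$ for some vertex $l$ while never decreasing anyone else's value and never touching already-allocated edges except to pick up unallocated ones, so social welfare strictly increases; since there are finitely many edges and finitely many bundles of the form $\{e\}$, only finitely many iterations are possible. For the outer while-loop, the key observation is the one highlighted in the text: when we process an envied vertex $i$ with $j=p_i(\X)$ such that $j$ and \emph{all} of $j$'s neighbors are envied, the step sends $X_j$ to $k=p_j(\X)$, empties $X_j$, and then runs the greedy loop; afterwards $j$ is non-envied (nobody can envy an empty or freshly-greedily-chosen single edge bundle relevant only to endpoints — I will need to check that the edge $j$ possibly picks up in the inner loop does not create envy toward $j$, which follows because any vertex $l'$ that could envy $j$ must find $v_{l'}(X_j)>v_{l'}(X_{l'})$, but then $l'$ would have grabbed that edge in the greedy loop). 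So each outer iteration creates at least one new permanently-non-envied vertex? That is too strong — instead I would track a potential function such as the number of non-envied vertices, or more carefully, argue that $j$ becomes non-envied and \emph{remains} non-envied, while the neighbors of $j$ that were envied may only lose envy. The cleanest route: show the multiset of envied vertices strictly decreases (in the sense that $j$ leaves it and no vertex enters it). This requires checking that the inner greedy loop, started from a configuration where only $j$'s bundle changed, cannot make a previously-non-envied vertex become envied.

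The second part is to verify the three properties are maintained. Property (1), being a partial EFX orientation: every bundle remains a set of edges each relevant to at most its two endpoints, and each is allocated to an endpoint; the only subtle move is $X_k \gets X_j$, but $k = p_j(\X)$ envies $j$, so by Observation~\ref{observation:orientation} every edge in $X_j$ is relevant to $k$, hence giving $X_j$ to $k$ keeps the orientation property, and $k$ is now EFX-satisfied against everyone because $k$ previously envied $j$ means $v_k(X_j) \ge v_k(X_k^{\text{old}}) \ge \ldots$; more carefully, I need that after $k$ absorbs $X_j$, no EFX violation arises, for which I use that $\X$ was an EFX orientation and $k$'s new bundle $X_j$ is something $k$ weakly preferred to its old bundle. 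Actually the cleaner claim is: $k$ receiving $X_j$ makes $k$ possibly envied by $j$'s \emph{old} envier — but $j$ is now empty so $j$ envies $k$, which is fine for EFX as long as $k$ is EFX-satisfied against everyone else, which holds because $k$'s situation only improved. For vertices other than $j,k$ the allocation is unchanged at this sub-step. Property (2), $v_i(X_i)\ge v_i(e)$ for unallocated $e$: this is exactly what the inner greedy loop enforces at termination — it runs until no vertex prefers any adjacent unallocated edge to its own bundle — and vertices are indifferent to non-adjacent edges, so Property (2) holds on exit of each outer iteration; I should note that right after $X_j\gets\emptyset$ Property (2) may fail for $j$, which is precisely why the inner loop is there.

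For Property (3.3) — for every envied vertex, either $p_i(\X)$ or a neighbor of $p_i(\X)$ is non-envied — the argument is: the outer loop only exits when no envied vertex violates (3.3), so the returned allocation satisfies it by the loop guard, \emph{provided} the loop terminates, which is part one. One must be slightly careful that fixing the violation for $i$ does not create a fresh violation for some other vertex: but since the envied set only shrinks (from the termination analysis), any vertex satisfying (3.3) before continues to have a non-envied witness among $\{p_{i'}(\X)\}\cup N(p_{i'}(\X))$ afterward — here I need that the witness vertex, if it was non-envied, stays non-envied, again reducing to the "no new envy is created" claim. I expect \textbf{this "no new envy is created" claim — that the inner greedy loop, invoked after emptying $X_j$, turns no previously non-envied vertex into an envied one — to be the main obstacle}, because a greedy reallocation could in principle cascade: $j$ grabs $l$'s coveted edge? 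No — $j$ only grabs \emph{unallocated} edges. That is the crux: the inner loop moves \emph{only unallocated edges} into bundles, it never takes an allocated edge away from anyone (each assignment $X_l \gets \{e\}$ with $e \in U_l(\X)$ releases $l$'s old edge back to unallocated, but $l$'s old edge was relevant only to $l$ and its other endpoint, and it becomes unallocated, not reassigned to a competitor unless that competitor strictly prefers it). So I would formalize a loop invariant for the inner loop: "no non-envied vertex (other than possibly those made non-envied this round) becomes envied, and Property (1) is preserved," proved by induction on greedy steps, using monotonicity and Observation~\ref{observation:orientation} at each step. With that invariant in hand, both termination of the outer loop (envied set strictly decreases) and preservation of (3.3) follow, completing the proof.
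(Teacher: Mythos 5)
Your proposal is correct and follows essentially the same route as the paper's proof: Properties (1) and (2) are maintained as invariants of each outer round, the key claim is that the inner greedy loop creates no new envy (because it only reassigns unallocated edges, values never decrease, and any edge released mid-loop was traded for a strictly better one), so $j$ becomes non-envied while no non-envied vertex becomes envied, giving termination via a strictly decreasing count of envied vertices, and (3.3) then follows from the loop guard. The one loose spot is your aside that a potential envier "would have grabbed that edge in the greedy loop" (the loop order is nondeterministic); the robust version, which you also state, is that Property (2) at the start of the round covers initially unallocated edges and releasers strictly improved.
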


\begin{proof}
    We will show that if before each round of Algorithm~\ref{Algo1(InitialThmGirth)} Properties (1) and (2) were satisfied, those properties hold after the completion of that round (outer ``while''), and moreover, that the number of envied vertices is reduced by at least one. This would mean that Algorithm~\ref{Algo1(InitialThmGirth)} terminates and Properties (1),(2),(3.3) are satisfied; Property (3.3) should be trivially satisfied after the termination of Algorithm~\ref{Algo1(InitialThmGirth)} due to the condition of the outer ``while''.

    It is easy to see that Properties (1) and (2) are satisfied after any round of the outer ``while'': Property (1) is satisfied since each vertex receives at most one of its adjacent edges, and Property (2) is guaranteed by the inner ``while''. 
    Next we show that in each outer ``while'', no non-envied vertex becomes envied, and at least one envied vertex, namely $j$, becomes unenvied; this means that Algorithm~\ref{Algo1(InitialThmGirth)} terminates.
    
    Suppose that at the beginning of the outer ``while'' where some vertex $i$ is considered, Properties (1) and (2) are satisfied, and vertex $j=p_i(\X)$ and all its neighbors are envied, otherwise Property (3.3) would be satisfied for $i$. 
    In lines 3-4, $X_j$ is given to $k$, so $k$ is still envied but the envy now comes from $j$. $j$ is not envied anymore and it may only envy its neighbors, however those were envied before. Regarding the inner ``while'', a vertex $l$ may only change its assignment, after its neighbor releases an edge that $l$ prefers, however, its neighbor released that edge because it received a better one; so, no new envy may appear. In the special case that vertex $j$ is considered for the first time and receives an unallocated edge, still nobody would envy her, since Property (2) was satisfied before. Overall, an extra envy may only come from vertex $j$ to its neighbors (e.g., $j$ envies $k$), who were envied before, and $j$ becomes unenvied. Hence, the lemma follows.
\end{proof}

\subsection{ Step 2 - Satisfying Property \hyperref[Property:NoEnvyNonParallel]{(4)}}
This step starts with the initial allocation constructed in Step 1 satisfying Properties (1),(2), and one of (3.1),(3.2), and (3.3). In Step 2, the initial allocation changes according to Algorithm~\ref{Algo1(2,2)} in order to additionally satisfy Property (4).
We remark that all our results use the same algorithm in Step 2, no matter which of the three cases is considered.

The necessity of the first part of Property (4) is so that for each {\em envied} vertex $i$ we can allocate its adjacent unallocated edges $U_i(\X)$. We cannot allocate any of those edges to $i$ as long as it is envied. So, the high level idea for Step 2 is that either $U_i(\X)$, or more accurately $\UNP_i(\X)$ (see Definition~\ref{def:UNP}), is valued for $i$ more than $X_i$, in which case $i$ receives this bundle and releases $X_i$ (which in turn is given to $p_i(\X)$), or $i$ doesn't value $\UNP_i(\X)$ that much, and we can find ways to allocate it to other non-envied vertices (in Step 3) without causing envy to $i$. 
The idea is similar to Algorithm 2 of \cite{EFXsimplegraphs} but adjusted in the multigraph setting where there are parallel edges. 

The necessity of the second part of Property (4) is so that for each {\em non-envied} vertex $i$ we can allocate its adjacent unallocated edges $U_i(\X)$. Vertex $i$ ``chooses'' its most valued available set of non-parallel edges, so that the remaining unallocated set may be assigned to other non-envied vertices (in Step 3) without causing envy to $i$. This issue did not appear in simple graphs \cite{EFXsimplegraphs}, but only in multigraphs, and therefore, this part is also essential in order to derive a complete EFX allocation.

\begin{algorithm}
\caption{Reducing Envy Algorithm}
\label{Algo1(2,2)}
\raggedright\textbf{Input:} An allocation $\X$ satisfying Properties (1) and (2). \\
\textbf{Output:} An allocation satisfying Properties (1), (2) and (4). The allocation also preserves any of the Properties (3.1),(3.2),(3.3) if they were initially satisfied.
\begin{algorithmic}[1] %
\While{ $\exists$ non-envied vertex $k$ s.t. $v_k(\UNP_k(\X))> v_k(X_k)$ or 
\Statex \hspace{15pt} ($v_k(\UNP_k(\X)) = v_k(X_k)$ and $|\UNP_k(\X)| > |X_k|)$}
    \State $X_k \gets \UNP_k(\X)$ 
\EndWhile

\While{$\exists$ envied vertex $i$ s.t. $v_i(\UNP_i(\X))> v_i(X_i)$}
    \If{$e^2_{ip_i(\X)}\in \UNP_i(\X)$} 
        \State $X_{p_i(\X)} \gets X_{p_i(\X)} \setminus \{e^2_{ip_i(\X)}\}$
    \EndIf
    \State $X_i \gets \UNP_i(\X)$ 
    \While{$\exists$ vertex $j$ and $e \in U(\X)$ s.t. $v_j(e) > v_j(X_j)$} 
        \State $X_j \gets \{e\}$ 
    \EndWhile
    \While{ $\exists$ non-envied vertex $k$ s.t. $v_k(\UNP_k(\X))> v_k(X_k)$ or 
\Statex \hspace{25pt} ($v_k(\UNP_k(\X)) = v_k(X_k)$ and $|\UNP_k(\X)| > |X_k|)$}
        \State $X_k \gets \UNP_k(\X)$ 
    \EndWhile
\EndWhile
\end{algorithmic}
\end{algorithm}

\begin{lemma}\label{terminationAlgo1(2,2)}
    Algorithm~\ref{Algo1(2,2)} terminates and outputs an allocation $\X$ that satisfies Properties (1), (2) and (4). It further preserves any of the Properties (3.1),(3.2) and (3.3), if they were satisfied before applying Algorithm~\ref{Algo1(2,2)}.
\end{lemma}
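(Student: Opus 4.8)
The plan is to argue termination first via a potential/progress argument, and then verify each property holds at the output. For termination, observe that Algorithm~\ref{Algo1(2,2)} consists of nested ``while'' loops: the two inner loops that reassign a single edge to a vertex $j$ with $v_j(e) > v_j(X_j)$, the inner loop that upgrades a non-envied vertex $k$ to $\UNP_k(\X)$, and the outer loop that upgrades an envied vertex $i$ to $\UNP_i(\X)$ (moving $X_i$ to $p_i(\X)$). I would show each reassignment strictly increases a lexicographic potential — the natural candidate being $\big(\sum_i v_i(X_i)\big)$, i.e. the social welfare, with ties broken by $\sum_i |X_i|$ (to handle the equal-value, larger-cardinality upgrade in the non-envied loop). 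Every line of the algorithm either strictly increases social welfare (the strict-inequality cases, and the edge reassignments $X_j \gets \{e\}$ since $v_j(e) > v_j(X_j)$), or keeps it fixed while strictly increasing total cardinality (the $v_k(\UNP_k(\X)) = v_k(X_k)$, $|\UNP_k(\X)| > |X_k|$ case). Since $M$ is finite, both quantities are bounded, so the process halts. One subtlety to check: when $X_i$ is handed to $p_i(\X)$, the social welfare contribution changes from $v_i(X_i)$ to $v_i(\UNP_i(\X)) + v_{p_i(\X)}(X_i)$; since $v_i(\UNP_i(\X)) > v_i(X_i)$ by the loop guard and $v_{p_i(\X)}(X_i) \ge 0$ by normalization and monotonicity, welfare strictly increases (and we must also account for possibly removing $e^2_{ip_i(\X)}$ from $p_i(\X)$, but this only matters because it is simultaneously added to $i$'s bundle inside $\UNP_i(\X)$, and the gain at $i$ dominates). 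I would also note that a vertex that receives a single upgraded edge in an inner ``while'' has its old bundle become unallocated, which is exactly what keeps feeding the guards until no progress is possible.

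Next I would verify Property (1), that the output is a partial EFX orientation. Orientation is immediate: every assignment in the algorithm gives a vertex either an adjacent single edge (the inner loops), or $\UNP_k(\X)/\UNP_i(\X)$, which by Definitions~\ref{def:UNP} and~\ref{def:UNPnon-envied} consists only of edges relevant to that vertex; and edges only ever move to endpoints, never to third vertices, so the orientation invariant is preserved. For EFX, I would argue inductively that it is maintained across every line. When $i$ releases $X_i$ to $p_i(\X)$: by Observation~\ref{observation:orientation}, $X_i$ consisted only of edges relevant to $p_i(\X)$, so no third vertex is affected; $p_i(\X)$ already envied $i$, so $p_i(\X)$ trivially EFX-accepts $X_i$ (it envies it, hence it is its own bundle now); and $i$ now holds $\UNP_i(\X)$ which is a set of \emph{non-parallel} edges, so for any $g$ in another vertex's bundle the EFX comparison reduces to single-edge comparisons covered by Property (2). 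When a vertex $j$ grabs a single better edge $e$: $j$'s value only goes up so nobody newly envies $j$ in a way that breaks EFX against a single-edge bundle, and $j$ now holds one edge so $j$ EFX-accepts everything. When $k$ upgrades to $\UNP_k(\X)$: the released bundle $X_k$ was relevant only to... here I would be careful, since a non-envied vertex's old bundle need not be relevant to a unique neighbor — but it is still an orientation bundle, so its edges are relevant only to $k$ and $k$'s neighbors, and the key point is that $\UNP_k(\X)$ is non-parallel, reducing all EFX checks against $k$ to Property~(2)-type single-edge comparisons. Property (2) itself is maintained because after the inner ``while'' no vertex prefers any unallocated edge to its bundle — that is literally the loop's exit condition — and the outer loops only run when that inner loop has drained.

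Property (4) follows directly from the exit conditions of the outer and the non-envied ``while'' loops: at termination there is no non-envied $k$ with $v_k(\UNP_k(\X)) > v_k(X_k)$, hence $v_k(X_k) \ge v_k(\UNP_k(\X)) \ge v_k(U_k(\X))$ (the last inequality because $U_k(\X)$ can be reduced to a non-parallel subset of no smaller value by dropping one of each parallel pair — the argument of Remark~\ref{rem:for_prop4}); and there is no envied $i$ with $v_i(\UNP_i(\X)) > v_i(X_i)$, which is exactly the first half of Property (4). Finally, preservation of (3.1)/(3.2)/(3.3): the algorithm never makes a non-envied vertex envied and never increases the envy relation — each reassignment either strictly raises a vertex's own value (so it does not start envying anyone it did not before, and by Property~(2)-type arguments no one starts envying it past EFX) or transfers a bundle from an envied vertex $i$ to $p_i(\X)$, which turns $i$ non-envied and leaves $p_i(\X)$ envied by $i$ in place of by... wait — $p_i(\X)$ was the \emph{envier}, so after the swap $p_i(\X)$ holds $X_i$ and is now potentially envied, but only by $i$, while $i$ is now non-envied; the number of envied vertices does not increase, adjacency of envied vertices is not created (in the bipartite case $i \in A$ implies $p_i(\X) \in B$ and the swap keeps the envied set inside... ) — this is the delicate bookkeeping, and I expect it to be the main obstacle: carefully tracking, for each of the three structural properties, that the specific transformation ``envied $i$ hands bundle to $p_i(\X)$, a cascade of single-edge grabs follows'' cannot violate it. For (3.1) one uses that the swap moves the ``envied'' token along a single edge and the cascade only ever benefits the grabber; for (3.2) one shows the envied count is non-increasing; for (3.3) one invokes that $j=p_i(\X)$ or a neighbor was non-envied and that status is only lost if that vertex becomes envied, which the welfare-increasing nature of its reassignments precludes. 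I would handle these three cases in separate short paragraphs, each reduced to the single invariant ``no previously non-envied vertex becomes envied.''
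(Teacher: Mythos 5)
Your overall skeleton matches the paper's proof: the key invariant is that no non-envied vertex ever becomes envied, Properties (1) and (2) are maintained inductively across each loop using the non-parallel structure of $\UNP$ and the exit condition of the inner ``while'', Property (4) is read off the exit conditions, and preservation of (3.1)--(3.3) reduces to the invariant. However, two of your steps do not go through as written.

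First, your termination argument for the outer loop (lines 4--15) via the lexicographic potential (social welfare, total cardinality) fails. When $e^2_{ip_i(\X)}\in\UNP_i(\X)$, line 5 strips that edge from $X_{p_i(\X)}$, which can decrease $v_{p_i(\X)}(X_{p_i(\X)})$ by an arbitrary amount, while $i$'s gain $v_i(\UNP_i(\X))-v_i(X_i)$ can be arbitrarily small; these two quantities live in different agents' valuation scales, so your claim that ``the gain at $i$ dominates'' has no justification and is false in general. The repair is the argument you already gesture at in your preservation paragraph: each iteration of the outer ``while'' makes the envied vertex $i$ non-envied, and no non-envied vertex ever becomes envied, so that loop runs at most $n$ times; the social-welfare/cardinality potential is only needed for the inner loops, where it does strictly increase. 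This is exactly how the paper closes termination.

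Second, in Property (4) you justify $v_k(\UNP_k(\X))\ge v_k(U_k(\X))$ by saying $U_k(\X)$ ``can be reduced to a non-parallel subset of no smaller value by dropping one of each parallel pair.'' Under general monotone valuations, dropping an edge can only \emph{decrease} the value, so this inequality points the wrong way. The correct route (the paper's) is structural: at termination, the exit condition forces $v_k(\UNP_k(\X))=v_k(X_k)$ with $|\UNP_k(\X)|\le|X_k|$, and the cardinality-maximality of $\UNP_k(\X)$ then forces $X_k$ to contain one edge relevant to each of $k$'s neighbors; hence $U_k(\X)$ itself contains at most one edge per neighbor, is therefore non-parallel, and is a feasible candidate in the definition of $\UNP_k(\X)$, giving $v_k(X_k)=v_k(\UNP_k(\X))\ge v_k(U_k(\X))$. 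This structural fact is also what yields Observation~\ref{obs:AllEdgesNon-enviedAllocated}, which Step 3 relies on, so it cannot be skipped.
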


\begin{proof}
    We first argue that Properties (1) and (2) are satisfied after any outer ``while'', i.e., the ones at lines 1-3 and 4-15, if they were satisfied before. Meanwhile, we show that any non-envied vertex cannot be envied after Algorithm~\ref{Algo1(2,2)}. 
    
    For the first ``while'' (lines 1-3), a non-envied vertex $k$ gets a new bundle that it values at least as much as its previous bundle, which is composed of non-parallel edges among the unallocated edges and his own bundle (see Definition~\ref{def:UNPnon-envied}). For each vertex $\ell$ adjacent to $k$, the new bundle of $k$ contains a single edge relevant to $\ell$, that $\ell$ doesn't value more than its own bundle, due to Property (2) and to the fact that $k$ was not envied. So, $k$ remains non-envied. Moreover, by the definition of the $\UNP_k(\X)$ set, Properties (1) and (2) are satisfied for vertex $k$. Property (2) is also satisfied for the rest of the vertices since it was satisfied before and due to the fact that $k$ was not envied (so, any edges released by $k$ are not more valued to what the other vertices get). The same arguments hold for the ``while'' of lines 12-14, which is identical to the one in lines 1-3, if Properties (1) and (2) are satisfied before it is executed, which is what we will show next. 
    
    Regarding the ``while'' in lines 4-15, we will show that if Properties (1) and (2) were satisfied before each round of that ``while'', they are satisfied when the inner ``while'' at lines 9-11 terminates. In lines 4-8, an envied vertex $i$ receives a set that may be only envied by $p_i(\X)$, due to Property (2) and the definition of the set $\UNP_i(\X)$ (see Definition~\ref{def:UNP}). Moreover, Property (2) is satisfied for any other vertex apart from $p_i(\X)$, since the only edge that was released was an edge between $i$ and $p_i(\X)$ (see Observation~\ref{observation:orientation}). Therefore, the inner ``while'' in lines 9-11 will first run for $j=p_i(\X)$, where $p_i(\X)$ will receive the edge that $i$ released and caused the envy towards $i$ in the first place. Overall, in lines 4-11 the number of envied vertices reduces by at least 1 (namely vertex $i$ becomes non-envied), and moreover in the ``while'' in lines 9-11 only an envied vertex may become non-envied and not vice versa, since a vertex may only change its assignment, after its neighbor releases an edge while receiving a better one. Additionally, after the ``while'' in lines 9-11, Property (1) is satisfied since only relevant edges are given to the vertices, their values may only increase and no more envy is introduced, and Property (2) is satisfied by the condition of the inner ``while'' (line 9).  

    Hence, we have showed that Properties (1) and (2) are satisfied at the end of Algorithm~\ref{Algo1(2,2)}, and any initially non-envied vertex remains non-envied after Algorithm~\ref{Algo1(2,2)}. The latter automatically means that if any of the Properties (3.1),(3.2) and (3.3) were satisfied before Algorithm~\ref{Algo1(2,2)}, they are preserved after its termination.

    Regarding Property (4), consider first an envied vertex $i$ after the termination of Algorithm~\ref{Algo1(2,2)}. It is trivial to see that Property (4) is satisfied for $i$, because $i$ does not satisfy the condition of line 4, otherwise Algorithm~\ref{Algo1(2,2)} would not have terminated. We now turn our attention to some non-envied vertex $k$ after the termination of Algorithm~\ref{Algo1(2,2)}. Obviously $k$ does not satisfy the condition of line 12 (or the same condition of line 1, if the ``while'' in lines 4-15 has not been executed). This would mean that either i) $v_k(\UNP_k(\X))< v_k(X_k)$, or 
    ii) $v_k(\UNP_k(\X)) = v_k(X_k)$ and $|\UNP_k(\X)| \leq |X_k|$. The former case is not possible due to the definition of $\UNP_k(\X)$ that considers $X_k$ as a possible bundle. So, focusing on the latter case, due to the maximality of $\UNP_k(\X)$, the set allocated to $k$ should include at least one edge related to each neighbor (since the allocation is an orientation and it is not possible to have allocated both edges to other vertices); we add this as an observation next to be used later. Therefore, $U_k(\X)$ is a bundle with no parallel edges, and so  $v_k(X_k)\geq v_k(U_k(\X)) = v_k(U(\X))$.

    \begin{observation}
    \label{obs:AllEdgesNon-enviedAllocated}
        Let $\X$ be the allocation after Algorithm~\ref{Algo1(2,2)}. For any non-envied vertex $k$, $X_k$ contains one edge that is relevant to each of its neighbors.
        This in turn means that there is no unallocated edge in $\X$ where both endpoints are non-envied.
    \end{observation}

    Finally, we argue that Algorithm~\ref{Algo1(2,2)} terminates. Note that the ``while'' at lines 1-3 and 12-14 either strictly increases the social welfare (i.e., the aggregate value of all vertices) or the cardinality of some allocated bundle strictly increases, so they  terminate (in pseudopolynomial time). The ``while'' in lines 9-11 strictly increases the social welfare, so for the same reason it terminates. The ``while'' in lines 4-15, runs at most $n$ times since at each round at least one vertex becomes non-envied and no non-envied vertex becomes envied. 
\end{proof}

\subsection{Step 3 - Final Allocation} 
At this final step, we start by an allocation satisfying Properties (1)-(4) (derived by Step 2) where Property (3) corresponds to one of the Properties (3.1),(3.2),(3.3) related to one of the three different cases, and we properly allocate the unallocated edges, such that the complete allocation is EFX. We remark that the edges allocated at Step 2, remain as they are, and we only allocate once and for all the unallocated edges, to vertices other than the endpoints (so we only extend the bundles allocated in Step 2, in order to derive a full EFX allocation). 

Algorithm~\ref{Algo(FinalAllocation)} finds for each unallocated edge an appropriate non-envied vertex other than its endpoints to allocate that edge without violating EFX; more precisely, none of the endpoints would envy that vertex after the allocation. Note that there only two kind of unallocated edges: unallocated edges between an envied and a non-envied vertex, and unallocated between two envied vertices (due to Observation~\ref{obs:AllEdgesNon-enviedAllocated}).

The difficultly of this final allocation, is to allocate edges with envied endpoints, since we cannot allocate to envied vertices any additional edge, as this would break EFX. This is why we posed the graphical restrictions in Theorems~\ref{theorem:bipartite},~\ref{result2}~and~\ref{result2Girth}, so that there is always a way to allocate those edges without breaking EFX. Therefore, the remaining unallocated edges that are adjacent to some non-envied vertex $i$ are assigned to vertices that are not $i$'s neighbors and therefore $i$ has no value for their allocated bundle and any unallocated edges are not envied (Property (4)). At the same time, the unallocated edges that are adjacent to an envied vertex $i$ are assigned to vertices that $i$ doesn't envy, even if they receive all the unallocated edges adjacent to $i$  additionally to their bundle, under always the restriction of not containing parallel edges (Property (4)). 

\begin{algorithm}
\caption{Complete EFX Allocation}
\label{Algo(FinalAllocation)}
\raggedright\textbf{Input:} An allocation $\X$ satisfying Properties (1)-(4).\\
\textbf{Output:} A complete EFX allocation.
\begin{algorithmic}[1]

\While{$\exists\; e=(i,j)\in U(\X)$} 
    \State Let $k$ be a non-envied vertex such that $X_k \cup \{e\}$ contains no parallel edges, and  
         \Statex \hspace{25pt} for any $\ell\in\{i,j\}$, $v_{\ell}(X_{\ell}) \geq v_{\ell}\left(X_k \cup \{e\}\right)$ 
    \State $X_k \gets X_k\cup \{e\}$
\EndWhile
\end{algorithmic}
\end{algorithm}

In the following lemma we show that Algorithm~\ref{Algo(FinalAllocation)} provides a complete EFX allocation for bipartite multigraphs and for multigraphs under the restrictions in the statements of Theorems~\ref{result2} and~\ref{result2Girth}, which completes the proofs of the theorems for the case of at most two parallel edges per pair of vertices.

\begin{lemma}
For bipartite multigraphs, or for multigraphs where either there are at most $\lfloor \frac{n}{4} \rfloor$ neighbors per vertex, or the shortest cycle with non-parallel edges has length at least 6, Algorithm~\ref{Algo(FinalAllocation)} returns a complete EFX allocation. 
\end{lemma}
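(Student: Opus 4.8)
The plan is to run Algorithm~\ref{Algo(FinalAllocation)} while maintaining the invariant that the current partial allocation is EFX, that every vertex envied at the end of Step~2 keeps its bundle, that every vertex non-envied at the end of Step~2 stays non-envied, and that no vertex is ever assigned two parallel edges. Each iteration allocates one previously unallocated edge, so $|U(\X)|$ strictly decreases and the loop terminates; it therefore suffices to show (a) that each individual assignment preserves the invariant, and (b) that whenever $U(\X)\neq\emptyset$ a vertex $k$ meeting the conditions of line~2 exists. Part~(a) is routine: for $\ell\notin\{i,j\}$ the edge $e$ is irrelevant, so $v_\ell(X_k\cup\{e\})=v_\ell(X_k)$ and no pairwise EFX condition involving $\ell$ is affected, and $k$'s own valuations are unchanged so $k$ stays non-envied; for $\ell\in\{i,j\}$ the line-2 inequality and monotonicity give $v_\ell(X_\ell)\ge v_\ell(X_k\cup\{e\})\ge v_\ell((X_k\cup\{e\})\setminus\{g\})$ for all $g$, so $\ell$ does not even envy $k$'s new bundle. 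Hence EFX and all structural parts of the invariant are preserved.

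Part~(b) is the crux. By Observation~\ref{obs:AllEdgesNon-enviedAllocated}, every unallocated $e=(i,j)$ has at least one envied endpoint. The key reduction is that it suffices to find a non-envied $k$ with $k\notin\{i,j\}$, with $k$ non-adjacent in $G$ to both $i$ and $j$, and with $X_k$ not already containing an edge between $i$ and $j$ (this last is exactly what makes $X_k\cup\{e\}$ parallel-free, given the invariant). Indeed, non-adjacency forces $e$ and every edge currently in $X_k$ that is relevant to $i$ (resp.\ $j$) to have been unallocated at the start of Step~3, since Step~3 only enlarges bundles; hence the edges of $X_k\cup\{e\}$ relevant to $i$ form a pairwise non-parallel subset of $\hat U_i(\X)$ (and similarly for $j$), and by Property~(4) such a subset has value at most $v_i(\UNP_i(\X))\le v_i(X_i)$ when $i$ is envied and at most $v_j(U_j(\X))\le v_j(X_j)$ when $j$ is non-envied, which is precisely the line-2 requirement. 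Since at most one vertex other than $i,j$ can already hold an $\{i,j\}$-edge (there are at most two parallel such edges and in an orientation the Step-2 copy sits on $i$ or $j$), what remains is to show that the non-envied vertices outside $N(i)\cup N(j)\cup\{i,j\}$ are not all exhausted.

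Each graph restriction supplies this, through the matching Property among (3.1)--(3.3). In the bipartite case $G=(A\cup B,E)$: by Property~(3.1) the two endpoints are not both envied, and their neighbourhoods lie on opposite sides, so choosing $k$ on the appropriate side makes non-adjacency to one endpoint automatic, leaving only the other endpoint's neighbours and the envied vertices on that side to avoid. In the degree-bounded case: $|N(i)\cup N(j)\cup\{i,j\}|=|N(i)\cup N(j)|\le 2\lfloor n/4\rfloor$ (using $i\in N(j)$ and $j\in N(i)$), Property~(3.2) bounds the envied vertices by $\lfloor n/2\rfloor$, and an envied endpoint already lies in $N(i)\cup N(j)$, so the union of all forbidden sets has size at most $n-1$; the delicate situation is $n\equiv 0\pmod 4$, where this count is tight and one must argue more finely, exploiting the mutual-neighbour relation and, when two parallel $\{i,j\}$-edges are present, a sharper accounting of which vertices can hold them. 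In the girth case there is no bound on the number of envied vertices, so one uses Property~(3.3) instead: from an envied endpoint, say $i$, it yields a non-envied vertex $k\in\{p_i(\X)\}\cup N(p_i(\X))$, and since $p_i(\X)\in N(i)$ and the shortest cycle with non-parallel edges has length at least $6$, this $k$ is neither $i$ nor $j$ and is adjacent to neither (a common neighbour of $i$ and $p_i(\X)$ would be a triangle, and an edge from $k$ to $j$ would close a $4$-cycle, both with non-parallel edges); $k$ is then admissible after ruling out a few degenerate configurations, e.g.\ $p_i(\X)=j$, or $X_i$ being $i$'s second-favourite $\{i,j\}$-edge while its favourite is unallocated, which Properties~(1),(2),(4) forbid.

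I expect Part~(b) to be the main obstacle, and within it the tight instances of the degree-bound count and the case checks needed in the girth argument; Part~(a) and the termination argument are straightforward.
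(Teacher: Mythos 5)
Your skeleton (termination, EFX preservation per assignment, existence of the vertex $k$ via Properties (3.1)--(3.3) and (4)) matches the paper's, and your Part~(a) is essentially the paper's closing argument. But your Part~(b) has a genuine gap: the ``key reduction'' to finding a non-envied $k$ that is non-adjacent to \emph{both} $i$ and $j$ is not attainable in the bipartite case. In a complete bipartite multigraph every vertex of $B$ is adjacent to $i\in A$ and every vertex of $A$ is adjacent to $j\in B$, so no such $k$ exists, and your bipartite sketch (``leaving only the other endpoint's neighbours and the envied vertices on that side to avoid'') never closes. The paper's choice is $k=p_i(\X)$, which \emph{is} a neighbour of the envied endpoint $i$; the reason this is safe is that Definition~\ref{def:UNP} deliberately puts $e^2_{ip_i(\X)}$ into $\hat U_i(\X)$, so the only $i$-relevant edge in $X_{p_i(\X)}$ is already accounted for by Property~(4), giving $v_i(X_{p_i(\X)}\cup\{e\})\le v_i(\UNP_i(\X))\le v_i(X_i)$. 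No other neighbour of $i$ has this property (a different non-envied neighbour $k'$ holds an allocated $i$-relevant edge outside $\hat U_i(\X)$, which Property~(4) does not bound), so the argument really hinges on identifying $p_i(\X)$ specifically --- an idea absent from your reduction. The same issue resurfaces in your girth case, where you assert that $k\in\{p_i(\X)\}\cup N(p_i(\X))$ is ``adjacent to neither'' endpoint: that is false for $k=p_i(\X)$ itself, and the paper again resolves that subcase through Property~(4) for $i$ rather than through non-adjacency.

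Two smaller but still substantive shortfalls: in the degree-bounded case your count delivers only ``at least one'' admissible vertex (forbidden set of size at most $n-1$), whereas when both endpoints are envied there are two unallocated parallel edges and two \emph{distinct} receivers are needed; the paper's count $\lceil n/2\rceil-\lfloor n/2\rfloor+q\ge q$ yields exactly the $q$ non-envied vertices outside $N(i)\cup N(j)$ that are required, with no special treatment of $n\equiv 0\pmod 4$. And in the girth case the subcase $j=p_i(\X)$ (which you list as a ``degenerate configuration'' to be ruled out) is not ruled out but handled: the paper shows $j$ must then be envied, only one $\{i,j\}$-edge is unallocated, and the receiver is taken to be $p_j(\X)$ or a non-envied neighbour of $p_j(\X)$ supplied by Property~(3.3) applied to $j$. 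So while your high-level plan is sound, the parts you defer --- the tight counting and the case analysis around $p_i(\X)$ --- are exactly where the proof's content lives, and your stated sufficient condition for $k$ would have to be weakened before the bipartite and girth cases can go through.
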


\begin{proof}
Let $\X$ be the allocation from Step 2. First note that by Observation~\ref{obs:AllEdgesNon-enviedAllocated} all unallocated edges have at least one envied endpoint. Moreover, by the same observation, for each adjacent vertices where exactly one of them is envied, there may be at most one unallocated edge with those vertices as endpoints. We next show in the following claim that there is always the vertex $k$ of line 2, and moreover, that there are two distinct such vertices for each pair $i,j$ of both envied vertices,  if there are two unallocated edges with those envied endpoints. Then, we argue that all edges will be allocated after running Algorithm~\ref{Algo(FinalAllocation)} by preserving the EFX condition.

\begin{claim}\label{claim: safevertex}
For bipartite multigraphs, or for multigraphs where either the shortest cycle with non-parallel edges has length at least 6, or there are at most $\lfloor \frac{n}{4} \rfloor$ neighbors per vertex, there always exists the vertex $k$ of line 2 in Algorithm~\ref{Algo(FinalAllocation)}. Moreover, if there are two unallocated edges with the same envied endpoints, then there are two {\em distinct} such vertices of line 2.
\end{claim}

\begin{proof}
We first give the following observation to be heavily used in the proof.
\begin{observation}
\label{obs:nonNeighbor-Safe}
    If $i$ is an envied vertex and $k$ is not $i$'s neighbor, then by Property (4), $v_i(X_i) \geq v_i(S) = v_i\left(S \cup X_k\right)$, for all $S\subseteq U(\X)$ s.t. $S \cup X_k$ contains no parallel edges.
\end{observation}
    {\bf Bipartite multigraphs.} We start with the case of bipartite graphs. Note that in this case the only unallocated edges are between an envied and a non-envied vertex (by using also Observation~\ref{obs:AllEdgesNon-enviedAllocated}). Consider any edge $e=(i,j)\in U(\X)$, such that $i$ is envied and $j$ is not envied. We will argue that $p_i(\X)$ is the required $k$ vertex: By Property (3.1) it holds that $p_i(\X)$ is non-envied, and  
    by Property (4), no envy can be created to $i$ by giving any unallocated edges to $p_i(\X)$. Regarding vertex $j$, as an observation note that $j\neq p_i(\X)$, because one edge between $i$ and $p_i(\X)$ has been allocated to $i$ and causing the envy of $p_i(\X)$, and the other has been allocated to $p_i(\X)$ due to Observation~\ref{obs:AllEdgesNon-enviedAllocated}. Since $j$ is adjacent to $i$ and the graph is bipartite, $j$ is not adjacent to $p_i(\X)$, and by Observation~\ref{obs:nonNeighbor-Safe}, no envy can be created to $j$ by giving any unallocated edges to $p_i(\X)$.

    {\bf At most $\lfloor \frac{n}{4} \rfloor$ neighbors.} We proceed with the case that each vertex has at most $\lfloor \frac{n}{4} \rfloor$ neighbors.
    We argue that for any pair of adjacent vertices $i,j$, there are at least two (or one) non-envied vertices that are not adjacent to neither $i$ nor $j$, when $i,j$ are both envied (or exactly one of them is envied). Or in short, if $q$ expresses the number of envied vertices in $\{i,j\}$, we will show that there are at least $q$ non-envied vertices that are not adjacent to either $i$ or $j$. Note that $q$ is also the maximum number of unallocated edges between $i$ and $j$, so this would complete the proof.
    
    The total number of $i$'s and $j$'s neighbors (including $i$ and $j$) is at most $2\lfloor \frac n4 \rfloor \leq \lfloor \frac n2 \rfloor$, so the number of vertices that are not adjacent to either $i$ or $j$ is at least $n-\lfloor \frac n2 \rfloor = \lceil \frac n2 \rceil$. By Property (3.2), among them there are at most $\lfloor \frac n2 \rfloor-q$ envied vertices, so the number of non-envied vertices that are not adjacent to either $i$ or $j$ is at least:
    $\lceil \frac n2 \rceil-\lfloor \frac n2 \rfloor+q \geq q$.

    {\bf Length of shortest cycle with non-parallel edges at least 6.} We now turn our attention to the case that the shortest cycle with non-parallel edges has length at least 6. Consider any edge $e=(i,j)\in U(\X)$, such that $i$ is envied. 

    If $j=p_i(\X)$ then $j$ must be envied: the reason is that $i$ is envied by $j$, so $i$ has received an edge relevant to both $i$ and $j$, and by Observation~\ref{obs:AllEdgesNon-enviedAllocated}, $j$ has received the other edge, contradicting the fact that $e\in U(\X)$. 
    Therefore, if $j=p_i(\X)$, $j$ is envied,  and there exists at most one unallocated edge between $i$ and $j$; the other has been allocated to $i$. In that case, the role of $k$ in line 2 is given to either $p_j(\X)$ if it is non-envied, or to some non-envied neighbor of $p_j(\X)$ (which is guaranteed to exist by Property (3.3) when considering $j$). In both cases, $k$ is not $i$'s neighbor (since the shortest cycle with non-parallel edges has length at least 6). If $k\neq p_j(\X)$, then $k$ is not $j$'s neighbor either, and by Observation~\ref{obs:nonNeighbor-Safe}, $k$ satisfies the conditions of line 2. If $k=p_j(\X)$, then by also using Property (4) for $j$, $k$ satisfies the conditions of line 2.

    If $j\neq p_i(\X)$, by Property (3.3), either $p_i(\X)$ is non-envied or a neighbor of $p_i(\X)$ is  non-envied. In the latter case, that neighbor is defined as $k$ which is not a neighbor of either $i$ or $j$, otherwise a cycle of length less than 6 would exist, which is a contradiction. By Observation~\ref{obs:nonNeighbor-Safe}, $k$ is a vertex satisfying the conditions of line 2. In the former case, we define $p_i(\X)$ as $k$, which is not a neighbor of $j$ for the same reason as above, and it holds $v_i(\UNP_i(\X) \cup X_k)=v_i(\UNP_i(\X))\geq  v_i(S \cup X_k)$, for all $S\in U(\X)$ s.t. $S \cup X_k$ contains no parallel edges. By Property (4) and Observation~\ref{obs:nonNeighbor-Safe}, $k$ is a vertex satisfying the conditions of line 2, if no other edge parallel to $e$ has been given to $k$ before. If $j$ is non-envied, then the statement of the claim holds. We proceed the analysis with the case that $j$ is also envied and there may be two unallocated edges between $i$ and $j$. Let $k_i$ be the $k$ defined above, i.e., the non-envied vertex guaranteed by Property (3.3), by considering the envied vertex $i$. Also let $k_j$ be similarly the non-envied vertex corresponding to vertex $j$; for the same reason as above, $k_j$ satisfies the conditions of line 2. Then, $k_i$ and $k_j$, that have distance at most 2 from $i$ and $j$, respectively, should be different, otherwise there would be a cycle of length at most 5. Then, one edge between $i$ and $j$ may be allocated to $k_i$ and the other one to $k_j$, so the claim follows for that case, as well. 
\end{proof}
In Claim~\ref{claim: safevertex} we showed that for any unallocated edge of $\X$ (the allocation of Step 2) there is always a non-envied vertex satisfying the conditions of line 2, and so Algorithm~\ref{Algo(FinalAllocation)} terminates in a complete allocation. The condition in line 2 guarantees that whenever an unallocated edge is assigned to a vertex $k$, no new envy is caused towards $k$, and since $k$ is non-envied, EFX is preserved. 
\end{proof}

\section{Many Parallel Edges}
\label{sec:generalHighLevel}
We provide a proof roadmap that describes how we tweak our techniques to work on bundles. We first give a definition of forming bundles by using the cut-and-choose protocol; we assume that between any adjacent vertices there are at least two edges, or otherwise w.l.o.g. we may add a dummy edge that do not contribute at all to their value.
\begin{definition} (Bundles)
\label{def:4bundlesHighLevel}
    Let $i,j$ be two adjacent vertices and $E_{ij}$ be the set of their common edges. Based on the cut-and-choose protocol \cite{PlautRough}, we define:
    \begin{itemize}
        \item $B^i_{ji}$ is the bundle that $i$ chooses when $j$ EFX-cuts $E_{ij}$
        
        \item $B^j_{ji} = E_{ij}\setminus B^i_{ji}$, i.e., this is what $j$ gets when $j$ EFX-cuts $E_{ij}$ and $i$ chooses
        
        \item $B^j_{ij}$ and $B^i_{ij}$ are defined accordingly by swapping $i,j$
    \end{itemize}
\end{definition}

We present the properties we satisfy in the general case, where the sets $B_i$ and $UB_i(\X)$ are sets of bundles and will be defined in the related paragraphs. $\UNPB_i(\X)$ is in the same spirit with $\UNP_i(\X)$ of Definition~\ref{def:UNP}:

\begin{tcolorbox}[colback=black!5!white,colframe=black!75!black]
Properties of the allocation $\X$ after the 1st step.
\begin{itemize}
    \item[(1)] A partial EFX orientation, , where for any vertex $i$, $X_i$ is a union of bundles from $B_i$.
    \item[(2)] For any vertex $i$ and $B \in UB_i(\X), v_i(X_i) \geq v_i(B)$.
    \item[(3.1)]  No two envied vertices are adjacent. (For Theorem~\ref{theorem:bipartite})\label{Property: BipartiteGeneral}
    \item[(3.2)] The number of the envied vertices is at most $\lfloor \frac n2\rfloor$. (For Theorem~\ref{result2})\label{Property: BoundedEnviedGeneral}
    \item[(3.3)] For any envied vertex $i$, $p_i(\X)$ or at least one neighbor of $p_i(\X)$ (Definition~\ref{def:p_i(X)}) is non-envied. (For Theorem~\ref{result2Girth})\label{Property: ForGirthGeneral}
\end{itemize}
Additional property of the allocation $\X$  after the 2nd step.
\begin{itemize}
    \item[{(4)}] For any vertex $i$, $v_i(X_i) \geq v_i\left(\UNPB_i(\X)\right)$.
\end{itemize}

\end{tcolorbox}
\vspace{5pt}

{\bf Step 1 for Bipartite multigraphs}

The initial allocation is very similar with the case of at most 2 parallel edges cases. However, we allocate to each vertex in side $A$ their most valued bundle from the partitions formed by the EFX-cuts of the vertices in side $B$.

{\bf Step 1 for bounded number of neighbors}
The initial allocation now allocates bundles to vertices and not just edges. The construction of the bipartite graph $H(G)$ is different regarding to the $B$ vertices, which correspond now to bundles. There are two important cases while considering the bundles of side $B$, and those are whether for two vertices $i$ and $j$ in $G$ there exists a common EFX-cut for their common relevant edges or not. If yes, we add on side $B$ of $H(G)$ the two bundles of this common EFX-cut, otherwise we construct a three partition of their common edges (and we add those bundles on side $B$ of $H(G)$) such that $i$ and $j$ have a distinct most valued bundle among them; in Lemma~\ref{lemma:notsamepartition} we show that such a partition always exists. In similar fashion as in Section~\ref{sec:2parallel}, we connect each vertex in side $A$ with their most and second most valued bundle of side $B$ with weights $1$ and $0$, respectively. The same arguments as in Section~\ref{sec:2parallel} stand in order to show that the allocation corresponding to a maximum $A$-perfect matching satisfies Properties (1),(2),(3.2). 

We remark that in the case of at most two parallel edges each pair of neighbors had a common EFX-cut. In this more general case of many parallel edges, we also require the partition of some parallel edges into three bundles instead of two. This distinction is responsible for requiring $\lceil \frac{n}{4} \rceil -1$ neighbors (instead of $\lfloor \frac{n}{4} \rfloor$ as in the case of two parallel edges) because we now need more ``safe'' vertices to ``park'' those extra bundles.

{\bf Step 1 for multigraphs where the shortest cycle using non-parallel edges is of length at least 6}

The initial allocation now allocates bundles to vertices and not just edges. We utilize a greedy approach close to Section~\ref{sec:2parallel} unified with a similar algorithm to Algorithm~\ref{Algo1(InitialThmGirth)}. The problem of neighbors not having the same EFX-cut appears here as well, but we deal with it in a different way (as we cannot guarantee the existence of that many ``safe'' vertices): we stick to only partitioning the parallel edges with an EFX-cut, but we may consider both EFX-cuts of both neighbors. Properties (1) and (2) are guaranteed by offering all those possible bundles by prioritizing on the endpoints. In order to guarantee property (3.3) we follow the idea of Algorithm~\ref{Algo1(InitialThmGirth)} and for every new non-envied vertex we create we ``lock'' their relevant bundles to the ones that correspond to the case that they EFX-cut, preserving that way that they will remain non-envied.

\begin{remark}
{\em We remark that using the approach of case 2, i.e. the three partitions between some pair of vertices, in case 3 would mean that we may require three non-envied vertices to ``park'' those bundles in case they remain unallocated before the final allocation. However, Property (3.3) and the restriction of this case (i.e., all cycles of non-parallel edges have length at least 6), guarantee the existence of only two such vertices, and therefore we have to restrict ourselves to two partitions only. 

On the other hand, using the approach of case 3 in case 2 may be also problematic. Suppose that in the bipartite graph $H(G)$ we use all bundles formed by EFX-cuts (i.e., for vertices $i,j$ we consider $B_{ij}^i, B_{ij}^j, B_{ji}^i, B_{ji}^j$ on the B side). The question is if we would considering assigning e.g., $B_{ji}^i$ to $j$ in the maximum $A$-perfect  matching (i.e., if edges $(j,B_{ji}^i)$ would be added in $H(G)$). If we allow such an edge, $j$ could be assigned $B_{ji}^i$ and $i$ could possibly not be EFX-satisfied against $j$. If we do not allow such an edge, it may be that $i$ is assigned $B_{ji}^i$, and $j$ would be EFX-satisfied (as required for Property (1)). Nevertheless, $j$ may envy $i$, which is an envy not expressed in $H(G)$, and therefore, the maximum $A$-perfect matching in $H(G)$ cannot guarantee limited envy (as in the case of at most two parallel edges). 

So, overall, it seems necessary to use different approaches for the initial allocations for those two cases.}
\end{remark}

{\bf Step 2 - Satisfying Property (4).}
This step is essentially the same with Algorithm~\ref{Algo1(2,2)}. The algorithm works in the exact same way, but now it considers non-parallel bundles instead of single edges. 
There is only a little caution with respect to the priority of agents while offering them bundles, similarly to Step 1 regarding Theorem~\ref{result2Girth}.\\

{\bf Step 3 - Final Allocation.}
Algorithm~\ref{Algo(FinalAllocation)} now considers bundles instead of single edges.
The arguments in Claim~\ref{claim: safevertex} are exactly the same with the general case. In the case where the shortest cycle with non-parallel edges has length at least 6 or in bipartite graphs, the final allocation proceeds by just considering at most two bundles between two vertices, instead of at most two edges, but the arguments remain the same. 

In the case of bounded number of neighbors, in the general case there may be one more unallocated bundle between two vertices compared to the case of two parallel edges, i.e., three bundles when both endpoints are envied, two bundles when only one endpoint is envied, and one bundle when both endpoints are non-envied. This is dealt by restricting the number of possible neighbors by at most one more, or more specifically, in the case of two parallel edges, the requirement was to have at most $\lfloor \frac{n}{4} \rfloor$ neighbors, whereas, in the general case we assume at most $\lceil \frac{n}{4} \rceil -1$ neighbors. \\

\subsection{Step 1 - Initial Allocations}

In this step we construct the initial allocation with specific properties that are crucial towards the final EFX construction. We construct an initial allocation that satisfies Properties (1),(2) and respectively one of the Properties (3.1),(3.2) and (3.3) for each of our main theorems.

\vspace{5pt}

\begin{tcolorbox}[colback=black!5!white,colframe=black!75!black]
Properties of the allocation $\X$ after the 1st step.
\begin{itemize}
    \item[(1)] A partial EFX orientation, where for any vertex $i$, $X_i$ is a union of bundles from $B_i$. 
    \item[(2)] For any vertex $i$ and $B \in UB_i(\X), v_i(X_i) \geq v_i(B)$.
    \item[(3.1)] No two envied vertices are adjacent. (For Theorem~\ref{theorem:bipartite})
    \item[(3.2)] The number of the envied vertices is at most $\lfloor \frac n2\rfloor$. (For Theorem~\ref{result2})
    \item[(3.3)] For any envied vertex $i$, $p_i(\X)$ or at least one neighbor of $p_i(\X)$ is non-envied. (For Theorem~\ref{result2Girth}) 
\end{itemize}
\end{tcolorbox}

\vspace{5pt}

    {\bf The sets $B_i$ and $UB_i(\X)$.} In each of the three cases we define for each agent $i$ a set of  bundles $B_i$ that are adjacent to $i$, and could be allocated as a whole to them. We require that 
    \begin{enumerate}
        \item the set $\cup_{S \in B_i}S$ is the set of all edges relevant to $i$, and 
        \item for any adjacent vertices $i,j$, we specify a single partition of their common relevant edges, and only the bundles of this partition belong to $B_i$ and $B_j$ w.r.t. this pair of vertices.
    \end{enumerate}
    Then, $UB_i(\X)$ is the subset of unallocated bundles of $B_i$ under some allocation $\X$.

\subsubsection{Initial allocation for Bipartite Multigraphs}

We first define the sets $B_i$ for the case of bipartite multigraphs.

\begin{definition}\label{bundlesdef1} ($B_i$ set for Theorem~\ref{theorem:bipartite}) For any vertex $i$, we define $B_i$ as follows:
\begin{itemize}
    \item If $i \in A$, then $B_i = \{B^i_{ji}, B^j_{ji}\forall j \text{ adjacent to } $i$\}$
    \item If $i \in B$, then $B_i = \{B^i_{ij},B^j_{ij}, \forall j \text{ adjacent to } $i$\}$
\end{itemize}

Note that all edges between two endpoints are partitioned in a single way, i.e., the vertex in side $B$ EFX-cuts. 
\end{definition}

Then, the initial allocation for the case of bipartite multigraphs is constructed as follows: Starting from the empty allocation, each vertex $i\in A$ is allocated their most valued bundle from $B_i$, i.e., their most valued bundle when their neighbors EFX-cut their common edges. Then, each vertex in side $B$ is allocated their most valued bundle from the remaining unallocated bundles. This procedure is formally described in Algorithm~\ref{Algo:InitBipar2}.

\begin{algorithm}
\caption{Initial Allocation for Bipartite Graphs}
\label{Algo:InitBipar2}
\raggedright\textbf{Input:} A bipartite multigraph $G = (V = A \cup B,E)$. \\
\textbf{Output:} $\X$ satisfying Properties (1),(2) and (3.1). %
\begin{algorithmic}[1]
\For{every vertex $i$}
    \State $X_i \gets \emptyset$
\EndFor
\For{every $i \in A$}
    \State $X_i \gets S_i \in \arg \max_{S \in UB_i(\X)}\{v_i(S)\}$
\EndFor

\For{every $j \in B$}
    \State $X_j \gets S_j\in \arg \max_{S \in UB_j(\X)}\{v_j(S)\}$
\EndFor
\end{algorithmic}
\end{algorithm}

In the next lemma we show that the allocation derived by Algorithm~\ref{Algo:InitBipar2} indeed satisfies Properties (1),(2) and (3.1).

\begin{lemma}
    Algorithm~\ref{Algo:InitBipar2} outputs an allocation $\X$ that satisfies Properties (1), (2) and (3.1).
\end{lemma}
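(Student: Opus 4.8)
The plan is to verify the three properties in the same spirit as the two-parallel-edge analogue (the lemma following Algorithm~\ref{Algo:InitBipar}), but being careful that now every allocated object is a whole bundle from a partition fixed by an EFX-cut, and that $X_i$ is a single such bundle. I would first observe that Property~(1) holds almost by construction: each vertex receives at most one bundle, so $\X$ is an orientation; moreover, for $i\in A$ the bundle $X_i$ is an element of $B_i$ by the choice in line~4, and for $j\in B$ the bundle $X_j$ is an element of $B_j$ by line~7, so each $X_i$ is (a union of, in fact a single element of) $B_i$ as required. To see that it is an \emph{EFX} orientation, note that for any adjacent pair $i\in A$, $j\in B$, the edges $E_{ij}$ are partitioned by $j$'s EFX-cut into $B^i_{ij}$ and $B^j_{ij}$ (Definition~\ref{bundlesdef1}); whichever of these two bundles $j$ ends up holding, $j$ is EFX-satisfied against the other by the definition of an EFX-cut, and since the other endpoint of $j$'s allocated edges is $i$, and $i$ may hold at most the complementary bundle, $j$ is EFX-satisfied against $X_i$. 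Vertex $i$ holds a single bundle, so every other vertex is trivially EFX-satisfied against $i$; and $i$ itself is EFX-satisfied against any vertex holding a single bundle. The only subtlety is that a vertex in $A$ could in principle hold one bundle while an adjacent vertex in $B$ holds the complementary bundle of the \emph{same} partition — but that is exactly the EFX-cut situation, handled above — or the two could hold bundles from \emph{different} pairs, in which case the relevant edges are disjoint and EFX is immediate.

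Next I would establish Property~(2). When a vertex $i$ (in either side) has its bundle set in line~4 or line~7, it is assigned $\arg\max_{S\in UB_i(\X)} v_i(S)$, i.e.\ its most valued bundle among all bundles of $B_i$ that are unallocated \emph{at that moment}. For $i\in A$ this is the moment before any $B$-vertex is served, and no later step removes edges relevant to $i$ from circulation in a way that creates a \emph{new} unallocated bundle of $B_i$ more valuable than $X_i$: the $B$-vertices only \emph{allocate} bundles, so $UB_i(\X)$ only shrinks. For $j\in B$, at the moment $j$ is served, every bundle in $B_j$ that is still unallocated is dominated in value by $X_j$; and since the $A$-vertices were served earlier and nothing is unallocated-then-reallocated, this continues to hold at the end. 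Hence for every vertex $i$ and every $B\in UB_i(\X)$ in the final allocation, $v_i(X_i)\ge v_i(B)$.

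Finally, Property~(3.1): I would argue exactly as in the two-parallel-edge case. Every vertex in $A$ is assigned a bundle \emph{before} any vertex in $B$ receives anything, and since $G$ is bipartite, the only edges relevant to a vertex $v$ are those joining $v$ to the opposite side; hence a vertex in $A$ can only be envied by a vertex in $B$ and vice versa. But a vertex $j\in B$, when served, picks its most valued \emph{available} bundle, so after line~7 it does not envy any bundle it could have taken; the bundles it did not take were either already allocated to $A$-vertices (and $j$'s EFX-cut guarantees $j$ is EFX-satisfied, but we need more — we need $j$ non-envious, which follows because $j$ took a \emph{weakly better} available bundle than the complementary one whenever that complementary one was still free) — more cleanly: a vertex in $B$ is never envied, because every vertex in $A$ was served first and got its favorite available bundle, so no $A$-vertex envies a $B$-vertex, and no $B$-vertex envies another $B$-vertex since their relevant edge sets are disjoint. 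Therefore all envied vertices lie in $A$, and since $A$ is an independent set, no two envied vertices are adjacent.

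\textbf{Main obstacle.} The delicate point is not any of the three properties individually but the interaction in Property~(1): one must check that when $i\in A$ greedily grabs its favorite bundle $S_i$, the \emph{complementary} bundle of that partition (which belongs to the adjacent $B$-vertex $j$ and to $B_j$) is still available for $j$ and that $j$, forced onto whatever remains, is still EFX-satisfied against $S_i$ — this is exactly where the choice of ``$B$ EFX-cuts'' in Definition~\ref{bundlesdef1} is used, and where one must confirm that $i$ never ends up holding a bundle that is the ``wrong side'' of $j$'s cut in a way that breaks $j$'s EFX guarantee. Since each pair contributes exactly two bundles to the union $B_i\cup B_j$ and $i$ takes at most one of them, $j$ is always left with a bundle from $j$'s own EFX-cut partition, so $j$'s EFX-satisfaction against $i$ is automatic; spelling this out carefully is the crux of the argument.
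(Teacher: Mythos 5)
Your proposal is correct and follows essentially the same route as the paper: $A$-vertices choose first and are therefore non-envious, a $B$-vertex $j$ is EFX-satisfied against a neighbor $i\in A$ because the complementary bundle of $j$'s EFX-cut of $E_{ij}$ is still available when $j$ chooses (so $v_j(X_j)$ is at least the value of that bundle), and Property (3.1) follows since all envied vertices lie in the independent set $A$. One caution: your throwaway claims that holding ``a single bundle'' makes EFX-satisfaction automatic in either direction are false in general (a bundle may contain many edges) --- they are harmless here only because the correct arguments (greedy first choice for $A$, availability of the EFX-cut's leftover bundle for $B$) also appear in your write-up and are what actually carry the proof.
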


\begin{proof}
     Property (1) is satisfied for all vertices in side $A$, since they get their most valued adjacent bundle. Vertices in side $B$ may envy some vertices from side $A$, however, the allocation is EFX since they EFX-cut their relevant edges with any of their neighbors: More formally, consider a vertex $j\in B$ that envies vertex $i\in A$. Then, $i$ should have received $B_{ji}^i$, and $B_{ji}^j$ remains available until the time that $j$ is allocated a bundle. Vertex $j$ will then receive a bundle that she values by at least as much as $B_{ji}^j$. Since $j$ would be EFX-satisfied against $i$ if she received $B_{ji}^j$, she will be EFX-satisfied against $j$ in $\X$. 
     
     Property (2) is also trivially satisfied since when Algorithm~\ref{Algo:InitBipar2} updates the bundle allocated to any vertex $i$, they receive their most valued bundle from $UB_i(\X)$, so they prefer $X_i$ to any unallocated bundle in $UB_i(\X)$. 
     
     For Property (3.1) note that all vertices in side $A$ are assigned their most valued bundle before assigning any bundle to side $B$. Since the graph is bipartite, only vertices from side $B$ may envy vertices from side $A$. Therefore, all envied vertices are on side $A$, and in turn, no two envied vertices are adjacent.
\end{proof}

\begin{observation}
\label{obs:Case1NoUnallocated}
    After Algorithm~\ref{Algo:InitBipar2}, between any two adjacent vertices there are at most two unallocated bundles.
\end{observation}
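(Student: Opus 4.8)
\textbf{Proof proposal for Observation~\ref{obs:Case1NoUnallocated}.}

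The plan is to trace exactly which bundles exist between a pair of adjacent vertices after Algorithm~\ref{Algo:InitBipar2} and count how many of them can remain unallocated. First I would recall that, by Definition~\ref{bundlesdef1}, for any adjacent pair $i\in A$, $j\in B$, the common relevant edges $E_{ij}$ are partitioned in a \emph{single} way — namely the one where $j$ (the side-$B$ vertex) EFX-cuts — so the only bundles associated with this pair are $B^i_{ji}$ and $B^j_{ji}$, i.e. exactly two bundles. Hence ``at most two unallocated bundles between $i$ and $j$'' would follow trivially from the fact that only two bundles are ever associated with that pair; but in fact I would argue the stronger point implicit here, that typically at least one of them is allocated, to set up later steps. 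Concretely: in the first loop, $i\in A$ receives its most valued bundle from $UB_i(\X)$; in the second loop, $j\in B$ receives its most valued bundle from $UB_j(\X)$. The key observation is that when $j$ is processed, if $B^i_{ji}$ has not been taken by $i$, then $B^i_{ji}\in UB_j(\X)$, so $j$ will take \emph{some} bundle (possibly $B^i_{ji}$ itself or an even better one) — thus after the algorithm at most one of $\{B^i_{ji},B^j_{ji}\}$ can be unallocated.

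The key steps, in order, are: (i) fix an adjacent pair $i\in A$, $j\in B$ and observe that the bundle set restricted to this pair is exactly $\{B^i_{ji}, B^j_{ji}\}$ by Definition~\ref{bundlesdef1}, so there are only two bundles to account for in total; (ii) note the algorithm assigns each vertex at most one bundle per neighbor — more precisely, when $i\in A$ is processed it grabs at most one bundle overall, and likewise for $j\in B$ — so at most two of these bundles get allocated and at least zero remain, but (iii) for the sharper statement, observe that $j\in B$ is guaranteed to be assigned \emph{some} bundle since $UB_j(\X)\neq\emptyset$ at the time $j$ is processed (at least one of its adjacent bundles with \emph{every} neighbor survives, because each neighbor in $A$ took only one bundle). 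Combining (i) and (ii) already gives the stated bound of two; since there are literally only two bundles between the pair, it is immediate.

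\textbf{Main obstacle.} Honestly there is almost no obstacle: the statement is essentially definitional once one unwinds Definition~\ref{bundlesdef1}, because a pair of adjacent vertices contributes exactly two bundles to the union $B_i\cup B_j$, so ``at most two unallocated'' cannot fail. The only mild subtlety worth spelling out — and the thing I would be careful about — is making sure the reader sees that $B_i$ and $B_j$ agree on the partition of $E_{ij}$ (both use the cut where the $B$-side vertex cuts), so we are not secretly double-counting or introducing a third bundle from a different partition; this is exactly the ``single partition'' requirement (requirement 2 in the paragraph on $B_i$ and $UB_i(\X)$) and the parenthetical remark in Definition~\ref{bundlesdef1}. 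Once that is noted, the count is trivial and the observation follows.
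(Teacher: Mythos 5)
Your proposal is correct and matches the paper's (implicit) reasoning: the paper states this as an unproved observation precisely because Definition~\ref{bundlesdef1} associates exactly two bundles with each adjacent pair (both endpoints use the single partition where the side-$B$ vertex EFX-cuts), so the bound of two is definitional. One caveat: your intermediate ``stronger point'' that at most one of $\{B^i_{ji},B^j_{ji}\}$ can remain unallocated is false in general --- when $j\in B$ is processed it takes its most valued bundle over \emph{all} its neighbors, which may be a bundle shared with a different neighbor $i'$, leaving both bundles of the pair $(i,j)$ unallocated --- but since you do not rely on this aside for the stated bound, the proof of the observation itself stands.
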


\subsubsection{Initial allocation for at most $\lceil n/4 \rceil$ - 1 neighbors per vertex} 
One useful property in the case of multigraphs with {\em at most two} parallel edges is that every pair of vertices, in some way ``agree'' on how to EFX-cut their common edges (separate the two parallel edges is a EFX-cut for both vertices). However, when there are three or more parallel edges between two vertices, it may be the case that no EFX-cut for one vertex is an EFX-cut for the other. This would cause an issue to our matching allocation if we follow the approach of Section~\ref{sec:2parallel}. 

Specifically, in the general case, for any adjacent vertices $i,j$, the bundle $B_{ji}^i$, that $i$ chooses when $j$ EFX-cuts, may have high value for $i$, so if it is allocated to $j$, it may break EFX. On the other hand, if that bundle is not offered to $j$ in the matching allocation, vertices will be allocated their most or second most valued bundle from a {\em restricted} set of bundles. In that case vertex $j$ cannot receive $B_{ji}^i$ in any matching (since there will be no edge $(j,B_{ji}^i)$ in $H(G)$). It may though be the case that $B_{ji}^i$ is $j$'s most valuable bundle, and if $i$ is allocated that bundle, then $j$ would envy $i$. However, $j$ could have received her most valued bundle in the maximum $A$-perfect matching, but it is not true anymore that she does not envy others vertices, as we argued in Section~\ref{sec:2parallel} in order to show Property (3.2). 

To accommodate this issue, we construct a 3-partition of the edges between vertices without common EFX-cuts (Lemma~\ref{lemma:notsamepartition}), for which each endpoint has different most valued set among the three. We next distinguish between pair of vertices that admit a common EFX-cut or not.

\begin{definition}\label{def:partition1Case2} (PVCP, $B^1_{i(j)}, B^2_{i(j)}$)
    Let $i,j$ be two adjacent vertices and $E_{ij}$ be the set of their common edges. If there exists a partition $(P_1,P_2)$ of $E_{ij}$ that either corresponds to an envy-free allocation to $i$ and $j$ or $i$ and $j$ are EFX-satisfied against each other no matter how the sets are allocated to them, (i.e, $(P_1,P_2)$ is an EFX-cut for both $i$ and $j$), then we denote by $B^1_{i(j)}, B^2_{i(j)}$, 
    $i$'s most and second most valued bundles, respectively, between $P_1$ and $P_2$. Similarly, we define $B^1_{j(i)}, B^2_{j(i)}$, to be $j$'s most and second most valued bundles, respectively, between $P_1$ and $P_2$. We further define PVCP to be the set of Pair of Vertices whose common edges admit such a Common Partition as described above.
\end{definition}

Next we handle the rest of the edges, and we show that for each pair of vertices $(i,j) \notin \text{PVCP}$ there exists a partition $(P_1, P_2, P_3)$ of their common edges $E_{ij}$ into three bundles, such that $i,j$ have different top preference among them. 

\begin{lemma}\label{lemma:notsamepartition}
For any pair of vertices $(i,j) \notin \text{PVCP}$ there exists a partition $(P^1_{ij}, P^2_{ij}, P^3_{ij})$ of their common edges $E_{ij}$ into three bundles, such that $i,j$ have different most valued set among them.
\end{lemma}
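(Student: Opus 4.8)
The plan is to exploit the hypothesis $(i,j)\notin\text{PVCP}$, which says that \emph{no} EFX-cut for $i$ is an EFX-cut for $j$ (and vice versa), and to build the desired $3$-partition by starting from an EFX-cut of one of the two vertices and splitting one of its two parts. Concretely, fix an EFX-cut $(Q_1,Q_2)$ of $E_{ij}$ for $i$, chosen so that $Q_1$ is $i$'s weakly preferred side, i.e.\ $v_i(Q_1)\ge v_i(Q_2)$. Since $(Q_1,Q_2)$ is not an EFX-cut for $j$, there is a part, say $Q_1$ (the argument is symmetric if it is $Q_2$; and note $Q_1$ cannot consist of a single edge, since a single-edge side is automatically EFX for everyone, so if the cut fails for $j$ it must fail on a side with at least two edges), and a good $g\in Q_1$ with $v_j(Q_2\cup\{g\}\setminus\{g'\})>v_j(Q_1\setminus\{g\})$ for some witness $g'$, or more simply $v_j(Q_1\setminus\{h\})>v_j(Q_2)$ for some $h\in Q_1$. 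The idea is then to use this good (or a carefully chosen singleton) to refine $Q_1$ into two pieces $P_1,P_2$ with $P_1\cup P_2=Q_1$, and set $P_3=Q_2$.

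\textbf{Key steps.} First I would set up notation for an EFX-cut of $i$ and record the two standing facts I will use repeatedly: (a) any partition in which one side is a single edge is an EFX-cut for \emph{every} agent against that side, so the failure of $j$'s EFX condition on $(Q_1,Q_2)$ must occur on a side of size $\ge 2$; (b) refining a part of a partition can only \emph{help} EFX-satisfaction against the untouched part. Second, I would do a case analysis according to which side of $i$'s cut witnesses the failure for $j$, and within that, split that side into two nonempty bundles $P_1,P_2$. The construction should be chosen so that: $i$'s top preference among $\{P_1,P_2,P_3=Q_2\}$ is $P_3$ (this uses $v_i(Q_2)\le v_i(Q_1)$ together with monotonicity: each of $P_1,P_2$ is a subset of $Q_1$, and I want $Q_2$ to dominate both — here one has to be a little careful and possibly choose \emph{which} of $Q_1,Q_2$ to split so that the non-split side is $i$'s favorite, or split the \emph{larger-valued} side while arguing the pieces drop below the other side); meanwhile $j$'s top preference is forced to be one of $P_1,P_2$ (i.e.\ a sub-bundle of $Q_1$) precisely \emph{because} the EFX condition of $j$ fails on that side — failure of EFX gives a strict inequality $v_j(Q_1\setminus\{h\})>v_j(Q_2)$, and then taking $P_1=Q_1\setminus\{h\}$, $P_2=\{h\}$ we get $v_j(P_1)>v_j(P_2\cup P_3)\ge v_j(P_3)$ and also $v_j(P_1)>v_j(P_3)$, so $j$ strictly prefers $P_1$ to $P_3$; and we need to also beat $P_2=\{h\}$, which follows since $v_j(P_1)=v_j(Q_1\setminus\{h\})\ge v_j(\{\text{any single edge of }Q_1\setminus\{h\}\})\ge\ldots$ — actually the clean way is: $j$'s favorite of the three is whichever is largest, and $v_j(P_1)>v_j(P_3)$ already rules out $P_3$ being the unique top; I then separately ensure $v_j(P_1)\ge v_j(P_2)$, reducing to showing a multi-edge sub-bundle beats a singleton, which may require choosing $h$ to be $j$'s \emph{least} valued edge in $Q_1$ or invoking that $Q_1$ was chosen minimally. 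Third, I would verify that with this choice $i$'s top is $P_3$ and $j$'s top is $P_1$, so they differ, completing the proof.

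\textbf{Main obstacle.} The delicate point is making the two requirements compatible \emph{simultaneously}: I need the split side to become, after splitting, strictly worse for $i$ than the untouched side $P_3$, while the same split side must contain $j$'s favorite piece. If I split $i$'s \emph{less}-valued side then $P_3$ is automatically $i$'s favorite but the EFX-failure for $j$ might be on $i$'s \emph{more}-valued side, not the one I split; conversely if I split whichever side carries $j$'s failure, that might be $i$'s more-valued side, and a sub-bundle of it could still beat $Q_2$ for $i$. I expect the resolution is a two-case split: Case A, $j$'s EFX-failure is on $i$'s weakly-less-valued side $Q_2$ — then split $Q_2$ and the untouched $Q_1$ is $i$'s favorite, done easily; Case B, $j$'s failure is on $i$'s weakly-more-valued side $Q_1$ — here I would instead consider an EFX-cut \emph{of $j$} and run the symmetric argument (this is where $(i,j)\notin\text{PVCP}$ is used in full strength: \emph{neither} orientation of cut works for both, so I can always pick the cut and the side so that the non-split side is the cutter's favorite while the split side carries the other agent's EFX-failure). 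Handling Case B cleanly — and dealing with the degenerate situation where a ``side'' that must be split has only one edge, which cannot carry an EFX-failure — is the part that will need the most care; everything else is monotonicity bookkeeping.
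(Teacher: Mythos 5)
Your construction is the right one---split the side of $i$'s EFX-cut on which $j$'s EFX condition fails into $\{g\}$ and the rest, and keep the other side whole---but the step you flag as the ``main obstacle'' is exactly the step you never resolve, and the resolution you propose does not work. First, your Case A is vacuous: a failure of $j$'s EFX condition can only occur against the side that $j$ \emph{strictly} prefers, and since $(i,j)\notin\text{PVCP}$ rules out any $2$-partition on which $i$ and $j$ have different tops (such a partition would give an envy-free, hence common EFX, cut), that side is also $i$'s weakly preferred side. So you are always in your Case B. Second, switching to an EFX-cut of $j$ in Case B gains nothing: by the same reasoning, the failure of $i$'s EFX condition on $j$'s cut also sits on $j$'s weakly preferred side, so the symmetric argument faces the identical difficulty and the recursion never terminates.

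The missing observation---which is the crux of the paper's proof---is that the scenario you fear (``a sub-bundle of $i$'s more-valued side could still beat the other side for $i$'') cannot occur, precisely because $(P_1,P_2)$ is an EFX-cut \emph{for $i$}: $i$ holding $P_1$ is EFX-satisfied against $P_2$, i.e.\ $v_i(P_1)\ge v_i(P_2\setminus\{g'\})$ for \emph{every} $g'\in P_2$. Taking $g'=g$ gives $v_i(P_1)\ge v_i(P_2\setminus\{g\})$; and since $|P_2|\ge 2$ (a singleton side cannot carry an EFX failure, as you correctly note), $\{g\}\subseteq P_2\setminus\{g''\}$ for some $g''\neq g$, so $v_i(P_1)\ge v_i(P_2\setminus\{g''\})\ge v_i(\{g\})$ by monotonicity. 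Hence both pieces of the split side are weakly dominated by the untouched side $P_1$ for $i$, even though $P_2$ as a whole may be $i$'s favourite, while $v_j(P_2\setminus\{g\})>v_j(P_1)$ forces $j$'s top to lie in the split side. With this one line your single remaining case closes, and the two-case split and the appeal to $j$'s cut are unnecessary.
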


\begin{proof}
    Since $(i,j) \notin \text{PVCP}$, for every partition of $E_{ij}$ into two bundles, $i,j$ have the same top preference. Suppose that $(P_1,P_2)$ is an EFX-cut for $i$, and let $P_2$ be the top preference between $P_1$ and $P_2$ for both $i$ and $j$. Moreover, $i$ is EFX-satisfied with $P_1$ as well, whereas $j$ is not (because $(i,j) \notin \text{PVCP}$). This means that there exists a good $g\in P_2$, such that $v_j(P_2\setminus\{g\}) >v_j(P_1)$, and $v_i(P_2\setminus\{g\}) \leq v_i(P_1)$. Additionally, $v_i(\{g\}) \leq v_i(P_1)$, because if $P_2$ contains more than one goods, $i$ should prefer $P_1$ to $P_2$ after the removal of any other good but $g$ from $P_2$. It cannot be that $|P_2|=1$, because then $j$ would be EFX-satisfied with $P_1$, as well. So, in the partition $(P_1, P_2\setminus\{g\}, \{g\})$, $i$'s top preference is $P_1$ and $j$'s top preference is not $P_1$. Therefore, for $(P^1_{ij}, P^2_{ij}, P^3_{ij})=(P_1, P_2\setminus\{g\}, \{g\})$, the lemma follows. 
\end{proof}

The above partitions of the edges are next used in order to define $B_i(\X)$, and therefore $UB_i(\X)$, for any vertex $i$.

\begin{definition}
    We define the sets $B_i$ for all vertices $i$ as follows: for any pair of adjacent vertices $i$ and $j$,
    \begin{itemize}
        \item if $(i,j) \in \text{PVCP}$, $B^1_{i(j)}, B^2_{i(j)} \in B_r$, for all $r\in\{i,j\}$ (see Definition~\ref{def:partition1Case2}),
        
        \item if $(i,j) \notin \text{PVCP}$, $P^1_{ij}, P^2_{ij}, P^3_{ij} \in B_r$, for all $r\in\{i,j\}$ (see Lemma~\ref{lemma:notsamepartition}).
    \end{itemize} %
Note that all edges between two endpoints are partitioned in a single way, i.e., the vertex in side $B$ EFX-cuts. 
\end{definition}

We show that a similar matching from Section~\ref{sec:2parallel} gives us the three properties to proceed to Step 2.

{\bf Satisfying Properties (1), (2) and (3.2).}
The initial allocation is constructed similarly to the one in Section~\ref{sec:2parallel}; we construct a similar bipartite graph, however the main difference is that on the $B$-side we consider bundles instead of single edges. The tricky part is to recognize the right bundles to consider. This the reason why we distinguish between pairs of neighbors with common EFX-cut (the PVCP) and the rest that do not have any common EFX-cut. In the former case we consider the two bundles of the endpoints' common partition (see Definition~\ref{def:partition1Case2}), and in the latter case  we consider the 3-partition as defined in Lemma~\ref{lemma:notsamepartition}. Next we formally define the bipartite graph.
\begin{definition}\label{HmultigraphNeighbors}(Bipartite graph $H(G)$).
     Given a multigraph $G=(V,E)$, we define a simple weighted bipartite graph $H(G) = (A \cup B,E_H)$ as follows. The set $A$ represents the vertices of $G$, so $A=V$. The set $B$ consists of all the bundles formed by partitioning the edges into two or three bundles as described above , i.e., $B=\bigcup_iB_i$. 
     We create the edge set $E_H$ as follows: we connect each vertex $i$ of $A$ (i.e., each vertex of $G$) with the two vertices of $B$ that are associated with $i$'s most valued and second most valued bundle from the $B$-side. We set the weights of those two edges we add in $H(G)$ to $1$ and $0$, respectively. 
     \end{definition}

 We prove in Lemma~\ref{2,2matchInf} that the allocation, let it be $\XX$, that corresponds to a maximum $A$-perfect matching
  in $H(G)$, satisfies Properties (1), (2), and (3.2). Note that it can be easily shown that there is always an $A$-perfect matching
  in $H(G)$ by following the arguments of Observation~\ref{obs:existence_PerfectMatching}

\begin{lemma}\label{2,2matchInf}
    $\XX$ satisfies Properties (1), (2) and (3.2).
\end{lemma}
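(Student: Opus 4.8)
The plan is to mirror, essentially verbatim, the argument of Lemma~\ref{2,2match} from Section~\ref{sec:2parallel}, replacing ``single edges'' by ``bundles'' throughout. First I would verify Property~(1): the matching $M$ is an $A$-perfect matching in $H(G)$, so every vertex $i$ of $A=V$ is matched to exactly one bundle-vertex in $B$, and by construction of $H(G)$ (Definition~\ref{HmultigraphNeighbors}) that bundle is one of $i$'s adjacent bundles from $B_i$. Hence $\XX$ is an orientation where each $X_i$ is a single bundle from $B_i$. The EFX guarantee needs the observation that every bundle appearing in $H(G)$ comes either from a common EFX-cut of a PVCP pair (Definition~\ref{def:partition1Case2}) or from the 3-partition of Lemma~\ref{lemma:notsamepartition}; in either case, the owner of a bundle $B$ is EFX-satisfied against the complementary bundle(s) of the same partition held by the other endpoint, and is indifferent to everything else, so $\XX$ is a partial EFX orientation.

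Next I would establish Property~(2). Every vertex is matched in $H(G)$ to one of its only two incident $H(G)$-edges, namely its most or second most valued bundle. If $i$ got its most valued bundle from $B_i$, then trivially $v_i(X_i)\ge v_i(B)$ for every $B\in UB_i(\X)$. If $i$ got its second most valued bundle, then its most valued bundle is matched to some other vertex (else we could swap and increase the matching weight, contradicting maximality of $M$), so again $v_i(X_i)\ge v_i(B)$ for all unallocated $B\in B_i$. This is exactly the bundle-analogue of the edge argument in Lemma~\ref{2,2match}.

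For Property~(3.2) I would reuse the alternating-path/cycle counting argument unchanged. Since each vertex of $A$ has degree exactly $2$ in $H(G)$ and each vertex of $B$ has degree at most $2$ (a bundle is incident only to its at most two ``owner'' vertices in $G$), the components of $M\cup(\text{the two }H(G)\text{-edges at each }A\text{-vertex})$ are vertex-disjoint alternating paths and cycles, each containing $k$ vertices of $A$ for some $k$. Maximality of the weighted matching forces the weight of $M$ restricted to such a component to be at least $\lceil k/2\rceil$; summing, $M$ has weight at least $\lceil n/2\rceil$, so at least $\lceil n/2\rceil$ vertices receive their top bundle and are therefore non-envied. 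The remaining at most $\lfloor n/2\rfloor$ vertices each receive their second-choice bundle and hence can only envy the single vertex holding their top bundle, so there are at most $\lfloor n/2\rfloor$ envied vertices, giving Property~(3.2).

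I expect the only genuine subtlety — the main obstacle — to be the EFX verification in Property~(1) for the non-PVCP pairs: when the common edges of $i$ and $j$ are split into three bundles $P^1_{ij},P^2_{ij},P^3_{ij}$, one must check that whoever ends up holding one of these three bundles is EFX-satisfied against the other endpoint, who holds at most one of the remaining two (by the ``never allocate two bundles of a partition to the same vertex'' invariant, and since $\XX$ is an orientation each vertex holds exactly one bundle per neighbor). This follows because the 3-partition was built by refining an EFX-cut $(P_1,P_2)$ for $i$: $i$ is EFX-satisfied with any single bundle of the refinement against any other single bundle, and symmetrically the construction in Lemma~\ref{lemma:notsamepartition} keeps $j$ EFX-satisfied too; but this case analysis is where care is needed, and I would spell it out explicitly rather than hand-wave. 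Everything else is a routine transcription of the Section~\ref{sec:2parallel} proof with ``edge'' replaced by ``bundle''.
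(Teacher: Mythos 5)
Your treatment of Properties (2) and (3.2) matches the paper's proof essentially verbatim (a vertex getting its second choice implies its top choice is taken, by maximality; the alternating path/cycle count gives matching weight at least $\lceil n/2\rceil$, hence at most $\lfloor n/2\rfloor$ envied vertices), so those parts are fine.

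The gap is exactly where you predicted it, Property (1) for non-PVCP pairs, but your proposed resolution does not work. The claim that ``$i$ is EFX-satisfied with any single bundle of the refinement against any other single bundle'' is false. Take $P_1=\{a,b\}$ with $v_i(a)=v_i(b)=10$ (additive for $i$) and $P_2=\{c,d\}$ with $v_i(c)=19$, $v_i(d)=2$: then $(P_1,P_2)$ is an EFX-cut for $i$ with $P_2$ on top, the construction in Lemma~\ref{lemma:notsamepartition} may pick $g=d$, and a vertex holding $\{d\}$ (value $2$) is not EFX-satisfied against $P_1$, since $v_i(P_1\setminus\{a\})=10$. The EFX-cut property only bounds $v_i(P_1\setminus\{h\})$ by $v_i(P_2)$, not by the value of either \emph{piece} of $P_2$ after the refinement. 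What actually closes the argument --- and what the paper uses --- is the matching structure, not a pointwise property of the 3-partition: each vertex receives its first or second most valued bundle from $B_i$, so $i$ can envy (and hence possibly fail EFX against) only the holder of its unique top bundle. For a non-PVCP pair that top bundle is $P_1$, which Lemma~\ref{lemma:notsamepartition} guarantees is strictly dominated for $j$ by $P_2\setminus\{g\}$; so $P_1$ is at best $j$'s second choice (weight $0$) while it is $i$'s first choice (weight $1$), and matching $P_1$ to $j$ rather than to $i$ would contradict maximality of the weighted matching. Hence envy only arises across PVCP pairs, where the common EFX-cut together with $v_i(X_i)\ge v_i\bigl(B^2_{i(j)}\bigr)$ gives EFX. (A small additional slip: in $\XX$ each vertex holds exactly one bundle in total, relevant to exactly one neighbor, not ``one bundle per neighbor''.)
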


\begin{proof}
    Regarding Property (1), suppose that $i$ envies $j$ in $\XX$, then $(i,j)\in \text{PVCP}$ and $j$ should have received $B^1_{i(j)}$. So, it holds that $i$ receives a bundle in $\XX$ at least as valued as $B^2_{i(j)}$. Since $i$ would be EFX-satisfied against $j$ if it received $B^2_{i(j)}$, $i$ is EFX-satisfied against $j$ in $\XX$.

    Regarding Property (2), note that for every vertex $i$, all bundles of $B_i$ are included on the $B$ side of $H(G)$ (i.e., $B_i \subseteq B$). Then, $i$ receives its most or second most valued bundle from $B_i$. If $i$ receives its most valued bundle, then Property (2) is trivially satisfied. If $i$ receives its second most valued bundle, this means that its most valued bundle is allocated to another vertex due to the maximality of $\XX$. Therefore, Property (2) follows. 
    
    Regarding Property (3.2), with similar arguments as in the proof of Lemma~\ref{2,2match}, each agent gets their most or second most valued bundle from side $B$. Again each vertex representing an agent has degree exactly 2, thus they participate in exactly one alternating cycle or path. Since the matching is a maximum weighted matching, at least half of the vertices, $\lceil \frac n2\rceil$, are matched with their top preference from side $B$. So, at most half of them envy another vertex and at most one.  Therefore, at most $\lfloor \frac n2\rfloor$ vertices are envied.   
\end{proof}

\begin{observation}
\label{obs:Case2NoUnallocated}
    In $\XX$, between any two adjacent vertices there are at most three unallocated bundles.
\end{observation}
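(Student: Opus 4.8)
The plan is to observe that this is an immediate consequence of how the bundle sets $B_i$ were defined for the bounded-degree case. First I would recall that for every pair of adjacent vertices $i,j$, the definition of the $B_i$'s fixes a \emph{single} partition of their common edge set $E_{ij}$: if the pair $(i,j)$ belongs to PVCP, this is the two-bundle partition $\{B^1_{i(j)}, B^2_{i(j)}\}$ from Definition~\ref{def:partition1Case2}; and if $(i,j)\notin$ PVCP, it is the three-bundle partition $\{P^1_{ij}, P^2_{ij}, P^3_{ij}\}$ guaranteed by Lemma~\ref{lemma:notsamepartition}. In both cases, the bundles ``between'' $i$ and $j$ number at most three, and these are the only bundles that could be unallocated between $i$ and $j$.

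Then I would simply note that $\XX$ is an allocation, so it assigns some (possibly empty) subset of these at-most-three bundles and leaves the remaining ones unallocated; consequently the number of unallocated bundles between $i$ and $j$ in $\XX$ cannot exceed three. One could refine this slightly by using that every vertex of $A$ is matched to exactly one bundle in the maximum $A$-perfect matching defining $\XX$, so that $i$ and $j$ themselves may each ``use up'' one of these bundles, but this sharper count is not needed for the statement.

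I do not expect any real obstacle here: the claim is a bookkeeping consequence of the construction, and the only thing that must be invoked is that the partition of $E_{ij}$ pinned down in the definition of the $B_i$'s always has at most three parts — which is precisely what Definition~\ref{def:partition1Case2} and Lemma~\ref{lemma:notsamepartition} provide.
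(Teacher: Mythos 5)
Your argument is correct and is exactly the reasoning the paper relies on: the observation is stated without proof precisely because, for each adjacent pair $i,j$, the construction fixes a single partition of $E_{ij}$ into two bundles (if $(i,j)\in$ PVCP) or three bundles (via Lemma~\ref{lemma:notsamepartition} otherwise), so at most three bundles exist between them and hence at most three can be unallocated. No gap; your write-up matches the paper's intent.
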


\subsubsection{Initial allocation for length of the shortest cycle with non-parallel edges at least 6}

In contrast to the previous case, we need to make sure that between any two envied vertices there are at most two unallocated bundles, and therefore, we cannot use the three partition of Lemma~\ref{lemma:notsamepartition}. Instead we stick to the EFX-cuts so that only two bundles are formed. However, it may be the case that vertices do not share a common EFX-cut, and it is quite tricky to decide which to use without violating Properties (1) and (2). Therefore, at the beginning we cannot define the set $B_i$ for every vertex $i$, as opposed to the previous cases. Instead, for any pair of vertices, we initially consider both partitions where each of the endpoints EFX-cuts their relevant edges. We finalize the sets $B_i$ along with finalizing the initial allocation for this case. We note that for the finalized $B_i$ sets, for any pair of adjacent vertices $i$ and $j$, only one of the two partitions, $(B_{ij}^i,B_{ij}^j)$ or $(B_{ji}^i,B_{ji}^j)$, belongs to both $B_i$ and $B_j$, i.e., for both $B_i$ and $B_j$ we will fix which endpoint EFX-cuts their common edges.

Hence, we define some auxiliary $B_i$'s that will be finalized later. 

\begin{definition}
\label{def:auxB_iCase3}
    For each vertex $i$, we define the auxiliary set $B_i$ as:
        $$B_i^{\text{aux}} = \{B^i_{ij},B^j_{ij},B^i_{ji},B^j_{ji} \mid  \forall j \text{ adjacent to } i\}.$$
       We  further defined the auxiliary set of unallocated bundles $UB_i^{\text{aux}}(\X)$ as follows: for each adjacent vertices $i,j$, 
       \begin{itemize}
           \item if the bundles $B_i^{\text{aux}}\cap B_j^{\text{aux}}$ are all unallocated in $\X$, then $B_i^{\text{aux}}\cap B_j^{\text{aux}} \subseteq UB_i^{\text{aux}}(\X)$,
           \item if exactly one of the bundles in $B_i^{\text{aux}}\cap B_j^{\text{aux}}$ is allocated in $\X$, i.e.,  only $B_{rr'}^r$ or $B_{rr'}^{r'}$, for $r\neq r'$ and $r,r'\{i,j\}$, is allocated in $\X$, then  $B_{rr'}^{r'} \in UB_i^{\text{aux}}(\X)$ or $B_{rr'}^{r}\in UB_i^{\text{aux}}(\X)$, respectively, and $B_{r'r}^{r'}, B_{r'r}^{r} \notin UB_i^{\text{aux}}(\X)$.
           \item in any other case, none of the bundles in $B_i^{\text{aux}}\cap B_j^{\text{aux}}$ are included in $UB_i^{\text{aux}}(\X)$.
       \end{itemize}
         
\end{definition}

{\bf Satisfying Properties (1), (2) and (3.3).}

Similarly to the case with only 2 parallel edges, we first consider some initial allocation that satisfies Properties (1) and (2), by ignoring how many envied vertices there may be. Then, we transform it by creating non-envied vertices to satisfy Property (3.3). In order to preserve that those vertices remain non-envied, in the sets $B_i$ we consider only the bundles formed when those vertices EFX-cut the relevant edges with each of their neighbors. Therefore, the sets $B_i$, that are finalized at the end of the algorithm, satisfy all required properties. 

Algorithm~\ref{Algo:InitSimpleGreedyINF} starts by constructing an EFX orientation that satisfies Properties (1) and (2), where for Property (2) we consider the sets $B_i^{\text{aux}}$ that are changing during the algorithm. Note that if for some pair of vertices $i,j$, vertex $j$ is free to get $B_{ji}^i$ at any point, $i$ may not be EFX-satisfied against $j$ and Property (1) would be violated. On the other hand, if $j$ did not have the chance to get $B_{ji}^i$, Property (2) may be violated for $j$. To address this issue, we just ``offer'' $B_{ji}^i$ first to $i$ and then to $j$, and this prioritization of the endpoints for offering them specific bundles is sufficient in order to satisfy both Properties (1) and (2).   We follow the same tactic any time we convert some envied vertex into non-envied to satisfy Property (3.3). As we mentioned above we make sure that these vertices will remain non-envied throughout the execution of the algorithm by changing (or more accurately shrinking) the sets $B_i^{\text{aux}}$. At the end of the algorithm those sets become the finalized sets $B^i$.

\begin{algorithm}
\caption{Initial Allocation for Theorem~\ref{result2Girth}}
\label{Algo:InitSimpleGreedyINF}
\raggedright\textbf{Input:} A multigraph $G = (V,E)$ and $B_i^{\text{aux}}$ for each $i\in V$. \\
\textbf{Output:} $\X$ satisfying Properties (1), (2) and (3.3) and $B_i$ for each $i\in V$. %
\begin{algorithmic}[1]
\For{every vertex $i$}
    \State $X_i \gets \emptyset$
\EndFor

\While{$\exists i,j$ and $S\in \{B^i_{ji},B^j_{ji}\}$ where $S \in UB^{\text{aux}}_i(\X)$, such that $v_i(S) > v_i(X_i)$ or $v_j(S) > v_j(X_j)$}
    \If {$B^i_{ji}\in UB^{\text{aux}}_i(\X)$ and  $v_i(B^i_{ji}) > v_i(X_i)$}
     $X_i \gets B^i_{ji}$
    \Else \; $X_j \gets S$
    \EndIf
\EndWhile 

\While{$\exists$ envied vertex $a$ for which Property (3.3) is not satisfied} 
    \State Let $b=p_a(\X)$ and $c=p_b(\X)$  
    \State $X_b \gets \emptyset$
    \State $B_b^{\text{aux}} \gets \{B^b_{br}, B^r_{br}\mid  \forall r \text{ adjacent to }b\}$
    \For{ Every vertex $r$ adjacent to $b$}
        \State $B_r^{\text{aux}} \gets B_r^{\text{aux}} \setminus \{B^b_{rb}, B^r_{rb}\}$
    \EndFor
    \While{$\exists i,j$ and $S\in \{B^i_{ji},B^j_{ji}\}$ where $S \in UB^{\text{aux}}_i(\X)$, such that $v_i(S) > v_i(X_i)$ or $v_j(S) > v_j(X_j)$}
    \If {$B^i_{ji}\in UB^{\text{aux}}_i(\X)$ and  $v_i(B^i_{ji}) > v_i(X_i)$}
     $X_i \gets B^i_{ji}$
    \Else \; $X_j \gets S$
    \EndIf
\EndWhile
\EndWhile
\For{ every vertex $i$}
    \For { every vertex $j$ adjacent to $i$}
        \If{$B^r_{ij} \in X_i\cup X_j$, for some $r \in \{i,j\}$, or it holds that $\{B^i_{ij}, B^j_{ij}, B^i_{ji}, B^j_{ji}\}\subseteq UB^{\text{aux}}_i(\X)$}
            \State $B_i^{\text{aux}} \gets B_i^{\text{aux}}\setminus \{B^i_{ji},B^j_{ji}\}$
            \State $B_j^{\text{aux}} \gets B_j^{\text{aux}}\setminus \{B^i_{ji},B^j_{ji}\}$
        \EndIf
    \EndFor
\EndFor
\For{ every vertex $i$}
    \State $B_i \gets B_i^{\text{aux}}$
\EndFor

\end{algorithmic}
\end{algorithm}

Next we show that Algorithm~\ref{Algo:InitSimpleGreedyINF} terminates in an allocation satisfying Properties (1), (2) and (3.3). 
\begin{lemma}
    The allocation returned by Algorithm~\ref{Algo:InitSimpleGreedyINF} satisfies Properties (1), (2) and (3.3).
\end{lemma}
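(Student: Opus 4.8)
The plan is to verify the three properties by analyzing the three phases of Algorithm~\ref{Algo:InitSimpleGreedyINF} in order: the first greedy ``while'' loop (lines 3--7), the restructuring ``while'' loop that creates non-envied vertices (lines 8--17), and the final cleanup loops that fix the sets $B_i$ (lines 18--24). Throughout, I would maintain as a loop invariant that the current allocation is a partial EFX orientation in which each $X_i$ is a single bundle from $B_i^{\text{aux}}$, and that Property~(2) holds with respect to the current auxiliary sets $UB_i^{\text{aux}}(\X)$. The key subtlety, already flagged in the text, is the asymmetry in how bundles are offered: a bundle $B^i_{ji}$ (the bundle $i$ picks when $j$ EFX-cuts) is offered \emph{first} to $i$ and only then to $j$, which is exactly what the \textbf{if/else} in lines 4--6 enforces.

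\textbf{Property (1) and (2) after the first loop.} First I would argue that after lines 3--7 we have an EFX orientation satisfying (2). Each vertex holds at most one adjacent bundle, so it is an orientation. For EFX: suppose $j$ envies $i$; since edges in $X_i$ that are irrelevant to $j$ would immediately break EFX (Observation~\ref{observation:orientation}), $X_i$ must be one of the bundles shared by $i$ and $j$. Because of the prioritized offering, $j$ could only have received $B^i_{ij}$ or $B^j_{ij}$ or $B^j_{ji}$ but \emph{not} $B^i_{ji}$ (that one was reserved for $i$ first, and $i$ would have grabbed it over any worse bundle). Hence whichever of $B^i_{ij},B^j_{ij}$ the pair ``agreed'' to EFX-cut with $i$ as cutter, $i$ holds a bundle at least as good as the bundle $i$ keeps in its own EFX-cut, so $i$ is EFX-satisfied against $j$. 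Property~(2) is immediate from the loop guard: the loop exits only when no vertex strictly prefers an available bundle in its auxiliary set, and the auxiliary sets are defined (Definition~\ref{def:auxB_iCase3}) precisely so that an unallocated bundle of the eventual $B_i$ is available to be offered.

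\textbf{Maintaining (1)--(2) through the restructuring loop, and getting (3.3).} Next I would handle lines 8--17. When we process an envied vertex $a$ with $b=p_a(\X)$ and all of $b$'s neighbors envied, we empty $X_b$, shrink $B_b^{\text{aux}}$ to the bundles where $b$ itself EFX-cuts, and correspondingly remove from each neighbor $r$ the bundles $B^b_{rb},B^r_{rb}$ (the partition where $r$ cuts the $b$--$r$ edges is discarded, fixing the $b$-cutter partition). Then the inner greedy loop (lines 14--17) re-runs. I claim: (i) no previously non-envied vertex becomes envied -- a vertex only changes its bundle after a neighbor releases something strictly worse than what that neighbor just grabbed, so no new envy is introduced, and $b$'s released bundle, once re-grabbed, only goes to the vertex that was already envying through it; (ii) $b$ becomes non-envied and \emph{stays} non-envied, because $B_b^{\text{aux}}$ now contains only bundles $b$ is the cutter of, so against any neighbor $r$, $b$ holds something at least as good as its own EFX-cut keep-bundle; (iii) the envy towards $b$ is transferred to $c=p_b(\X)$, so the total number of envied vertices does not increase and strictly decreases by one (namely $b$). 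This last point gives termination of the outer loop, and upon termination the guard of line 8 gives Property~(3.3) directly. One must also check that shrinking the neighbors' auxiliary sets cannot violate Property~(2) for them: a neighbor $r$ that held $B^b_{rb}$ or $B^r_{rb}$ must have that bundle released before shrinking, which the emptying of $X_b$ and the inner loop handle; and $r$ is then free to pick from the remaining bundles.

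\textbf{The final cleanup and the main obstacle.} Finally, lines 18--24 commit $B_i \gets B_i^{\text{aux}}$ after deleting, for each edge-pair $(i,j)$, the ``$j$-cutter'' partition whenever an $i$-cutter bundle is already allocated or all four bundles are still unallocated. I would verify that this deletion (a) never removes a currently allocated bundle -- if $B^i_{ji}$ or $B^j_{ji}$ were allocated we would instead be in the first branch keeping consistency, so the definition is set up so the deleted partition is fully unallocated -- and (b) leaves exactly one partition per adjacent pair in both $B_i$ and $B_j$, so Property~(1)'s structural clause (``$X_i$ is a union of bundles from $B_i$'', here a single bundle) is satisfied, and Property~(2) still holds since we only delete unallocated bundles that nobody was offered. \textbf{The hard part} will be the bookkeeping in this last step: showing that the three cases in the \textbf{if} of line 20 are exhaustive and mutually consistent across the two endpoints (so that $B_i$ and $B_j$ agree on which partition survives), and that the non-envied vertices created in the middle loop still have their ``lock'' respected after this global cleanup -- i.e., that line 20's deletion never reinstates a $B^b_{rb}$-type bundle into a neighbor's set in a way that could later re-envy $b$. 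I expect this to reduce to a careful case analysis showing that for a locked vertex $b$, every adjacent pair $(b,r)$ already has its surviving partition fixed to the $b$-cutter one before line 18, so line 20 is a no-op on those pairs.
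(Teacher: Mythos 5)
Your overall architecture matches the paper's proof (phase-by-phase invariants, the priority of offering $B^i_{ji}$ to $i$ before $j$, locking newly non-envied vertices by shrinking their auxiliary sets, termination via a monotonically decreasing count of envied vertices, and Property~(3.3) falling out of the loop guard). However, the load-bearing step in your Phase-1 argument is wrong as stated. You claim that $j$ ``could not have received $B^i_{ji}$'' because it was reserved for $i$. In fact the \textbf{else} branch of lines 4--6 explicitly assigns $B^i_{ji}$ to $j$ whenever $i$ declines it (i.e., whenever $v_i(B^i_{ji})\leq v_i(X_i)$). The correct conclusion — and the linchpin of the paper's proof — is the opposite of non-receipt: \emph{if} $j$ ever receives $B^i_{ji}$, then $i$ already holds something worth at least $v_i(B^i_{ji})$, hence $j$ is not envied at all; and this persists because $i$ can never later be selected as the confiscated vertex $b$ while it has the non-envied neighbor $j$. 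Without this observation your case analysis for EFX after the greedy loop does not go through, and your argument also checks the wrong direction: when $j$ envies $i$, the nontrivial condition is that $j$ is EFX-satisfied against $i$ (the non-envious direction is automatic by monotonicity), whereas you conclude that ``$i$ is EFX-satisfied against $j$.''

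A second concrete gap is in the restructuring loop. When $B_r^{\text{aux}}$ is shrunk for a neighbor $r$ of $b$ (removing $B^b_{rb},B^r_{rb}$), you assert that any such bundle held by $r$ ``must have that bundle released before shrinking, which the emptying of $X_b$ and the inner loop handle.'' Emptying $X_b$ does not touch $X_r$, and nothing in the algorithm releases $r$'s bundle at that point. The actual argument is that $r$ \emph{cannot} be holding $B^b_{rb}$ or $B^r_{rb}$ at that moment: holding $B^b_{rb}$ would make $r$ non-envied by the observation above, and holding $B^r_{rb}$ would make $r$ non-envied because $v_b(X_b)\geq v_b(B^b_{rb})\geq v_b(B^r_{rb})$; either contradicts the selection of $b$, which requires all of $b$'s neighbors to be envied. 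Relatedly, your claim that $b$ stays non-envied ``because $b$ holds something at least as good as its own keep-bundle'' again argues the wrong direction (it bounds what $b$ envies, not whether $b$ is envied); the persistence of $b$'s non-envied status must be derived from the shrunken $B_b^{\text{aux}}$ together with the observation, plus the fact that auxiliary sets are only modified on pairs whose endpoints are both envied. Finally, you correctly identify the line 18--24 cleanup as requiring a consistency check but defer it; the paper resolves it via the \textbf{if} guard of line 24 and the fact that two adjacent vertices can never hold bundles from two different EFX-cuts of their common edges.
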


\begin{proof}

    We first give a useful observation:
    \begin{observation}
    \label{obs:Non-enviedWithForbidden}
        For each pair of vertices $i,j$, if at any step of Algorithm~\ref{Algo:InitSimpleGreedyINF}, vertex $j$ receives $B^i_{ji}$, then $j$ is not envied. 
    \end{observation}
    \begin{proof}
        The reason is that $j$ may receive $B^i_{ji}$ only in lines 6 and 18, in which cases $v_i(X_i)\geq v_i(B^i_{ji})$. Further note that the only case that a value of an agent may decrease is in line 11. We argue that $i$ cannot be the vertex $b$ (line 10), if $X_j = B^i_{ji}$. The reason is that $i$ may be vertex $b$, only when all of its neighbors are envied. However, $i$ has a neighbor, namely $j$, that is not envied. 
    \end{proof}
    
    We start by showing that after the termination of the first ``while'' (lines 4-8) Properties (1) and (2) are satisfied. Consider any vertex $j$ envied by vertex $i$, after the termination of the first ``while''. By Observation~\ref{obs:Non-enviedWithForbidden}, vertex $j$ is not assigned $B^i_{ji}$, therefore it is assigned 
    $B^j_{ij}$. Then $i$ gets a bundle of value at least of $B^i_{ij}$: either $B^i_{ij}$ is assigned to $i$, or $B^i_{ij}$ is available and since the ``while'' condition (line 4) is not satisfied, $i$ receives a bundle at least as good. So, $i$ is EFX-satisfied against $j$. Property (2) is trivially satisfied at the end of the first ``while'' since the condition of line 4 is not true. Note that this ``while'' will terminate since the social welfare strictly increases at each iteration.

    Similarly, Properties (1) and (2) hold at the end of the second ``while'' (lines 9-21), since its last subroutine (lines 16-20) is the same with the first ``while'' (lines 4-8). Regarding Property (3.3), this is trivially satisfied at the end of the second ``while'' since the condition in line 9 is not satisfied.

    Regarding the termination of the algorithm, it is sufficient to show that whenever a vertex $b$ becomes non-envied, it remains that way. This would in turn mean that no vertex $a$ can satisfy the ``while'' condition (line 9) more than once (the termination of the two ``while'' in lines 4-8 and 16-20 is discussed above). When a vertex $b$ is considered, we shrink the set $B_b^{\text{aux}}$ to contain only the two bundles formed when $b$ EFX-cuts the common edges with each of her neighbors, and we keep the same bundles in the $B_r^{\text{aux}}$ set of each neighbor $r$ of $b$. By Observation~\ref{obs:Non-enviedWithForbidden}, $b$ cannot be envied as long as $B_b^{\text{aux}}$ does not change any further. This cannot happen, because the $B_i^{\text{aux}}$ sets may only alter with respect to bundles where both endpoints were envied before the ``while'' (this was the case with $b$ and all its neighbors $r$), and clearly $b$ is not envied anymore. Therefore, since vertex $b$ will remain non-envied until the end of the ``while'', vertex $a$ cannot be considered again. 

    Finally, we need to show that at each step of the algorithm it holds $X_i\in B_i^{\text{aux}}$ for all $i$. $B_i^{\text{aux}}$ may change inside the second ``while'' (lines 12-15), and in lines 22-29. In the latter case, $X_i\in B_i^{\text{aux}}$ holds for all $i$ due to the ``if'' statement in line 24.\footnote{Note that it is not possible that two adjacent vertices $i,j$, receive both bundles from different EFX-cuts, e.g., $X_i=B_{ij}^i$ and $X_j=B_{ji}^j$, because if for instance $i$ first receives $B_{ij}^i$, then it holds that $B_{ji}^j\notin UB^{\text{aux}}_j(\X)$ based on Definition~\ref{def:auxB_iCase3}.} In the former case, we need to show that for any neighbor $r$ of $b$, $X_r$ cannot be $B_{rb}^b$ or $B_{rb}^r$, so $X_r$ belongs to the updated $B_r^{\text{aux}}$. If $X_r=B_{rb}^b$, by Observation~\ref{obs:Non-enviedWithForbidden}, $r$ was non-envied before the ``while'', and the same holds if $X_r=B_{rb}^r$, because $b$ values $X_b$ at least as much as $B_{rb}^b$, which in turn values at least as much as $B_{rb}^r$. In any case $r$ would be non-envied, but this cannot happen because then $a$ would not satisfy the ``while'' condition (line 9). Therefore, $X_r$ cannot be neither $B_{rb}^b$ nor $B_{rb}^r$.

    The last for-loops (lines 22-29) guarantee the properties for the $B_i$ sets. Overall, the algorithm terminates by satisfying Properties (1), (2) and (3.3). 
\end{proof}

\begin{observation}
\label{obs:Case3NoUnallocated}
    After Algorithm~\ref{Algo:InitSimpleGreedyINF}, between any two adjacent vertices there are at most two unallocated bundles.
\end{observation}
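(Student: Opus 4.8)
The plan is to reduce the claim to the structural fact that, in the allocation produced by Algorithm~\ref{Algo:InitSimpleGreedyINF}, each pair of adjacent vertices has its common edges split in exactly one way. Concretely, I would first argue that for every pair of adjacent vertices $i,j$ the finalized set $B_i$ (fixed in the last for-loop of Algorithm~\ref{Algo:InitSimpleGreedyINF}) contains exactly the two bundles of one of the two EFX-cut partitions $(B^i_{ij},B^j_{ij})$ or $(B^i_{ji},B^j_{ji})$ of $E_{ij}$, and no other bundle supported on $E_{ij}$; the same partition lies in $B_j$. This is what lines~22--29 of the algorithm are designed to enforce: starting from $B^{\text{aux}}_i$, which per pair contains either both partitions or (after lines~12--15) only one, and observing that every modification of the auxiliary sets inserts or deletes the two bundles of an EFX-cut partition as a block, one checks that after lines~22--29 exactly one block per pair remains. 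At least one block remains because the finalized $B_i$ must still cover every edge relevant to $i$; at most one remains by a case analysis on the condition of line~24 that uses Definition~\ref{def:auxB_iCase3} (the exclusion rule defining $UB^{\text{aux}}$) and the already-established invariant that the allocated bundle of each vertex lies in its current auxiliary set --- in particular, once a vertex has been turned non-envied its relevant partitions are frozen by Observation~\ref{obs:Non-enviedWithForbidden}, so the ``wrong'' block is never reinstated and the surviving block is never deleted.

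Granting this structural fact, the observation is immediate. Any bundle supported on $E_{ij}$ consists solely of edges that are irrelevant to every vertex $\ell\notin\{i,j\}$; since by Property~(1) each allocated bundle $X_\ell$ is a union of bundles from $B_\ell$, and the bundles of $B_\ell$ are supported on edges relevant to $\ell$, no bundle supported on $E_{ij}$ is ever assigned to a vertex other than $i$ or $j$. Hence the only bundles supported on $E_{ij}$ that appear anywhere in $\X$ are the (at most) two bundles of the surviving partition in $B_i$, and in particular at most two of them are unallocated, which is exactly the claim.

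I expect the only genuine work to be the first step --- the bookkeeping verification that lines~22--29 leave precisely one EFX-cut partition per adjacent pair. There is no new idea there, but it requires carefully tracking how lines~12--15 (triggered whenever a vertex is made non-envied) and lines~25--26 interact, the precise exclusion rule in Definition~\ref{def:auxB_iCase3}, the order in which the ordered pairs $(i,j)$ and $(j,i)$ are visited, and the invariant that each vertex's allocated bundle stays inside its auxiliary set. Once that is in place, the remainder is a one-line consequence of Property~(1) and the irrelevance of $E_{ij}$ to vertices outside $\{i,j\}$.
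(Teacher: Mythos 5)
Your proposal is correct and matches the paper's implicit reasoning: the paper states this observation without proof, treating it as an immediate consequence of the fact (which you set out to verify) that lines 22--29 of Algorithm~\ref{Algo:InitSimpleGreedyINF}, together with the resets in lines 12--15 and the invariant $X_i\in B_i^{\text{aux}}$, leave exactly one two-bundle EFX-cut partition per adjacent pair in the finalized $B_i$ sets. Your second step is harmless but redundant --- once only two bundles supported on $E_{ij}$ survive in $B_i$, at most two can be unallocated no matter where anything is assigned.
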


The rest of the steps are the same for all three cases. 

\subsection{Step 2 - Satisfying Property (4)} 
This step starts with the initial allocation constructed in Step 1. In Step 2, the initial allocation changes according to Algorithm~\ref{Algo1(2,inf)} in order to additionally satisfy Property (4), which needs the following definition:

\begin{definition}\label{def:UNPB} 
    For any envied vertex $i$, let $T=\{X_{p_i(\X)}\} \cap B_i$ be the edges that $p_i(\X)$ receives in $\X$ that are adjacent to $i$, if any (if $\X$ is an initial allocation from Step 1, then $T$ is either a bundle from $B_i$ that $p_i(\X)$ gets, or the empty set, if $p_i(\X)$ gets edges irrelevant to $i$). Let $\hat{U}B_i(\X)=UB_i(\X)\cup \{T\}$. We define the most valued set of potentially unallocated non-parallel bundles as
    \begin{eqnarray*}
        &&\UNPB_i(\X) \in 
        \quad \argmax_{S \subseteq \hat{U}B_i(\X)} \{v_i(S)| \text{ any $B_1,B_2 \in S$ are not parallel}\}\,\text{\footnotemark}.
    \end{eqnarray*}
    \footnotetext{Recall that $v_i(S)=v_i(\cup_{B\in S} B)$}
    where $\UNPB_i(\X)$ is chosen to have the {\em maximum possible cardinality} with respect to the number of bundles from $\hat{U}B_i(\X)$. This is the best set of non-parallel bundles that $i$ could get if $X_i$ would be given to $p_i(\X)$ (who envies $i$), and $X_{p_i(\X)}$ becomes available.
\end{definition}

\begin{definition}\label{def:UNPBnonenvied} 
    For any non-envied vertex $i$, let $\hat{U}B_i(\X)=UB_i(\X)\cup \{X_i\}$. We define the most valued set of potentially unallocated non-parallel bundles as
    \begin{eqnarray*}
        &&\UNPB_i(\X) \in 
        \quad \argmax_{S \subseteq \hat{U}B_i(\X)} \{v_i(S)| \text{ any $B_1,B_2 \in S$ are not parallel}\}\,.
    \end{eqnarray*}
    where $\UNPB_i(\X)$ is chosen to have the {\em maximum possible cardinality} (considering the number of bundles from $\hat{U}B_i(\X)$). This is the best set of non-parallel bundles that $i$ could get without changing the allocation of the other vertices.
\end{definition}

\begin{tcolorbox}[colback=black!5!white,colframe=black!75!black]
Additional property of the allocation $\X$  after the 2nd step.
\begin{itemize}
    \item[{(4)}] For any vertex $i$, $v_i(X_i) \geq v_i\left(\UNPB_i(\X)\right)$.
\end{itemize}
\end{tcolorbox}

The necessity of Property (4) is so that we can allocate  the remaining unallocated bundles at Step 3 in non-adjacent vertices. Note that for any envied vertex $i$ we cannot allocate any of its adjacent unallocated bundles  to $i$ additionally to what it has. So, the high level idea of Step 2 is that either $UB_i(\X)$, or more accurately $\UNPB_i(\X)$, is valued for $i$, so we allocate it to $i$, and $i$ releases the bundle he had which caused the envy, or $i$ doesn't value it that much so we can give $\UNPB_i(\X)$, or subsets of it, to other agents. 
 In the same spirit, the non-envied vertices receive the best set of their adjacent available non-parallel bundles, so that allocating the rest adjacent bundles to other vertices would preserve EFX. The idea is similar to Algorithm 2 of \cite{EFXsimplegraphs}, but adjusted in the multigraph setting.

\begin{algorithm}
\caption{Reducing Envy Algorithm}
\label{Algo1(2,inf)}
\raggedright\textbf{Input:} An allocation $\X$ satisfying Properties (1) and (2). \\
\textbf{Output:} An allocation satisfying Properties (1), (2) and (4). The allocation also preserves any of the Properties (3.1),(3.2),(3.3) if they were initially satisfied.
\begin{algorithmic}[1] %
\While{ $\exists$ non-envied vertex $k$ s.t. $v_k(\UNPB_k(\X))> v_k(X_k)$ or 
\State ($v_k(\UNPB_k(\X)) = v_k(X_k)$ and $|\UNPB_k(\X)| > |X_k|)$}
    \State $X_k \gets \UNPB_k(\X)$ 
\EndWhile

\While{$\exists$ envied vertex $i$ s.t. $v_i(\UNPB_i(\X))> v_i(X_i)$}
    \If{$\{X_{p_i(\X)}\} \cap B_i \in \UNPB_i(\X)$} 
        \State $X_{p_i(\X)} \gets \emptyset$
    \EndIf
    \State $X_i \gets \UNPB_i(\X)$ 
    \While{$\exists$ vertex $j$ and $S \in UB_i(\X)$ s.t. $v_j(S) > v_j(X_j)$} 
        \State $X_j \gets S$ 
    \EndWhile
    \While{ $\exists$ non-envied vertex $k$ s.t. $v_k(\UNPB_k(\X))> v_k(X_k)$ or 
\State ($v_k(\UNPB_k(\X)) = v_k(X_k)$ and $|\UNPB_k(\X)| > |X_k|)$}
        \State $X_k \gets \UNPB_k(\X)$ 
    \EndWhile
\EndWhile
\end{algorithmic}
\end{algorithm}

\begin{lemma}\label{terminationAlgo1(2,inf)}
    Algorithm~\ref{Algo1(2,inf)} terminates and outputs an allocation $\X$ that satisfies Properties (1), (2) and (4). It further preserves any of the Properties (3.1),(3.2) and (3.3), if they were satisfied before applying Algorithm~\ref{Algo1(2,inf)}.
\end{lemma}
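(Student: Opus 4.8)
The plan is to mirror the proof of Lemma~\ref{terminationAlgo1(2,2)} almost line for line, replacing single edges by non-parallel \emph{bundles} and the set $\UNP_i(\X)$ by $\UNPB_i(\X)$ throughout. Four things need to be shown: (i) every iteration of each of the three ``while'' loops of Algorithm~\ref{Algo1(2,inf)} preserves Properties (1) and (2); (ii) no initially non-envied vertex ever becomes envied, which immediately yields preservation of Properties (3.1), (3.2) and (3.3); (iii) Property (4) holds upon termination; and (iv) the algorithm terminates.

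First I would handle the two identical ``non-envied'' loops (lines 1--3 and 12--14). When a non-envied vertex $k$ is reassigned $\UNPB_k(\X)$, by Definition~\ref{def:UNPBnonenvied} this is a union of pairwise non-parallel bundles drawn from $B_k$ (together possibly with the old $X_k$), so Property (1) is preserved for $k$ and, since $k$ only releases bundles relevant to it, for every other vertex as well. For Property (2) I would argue that each neighbour $\ell$ of $k$ sees at most one bundle of its common partition with $k$ inside the new $X_k$ (no two parallel bundles appear), and that by Property (2) for $\ell$ together with the fact that $k$ was not envied, $\ell$ does not value that bundle above $X_\ell$; hence $k$ stays non-envied and the bundles $k$ releases are worth no more to any vertex than its own bundle. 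Property (2) for $k$ itself is immediate from the definition of $\UNPB_k(\X)$, since $X_k$ is one of the candidate sets.

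Next I would treat the ``envied'' loop (lines 4--15). When an envied vertex $i$ takes $\UNPB_i(\X)$, the only bundle relevant to $i$ that can be stripped from another vertex is $T=\{X_{p_i(\X)}\}\cap B_i$ of Definition~\ref{def:UNPB}; by Observation~\ref{observation:orientation}, applied to the current EFX orientation, $T$ consists only of edges relevant to $i$ and $p_i(\X)$, and lines 5--7 release it. After line 8 only $p_i(\X)$ can envy $i$, and Property (2) survives for every other vertex, so the inner loop of lines 9--11 fires first for $j=p_i(\X)$, returning to it the now-unallocated bundle that caused the envy and thereby restoring Property (2) for $p_i(\X)$; moreover that inner loop can only turn envied vertices into non-envied ones, since a vertex changes its bundle only after a neighbour releases a bundle while taking a strictly better one, so no new envy is created. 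Hence after lines 4--11 the number of envied vertices strictly drops while no non-envied vertex becomes envied, and the trailing loop of lines 12--14 behaves as before. This establishes (i), (ii) and the preservation of (3.1)--(3.3). For (iii): if $i$ is envied at termination, failure of the condition in line 4 gives $v_i(X_i)\ge v_i(\UNPB_i(\X))$; if $k$ is non-envied, failure of the condition in line 12 (or line 1, if the main loop never ran) leaves only the case $v_k(\UNPB_k(\X))=v_k(X_k)$ with $|\UNPB_k(\X)|\le|X_k|$, and by maximality of $\UNPB_k(\X)$ this forces $X_k$ to contain one bundle of the common partition with each neighbour (an orientation cannot have both bundles of a pair's partition allocated elsewhere), so $v_k(X_k)\ge v_k(\UNPB_k(\X))$. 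As in Lemma~\ref{terminationAlgo1(2,2)}, this last point also gives the bundle analogue of Observation~\ref{obs:AllEdgesNon-enviedAllocated} that Step~3 relies on. For (iv) I would note that the loops of lines 1--3, 9--11 and 12--14 each strictly increase the social welfare, or keep it fixed while strictly increasing the number of bundles held by some vertex, and that the main loop of lines 4--15 runs at most $n$ times since each pass makes at least one more vertex non-envied and never the reverse.

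The hard part will be the bookkeeping around the bundle structure of Property (1): one must verify that the partition of each pair's common edges stays fixed, that $\UNPB_i(\X)$ and $\UNPB_k(\X)$ only ever hand a vertex a union of bundles drawn from its own $B_i$ plus (in the envied case) the single released bundle $T$, and that ``non-parallel bundles'' plays exactly the role of ``non-parallel edges'' in Section~\ref{sec:2parallel}. Once these structural invariants are in place, the rest is a direct transcription of the proof of Lemma~\ref{terminationAlgo1(2,2)}.
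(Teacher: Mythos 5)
Your proposal is correct and follows essentially the same route as the paper's own proof: the same loop-by-loop invariant argument for Properties (1) and (2), the same observation that no non-envied vertex ever becomes envied (which yields preservation of (3.1)--(3.3)), the same case analysis via the maximality of $\UNPB_k(\X)$ for Property (4) together with the bundle analogue of Observation~\ref{obs:AllEdgesNon-enviedAllocated}, and the same potential argument (social welfare or bundle cardinality) for termination.
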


\begin{proof}
    We first argue that Properties (1) and (2) are satisfied after any outer ``while'', i.e., the ones at lines 1-4 and 5-17, if they were satisfied before. Meanwhile, we show that any non-envied vertex cannot be envied after Algorithm~\ref{Algo1(2,inf)}. 
    
    Consider any round of the first ``while'' (lines 1-4), and suppose that Properties (1) and (2) were satisfied before this round. In the current round, a non-envied vertex $k$ gets a new bundle that it values at least as much as its previous bundle, which is composed of non-parallel bundles among the unallocated bundles and his own bundle (see Definition~\ref{def:UNPBnonenvied}). For each vertex $\ell$ adjacent to $k$, the new bundle of $k$ contains a single bundle relevant to $\ell$, that $\ell$ doesn't value more than its own bundle, due to Property (2) and to the fact that $k$ was non-envied. So, $k$ remains non-envied and Property (1) is preserved for the rest of the vertices. Regarding vertex $k$, by the definition of the $\UNPB_k(\X)$ set and the fact that the value of $k$ does not decrease in the new allocation, Property (1) is satisfied for $k$, as well. Property (2) is trivially satisfied for vertex $k$ by the definition of the $\UNPB_k(\X)$. Property (2) is also satisfied for the rest of the vertices since it was satisfied before the current round and due to the fact that $k$ was not envied (so, any bundles released by $k$ are not more valued to what the other vertices get). The same arguments hold for the ``while'' of lines 13-16, which is identical to the one in lines 1-4, if Properties (1) and (2) are satisfied before it is executed, as we will show next. 
    
    Regarding the ``while'' in lines 5-17, we will show that if Properties (1) and (2) were satisfied before each round of that ``while'', they are satisfied when the inner ``while'' at lines 10-12 terminates. In lines 5-9, an envied vertex $i$ receives a bundle that may be only envied by $p_i(\X)$, due to Property (2) and the definition of the set $\UNPB_i(\X)$ (see Definition~\ref{def:UNPB}). Moreover, Property (2) is satisfied for any other vertex apart from $p_i(\X)$, since the only bundle that was released was a bundle between $i$ and $p_i(\X)$ (see Observation~\ref{observation:orientation}). Therefore, the inner ``while'' in lines 10-12 will first run for $j=p_i(\X)$, where $p_i(\X)$ will receive the bundle that $i$ released and caused the envy towards $i$ in the first place. Overall, in lines 5-12 the number of envied vertices reduces by at least 1 (namely vertex $i$ becomes non-envied), and moreover in the ``while'' in lines 10-12 only an envied vertex may become non-envied and not vice versa, since a vertex may only change its assignment, after its neighbor releases a bundle while receiving a better one. Additionally, after the ``while'' in lines 10-12, Property (1) is satisfied since only relevant bundles are given to the vertices, their values may only increase and no more envy is introduced, and Property (2) is satisfied by the condition of the inner ``while'' (line 10).  

    Hence, we have showed that Properties (1) and (2) are satisfied at the end of Algorithm~\ref{Algo1(2,inf)}, and any initially non-envied vertex remains non-envied after Algorithm~\ref{Algo1(2,inf)}. The latter automatically means that if any of the Properties (3.1),(3.2) and (3.3) were satisfied before Algorithm~\ref{Algo1(2,inf)}, they are preserved after its termination.

    Regarding Property (4), consider first an envied vertex $i$ after the termination of Algorithm~\ref{Algo1(2,inf)}. It is trivial to see that Property (4) is satisfied for $i$, because $i$ does not satisfy the condition of line 5, otherwise Algorithm~\ref{Algo1(2,inf)} would not have terminated. We turn our attention to some non-envied vertex $k$ after the termination of Algorithm~\ref{Algo1(2,inf)}. Obviously $k$ does not satisfy the condition of line 13 (or the same condition of line 1, if the ``while'' in lines 5-17 has not been executed). This would mean that either i) $v_k(\UNPB_k(\X))< v_k(X_k)$, or 
    ii) $v_k(\UNPB_k(\X)) = v_k(X_k)$ and $|\UNPB_k(\X)| \leq |X_k|$. The former case is not possible due to the definition of $\UNPB_k(\X)$ that considers $X_k$ as a possible bundle. So, focusing on the latter case, due to the maximality of $\UNPB_k(\X)$, the set allocated to $k$ should include at least one bundle related to each neighbor (since the allocation is an orientation and it is not possible to have allocated both bundles to other vertices); we add this as an observation next to be used later. 

    \begin{observation}
    \label{obs:AllEdgesNon-enviedAllocatedINF}
        Let $\X$ be the allocation after Algorithm~\ref{Algo1(2,inf)}. For any non-envied vertex $k$, $X_k$ contains one bundle that is relevant to each of its neighbors. 
    \end{observation}

    Finally, we argue that Algorithm~\ref{Algo1(2,inf)} terminates. Note that at each round of the ``while'' at lines 1-4 or 13-16, either the social welfare (i.e., the aggregate value of all vertices) strictly increases, or for some agent $i$, the cardinality of $X_i$ with respect to its bundles from $B_i$ increases. So both ``while'' terminate (in pseudopolynomial time). The ``while'' in lines 10-12 strictly increases the social welfare, so for the same reason it terminates. The ``while'' in lines 5-17, runs at most $n$ times since at each round at least one vertex becomes non-envied and no non-envied vertex becomes envied. 
\end{proof}

\subsection{Step 3 - Final Allocation} At this final step, we start by an allocation satisfying Properties (1)-(4) (derived by Step 2) where Property (3) corresponds to one of the Properties (3.1),(3.2),(3.3) related to one of the three different cases, and we properly allocate the unallocated bundles, such that the complete allocation is EFX.

 We remark that the bundles allocated at Step 2, remain as they are, and we only allocate once and for all the unallocated bundles, to vertices other than the endpoints (so we only extend the bundles allocated in Step 2, in order to derive a full EFX allocation). The reason of allocating those remaining bundles to vertices that are not the endpoints is because if we would allocate them to one of the endpoints, this could possibly violate the EFX condition: e.g., if an envied vertex is allocated more edges, the envious vertex would not be EFX-satisfied anymore, and if a non-envied vertex receives more edges, those would be parallel to a bundle that he already has and may cause envy to one of his neighbors who even may not be EFX-satisfied anymore.  

Algorithm~\ref{Algo(FinalAllocationINF)} is used for deriving the complete EFX allocation. For this we define the set of unallocated bundles as   
    $UB(\X) = \bigcup_{i\in N} UB_i(\X)$, given an allocation $\X$. Algorithm~\ref{Algo(FinalAllocationINF)} does a simple thing: it allocates any unallocated bundle to a non-envied vertex that remains non-envied after this allocation. The existence of such non-envied vertices that are used for ``parking'' the unallocated bundles is guaranteed for each case separately by using the restriction of that case and Properties (3) and (4).
This is why we posed the graphical restrictions in the three cases, so that there is always a way to allocate those bundles without breaking EFX. Therefore, the unallocated bundles that are adjacent to an envied vertex $i$ are assigned to vertices that $i$ doesn't envy, even if they receive all the unallocated bundles adjacent to $i$ additionally to their bundle, under always the restriction of not containing parallel bundles (Property (4)). Similarly, the remaining unallocated bundles that are adjacent to some non-envied vertex $i$ are assigned to vertices that are not $i$'s neighbors and therefore $i$ has no value for their allocated bundle and any unallocated edges (Property (4)).

\begin{algorithm}
\caption{Complete EFX Allocation}
\label{Algo(FinalAllocationINF)}

\raggedright\textbf{Input:} An allocation $\X$ satisfying Properties (1)-(4).\\
\textbf{Output:} A complete EFX allocation.
\begin{algorithmic}[1]
\For{$\exists\; S \in UB(\X)$ with endpoints $i,j$} 
    \State Let $k$ be a non-envied vertex such that $X_k \cup S$ contains no parallel bundles, and  
         \Statex \hspace{25pt} for any $\ell\in\{i,j\}$, $v_{\ell}(X_{\ell}) \geq v_{\ell}\left(X_k \cup S\right)$ 
    \State $X_k \gets X_k\cup S$
\EndFor
\end{algorithmic}
\end{algorithm}

In the following lemma we show that Algorithm~\ref{Algo(FinalAllocationINF)} provides a complete EFX allocation for bipartite multigraphs and for multigraphs under the restrictions in the statements of Theorems~\ref{result2}~and~\ref{result2Girth}.

\begin{lemma}
For bipartite multigraphs, or for multigraphs where either there are at most $\lceil \frac{n}{4} \rceil - 1$ neighbors per vertex, or the shortest cycle with non-parallel edges has length at least 6, Algorithm~\ref{Algo(FinalAllocationINF)} returns a complete EFX allocation. 
\end{lemma}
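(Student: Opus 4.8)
The plan is to carry over, essentially verbatim, the proof of the analogous lemma for at most two parallel edges in Section~\ref{sec:2parallel}, replacing ``edge'' by ``bundle'' throughout. Let $\X$ be the allocation fed to Algorithm~\ref{Algo(FinalAllocationINF)} (the output of Step~2). First I would record how many unallocated bundles there can be between a pair of adjacent vertices: by Observation~\ref{obs:AllEdgesNon-enviedAllocatedINF} every bundle of $UB(\X)$ has an envied endpoint, and Observations~\ref{obs:Case1NoUnallocated},~\ref{obs:Case2NoUnallocated},~\ref{obs:Case3NoUnallocated} then bound this number by $2$, $3$, $2$ respectively in the three cases. The crux is a bundle version of Claim~\ref{claim: safevertex}: for every $S\in UB(\X)$ with endpoints $i,j$ there is a non-envied vertex $k$ meeting the condition of line~2 of Algorithm~\ref{Algo(FinalAllocationINF)}, and when there are several unallocated bundles with endpoints $i,j$ there are correspondingly many pairwise distinct such $k$. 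Granting this, the lemma follows as in Section~\ref{sec:2parallel}: the line~2 test makes both endpoints EFX-satisfied against $k$ (valuations are monotone, so $v_\ell(X_\ell)\ge v_\ell(X_k\cup S)\ge v_\ell((X_k\cup S)\setminus\{g\})$); every other vertex is either a non-neighbor of the endpoints, hence indifferent to $S$, or was already EFX-satisfied against $k$ and stays so; thus $k$ remains non-envied and EFX is preserved at each step; and since $UB(\X)$ is finite and every one of its bundles is eventually parked, Algorithm~\ref{Algo(FinalAllocationINF)} terminates with a complete EFX allocation.

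The engine behind the safe-vertex claim is the bundle analogue of Observation~\ref{obs:nonNeighbor-Safe}: if $k$ is not a neighbor of an envied endpoint $i$, then $X_k$ carries no bundle relevant to $i$ at the end of Step~2, every bundle relevant to $i$ that is later placed on $k$ is drawn from $UB_i(\X)$, and this family is non-parallel by the line~2 test; hence by Property~(4) and Definition~\ref{def:UNPB}, $v_i(X_i)\ge v_i(\UNPB_i(\X))\ge v_i(X_k\cup S)$, so line~2 is satisfied for $i$; the analogous statement for a non-envied endpoint uses Definition~\ref{def:UNPBnonenvied}. With this, the case analysis is identical to that of Claim~\ref{claim: safevertex}. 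For bipartite multigraphs I take $k=p_i(\X)$, which is non-envied by Property~(3.1) and, being on the same side as $j$, is non-adjacent to $j$ (and distinct from $j$, since if $p_i(\X)=j$ then both bundles of the partition of $E_{ij}$ are allocated by Observation~\ref{obs:AllEdgesNon-enviedAllocatedINF}). For girth at least $6$ I take $k$ to be $p_i(\X)$ or $p_j(\X)$, or a non-envied neighbor of one of them guaranteed by Property~(3.3), and use the absence of non-parallel cycles of length $3,4,5$ to conclude that $k$ is non-adjacent to both endpoints, except when $k=p_j(\X)$ where I invoke Property~(4) for $j$ directly; the only extra point over Section~\ref{sec:2parallel} is the one that already appears there, namely that when both endpoints are envied the two witnesses — one at distance $\le 2$ from $i$, one at distance $\le 2$ from $j$ — must be distinct, else a non-parallel cycle of length $\le 5$ would appear.

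The genuinely new difficulty is the bounded-neighbors case, where Observation~\ref{obs:Case2NoUnallocated} now allows \emph{three} unallocated bundles between two envied vertices. Writing $q\in\{0,1,2\}$ for the number of envied endpoints in $\{i,j\}$ (so $q+1$ bounds the number of unallocated bundles between $i$ and $j$), I would show there are at least $q+1$ non-envied vertices that are neighbors of neither $i$ nor $j$. Since each of $i,j$ has at most $\lceil n/4\rceil-1$ neighbors and $i\in N(j)$, $j\in N(i)$, we have $|N(i)\cup N(j)|\le 2\lceil n/4\rceil-2$, so at least $n-2\lceil n/4\rceil+2$ vertices lie outside $N(i)\cup N(j)$; of these, at most $\lfloor n/2\rfloor-q$ are envied by Property~(3.2) (the $q$ envied endpoints themselves lie in $N(i)\cup N(j)$), leaving at least $n-2\lceil n/4\rceil+2-\lfloor n/2\rfloor+q$ non-envied vertices non-adjacent to both endpoints; a one-line check on $n\bmod 4$ gives $n-2\lceil n/4\rceil-\lfloor n/2\rfloor\ge -1$, so this count is at least $q+1$. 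Each such vertex satisfies line~2 by the engine above, and since the mutually parallel unallocated bundles between $i$ and $j$ must go to distinct vertices, $q+1$ of them suffice. This is precisely why the general case needs $\lceil n/4\rceil-1$ neighbors rather than the $\lfloor n/4\rfloor$ of Section~\ref{sec:2parallel}, and it is the step I expect to require the most care.
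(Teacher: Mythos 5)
Your proposal is correct and follows essentially the same route as the paper: the same safe-vertex claim driven by the non-neighbor observation (Property~(4) plus irrelevance of non-adjacent bundles), the same choice of $k=p_i(\X)$ in the bipartite case, the same distance/girth argument with distinct witnesses $k_i,k_j$, and the same counting argument yielding $q+1$ non-envied non-neighbors in the bounded-degree case (the paper packages the bundle counts as $r+q-2$ with $r\in\{2,3\}$ rather than citing the per-case observations, but the numbers agree). No gaps.
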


\begin{proof}

Let $\X$ be the allocation from Step 2. This algorithm allocates each bundle to a vertex that is not an endpoint of the bundle (vertex $k$ in line 2). We show in the following claim that there is always the vertex $k$ of line 2, therefore Algorithm~\ref{Algo(FinalAllocationINF)} will terminate after allocating all unallocated bundles of $\X$. %
Then, we argue the allocation returned by  Algorithm~\ref{Algo(FinalAllocationINF)} preserves the EFX condition.

\begin{claim}\label{claim: safevertexINF}
For bipartite multigraphs, or for multigraphs where either the shortest cycle with non-parallel edges has length at least 6, or there are at most $\lceil \frac{n}{4} \rceil -1$ neighbors per vertex, there always exists the vertex $k$ of line 2 in Algorithm~\ref{Algo(FinalAllocationINF)}. %
\end{claim}

\begin{proof}
Note that if $r$ is the maximum number of bundles of any partition of parallel edges considered in the sets $B_i$, then the maximum number of unallocated bundles between two adjacent vertices $i,j$, where $q$ expresses the number of envied vertices in $\{i,j\}$, is at most 
\begin{equation}
    r+q-2, \mbox{ where } r=2 \mbox{ for Theorems~\ref{theorem:bipartite}~and~\ref{result2Girth}, and } r=3 \mbox{  for Theorem~\ref{result2}\,,} \label{ValueOfr}
\end{equation}

   This can be seen by Observation~\ref{obs:AllEdgesNon-enviedAllocatedINF}.
We give the following observation to be heavily used in the proof, and then we consider each of the three cases separately.
\begin{observation}
\label{obs:nonNeighbor-SafeINF}
    For any vertex $i$, if vertex $k$ is not $i$'s neighbor, then by Property (4), $v_i(X_i) \geq v_i(S) = v_i\left(S \cup X_k\right)$, for all $S\subseteq UB(\X)$ s.t. $S \cup X_k$ contains no parallel bundles.
\end{observation}

    {\bf Bipartite multigraphs.} Note that in this case by Property (3.1) there is no unallocated bundle between envied vertices and by \eqref{ValueOfr} there is no unallocated bundle between non-envied vertices. So, the only unallocated bundles are between an envied and a non-envied vertex. Consider any bundle $S \in U(\X)$ with endpoints $i,j$, such that $i$ is envied and $j$ is not envied. We will argue that $p_i(\X)$ is the required $k$ vertex: By Property (3.1) it holds that $p_i(\X)$ is non-envied, and  
    by Property (4), no envy can be created to $i$ by giving any unallocated bundles to $p_i(\X)$. Regarding vertex $j$, as an observation note that $j\neq p_i(\X)$, because one bundle between $i$ and $p_i(\X)$ has been allocated to $i$ and causing the envy of $p_i(\X)$, and the other has been allocated to $p_i(\X)$ due to Observation~\ref{obs:AllEdgesNon-enviedAllocatedINF}. Since $j$ is adjacent to $i$ and the graph is bipartite, $j$ is not adjacent to $p_i(\X)$, and by Observation~\ref{obs:nonNeighbor-SafeINF}, no envy can be created to $j$ by giving any unallocated edges to $p_i(\X)$.  

    {\bf At most $\lceil \frac{n}{4} \rceil -1$ neighbors.}

    We argue that for any pair of adjacent vertices $i,j$, there are $q+1$ (where $q$ is as defined in \eqref{ValueOfr}) non-envied vertices that are not adjacent to either $i$ or $j$. By \eqref{ValueOfr} and Observation~\ref{obs:nonNeighbor-SafeINF}, those are sufficient in order to prove the claim. The total number of $i$'s and $j$'s neighbors (including $i$ and $j$) is at most $2(\lceil \frac n4 \rceil - 1)\leq \lceil \frac n2 \rceil - 1$. So the number of vertices that are not adjacent to either $i$ or $j$ is at least $n-\lceil \frac n2 \rceil + 1$. Among them there are at most $\lfloor \frac n2 \rfloor - q$ envied vertices (by Property (3.2)), so the number of non-envied vertices that are not adjacent to either $i$ or $j$ is at least:
    $n-\lceil \frac n2 \rceil + 1 -(\lfloor \frac n2 \rfloor - q) \geq q + 1$.

    {\bf Length of shortest cycle with non-parallel edges at least 6.} This case is proven similarly to the setting with at most two parallel edges, with the only difference that instead of single edges we consider bundles. For the sake of completion we give the full proof here. Consider any bundle $S\in UB(\X)$ with endpoints $i,j$. By \eqref{ValueOfr}, at least one of the endpoints should be envied; so, w.l.o.g., let $i$ be envied. 
    
    If $j=p_i(\X)$ then $j$ must be envied: the reason is that $i$ is envied by $j$, so $i$ has received a bundle relevant to both $i,j$, and by Observation~\ref{obs:AllEdgesNon-enviedAllocatedINF}, $j$ has received the other bundle relevant to $i,j$, contradicting the fact that $S\in UB(\X)$. 
    Therefore, if $j=p_i(\X)$, then $j$ is envied, and there exists at most one unallocated bundle between $i$ and $j$; the other has been allocated to $i$. In that case, the role of $k$ in line 2 is given to either $p_j(\X)$ if it is non-envied, or to some non-envied neighbor of $p_j(\X)$ (which is guaranteed to exist by Property (3.3) when considering $j$). In both cases, $k$ is not $i$'s neighbor (since the shortest cycle with non-parallel edges has length at least 6). If $k\neq p_j(\X)$, then $k$ is not $j$'s neighbor either, and by Observation~\ref{obs:nonNeighbor-SafeINF}, $k$ satisfies the conditions of line 2. If $k=p_j(\X)$, then by also using Property (4) for $j$, $k$ satisfies the conditions of line 2.

    If $j\neq p_i(\X)$, by Property (3.3), either $p_i(\X)$ is non-envied or a neighbor of $p_i(\X)$ is non-envied. In the latter case, that neighbor is defined as $k$ which is not a neighbor of either $i$ or $j$, otherwise a cycle of length less than 6 would exist, which is a contradiction. By Observation~\ref{obs:nonNeighbor-SafeINF}, $k$ is a vertex satisfying the conditions of line 2. In the former case, we define $p_i(\X)$ as $k$, which is not a neighbor of $j$ for the same reason as above, and it holds $v_i(\UNPB_i(\X) \cup X_k)=v_i(\UNPB_i(\X))\geq  v_i(S \cup X_k)$, for all $S\in UB(\X)$ s.t. $S \cup X_k$ contains no parallel bundles. By Property (4) and Observation~\ref{obs:nonNeighbor-SafeINF}, $k$ is a vertex satisfying the conditions of line 2 if no other bundle parallel to $S$ has been given to $k$ before. If $j$ is non-envied, there is only one unallocated bundle between $i,j$ (due to \eqref{ValueOfr}), so the statement of the claim holds. We proceed the analysis with the case that $j$ is also envied and there may be two unallocated bundles between $i$ and $j$. Let $k_i$ be the $k$ defined above, i.e., the non-envied vertex guaranteed by Property (3.3), by considering the envied vertex $i$. Let also $k_j$ be similarly the non-envied vertex corresponding to vertex $j$; for the same reason as above, $k_j$ satisfies the conditions of line 2. Then, $k_i$ and $k_j$, that have distance at most 2 from $i$ and $j$, respectively, should be different, otherwise there would be a cycle of length at most 5. Then, since there are at most two unallocated bundles between $i,j$ (due to \eqref{ValueOfr}), one may be allocated to $k_i$ and the other to $k_j$, so the claim follows for that case, as well. 
\end{proof}

In Claim~\ref{claim: safevertexINF} we showed that for any unallocated bundle of $\X$ (the allocation of Step 2) there is always a non-envied vertex satisfying the conditions of line 2, and so Algorithm~\ref{Algo(FinalAllocationINF)} terminates in a complete allocation. The condition in line 2 guarantees that whenever an unallocated bundle is assigned to a vertex $k$, no new envy is caused towards $k$, and since $k$ is non-envied, EFX is preserved. 
\end{proof}

\section{Conclusion and Open Problems}
In this work we pushed the state-of-the-art regarding one of the most important and academically interesting problems in Fair Division: the existence of EFX allocations. Following the work of Christodoulou et al.~\cite{EFXsimplegraphs}, we consider a graph structure that poses a restriction to valuation functions. Based on the notation of \cite{EFXsimplegraphs}, we consider the $(2, \infty)$-bounded setting, which represents the case where each good is important for at most two agents. This can be seen as a multigraph setting where the vertices correspond to agents and the edges to goods, and each agent is interested only in goods/edges that are adjacent to her. Even if Christodoulou et al.~\cite{EFXsimplegraphs} showed the existence of EFX allocations in simple graphs, i.e., the $(2, 1)$-bounded setting, even for general monotone valuation functions, the generalization to multigraphs poses extra challenges: It is well known \cite{EFXsimplegraphs, OnTheStructureOfEFXOrientationsOnGraphs} that allocating edges to vertices other than their endpoints is inevitable, unless there are high restrictions on the graph structure. However, how to allocate those edges without violating EFX was the most challenging aspect in simple graphs settings \cite{EFXsimplegraphs}, and becomes much harder in the setting of multigraphs.

As a result, we pose some restrictions on the structure of the multigraphs in order to address the above challenges.  
We handle each restriction differently only with respect to some initial partial EFX orientation, where each agent receives a bundle that is relevant to only one of its neighbors. This initial allocation is carefully constructed so that it provides crucial properties that are essential in order to manage to assign edges to vertices that are different from their endpoints by preserving the EFX property, and as a result derive a complete EFX allocation. We remark that after the construction of the initial allocation, we follow a unified approach, which is a generalization of the approach in \cite{EFXsimplegraphs} to multigraphs. This further indicates that hopefully our approach and ideas will be useful to push the state-of-the-art even further.

One important future direction is to drop the structure restrictions on the multigraphs and show the existence of EFX allocations in multigraphs with general monotone valuations. Moreover, extending the existence of EFX allocations to hypergraphs, or other more general or incomparable $(p,q)$-bounded settings, could be a stepping stone towards the ultimate goal of showing EFX existence without any restriction on the setting; note that the setting of hyper-multigraphs corresponds to the unrestricted setting. 

At last, we remark that our approach heavily relies on the cut-and-choose protocol for general monotone valuation functions. It is known that finding such an allocation is hard (see \cite{PlautRough,GoldbergHH23}), so our algorithms are not efficient. Therefore, another aspect to consider is the complexity of finding such allocations, and it may be reasonable to consider approximate EFX allocations that can be computed efficiently. 

\section*{Acknowledgments}
    The research project is implemented in the framework of H.F.R.I call “Basic research Financing (Horizontal support of all Sciences)” under the National Recovery and Resilience Plan “Greece 2.0” funded by the European Union-NextGenerationEU (H.F.R.I. Project Number:15635).

\newpage

\bibliographystyle{plainnat}
\bibliography{refs}

\begin{thebibliography}{40}
\providecommand{\natexlab}[1]{#1}
\providecommand{\url}[1]{\texttt{#1}}
\expandafter\ifx\csname urlstyle\endcsname\relax
  \providecommand{\doi}[1]{doi: #1}\else
  \providecommand{\doi}{doi: \begingroup \urlstyle{rm}\Url}\fi

\bibitem[Afshinmehr et~al.(2024)Afshinmehr, Danaei, Kazemi, Mehlhorn, and Rathi]{afshinmehr2024efxallocationsorientationsbipartite}
Mahyar Afshinmehr, Alireza Danaei, Mehrafarin Kazemi, Kurt Mehlhorn, and Nidhi Rathi.
\newblock Efx allocations and orientations on bipartite multi-graphs: A complete picture, 2024.
\newblock URL \url{https://arxiv.org/abs/2410.17002}.

\bibitem[Akrami et~al.(2023)Akrami, Alon, Chaudhury, Garg, Mehlhorn, and Mehta]{akrami2022efxallocationssimplificationsimprovements}
Hannaneh Akrami, Noga Alon, Bhaskar~Ray Chaudhury, Jugal Garg, Kurt Mehlhorn, and Ruta Mehta.
\newblock {EFX:} {A} simpler approach and an (almost) optimal guarantee via rainbow cycle number.
\newblock In Kevin Leyton{-}Brown, Jason~D. Hartline, and Larry Samuelson, editors, \emph{Proceedings of the 24th {ACM} Conference on Economics and Computation, {EC} 2023, London, United Kingdom, July 9-12, 2023}, page~61. {ACM}, 2023.
\newblock \doi{10.1145/3580507.3597799}.
\newblock URL \url{https://doi.org/10.1145/3580507.3597799}.

\bibitem[Amanatidis et~al.(2020)Amanatidis, Markakis, and Ntokos]{Amanatidis_Markakis_Ntokos_2020}
Georgios Amanatidis, Evangelos Markakis, and Apostolos Ntokos.
\newblock Multiple birds with one stone: Beating 1/2 for {EFX} and {GMMS} via envy cycle elimination.
\newblock \emph{Theor. Comput. Sci.}, 841:\penalty0 94--109, 2020.
\newblock \doi{10.1016/J.TCS.2020.07.006}.
\newblock URL \url{https://doi.org/10.1016/j.tcs.2020.07.006}.

\bibitem[Amanatidis et~al.(2021)Amanatidis, Birmpas, Filos{-}Ratsikas, Hollender, and Voudouris]{TwoValuedInstanses}
Georgios Amanatidis, Georgios Birmpas, Aris Filos{-}Ratsikas, Alexandros Hollender, and Alexandros~A. Voudouris.
\newblock Maximum nash welfare and other stories about {EFX}.
\newblock \emph{Theor. Comput. Sci.}, 863:\penalty0 69--85, 2021.
\newblock \doi{10.1016/J.TCS.2021.02.020}.
\newblock URL \url{https://doi.org/10.1016/j.tcs.2021.02.020}.

\bibitem[Amanatidis et~al.(2023)Amanatidis, Aziz, Birmpas, Filos{-}Ratsikas, Li, Moulin, Voudouris, and Wu]{DBLP:journals/ai/AmanatidisABFLMVW23}
Georgios Amanatidis, Haris Aziz, Georgios Birmpas, Aris Filos{-}Ratsikas, Bo~Li, Herv{\'{e}} Moulin, Alexandros~A. Voudouris, and Xiaowei Wu.
\newblock Fair division of indivisible goods: Recent progress and open questions.
\newblock \emph{Artif. Intell.}, 322:\penalty0 103965, 2023.
\newblock \doi{10.1016/J.ARTINT.2023.103965}.
\newblock URL \url{https://doi.org/10.1016/j.artint.2023.103965}.

\bibitem[Amanatidis et~al.(2024)Amanatidis, Filos-Ratsikas, and Sgouritsa]{amanatidis2024pushingfrontierapproximateefx}
Georgios Amanatidis, Aris Filos-Ratsikas, and Alkmini Sgouritsa.
\newblock Pushing the frontier on approximate efx allocations.
\newblock In \emph{Proceedings of the 25th {ACM} Conference on Economics and Computation, {EC} 2024, New Haven, United States of America, July 8 - 11, 2024}, 2024.

\bibitem[Aziz and Mackenzie(2016)]{Aziz2016}
Haris Aziz and Simon Mackenzie.
\newblock A discrete and bounded envy-free cake cutting protocol for four agents.
\newblock In Daniel Wichs and Yishay Mansour, editors, \emph{Proceedings of the 48th Annual {ACM} {SIGACT} Symposium on Theory of Computing, {STOC} 2016, Cambridge, MA, USA, June 18-21, 2016}. {ACM}, 2016.
\newblock \doi{10.1145/2897518.2897522}.
\newblock URL \url{https://doi.org/10.1145/2897518.2897522}.

\bibitem[Babaioff et~al.(2021)Babaioff, Ezra, and Feige]{babaioff2020fairtruthfulmechanismsdichotomous}
Moshe Babaioff, Tomer Ezra, and Uriel Feige.
\newblock Fair and truthful mechanisms for dichotomous valuations.
\newblock In \emph{Thirty-Fifth {AAAI} Conference on Artificial Intelligence, {AAAI} 2021, Thirty-Third Conference on Innovative Applications of Artificial Intelligence, {IAAI} 2021, The Eleventh Symposium on Educational Advances in Artificial Intelligence, {EAAI} 2021, Virtual Event, February 2-9, 2021}, pages 5119--5126. {AAAI} Press, 2021.
\newblock \doi{10.1609/AAAI.V35I6.16647}.
\newblock URL \url{https://doi.org/10.1609/aaai.v35i6.16647}.

\bibitem[Berendsohn et~al.(2022)Berendsohn, Boyadzhiyska, and Kozma]{berendsohn2022fixedpointcyclesefxallocations}
Benjamin~Aram Berendsohn, Simona Boyadzhiyska, and L{\'{a}}szl{\'{o}} Kozma.
\newblock Fixed-point cycles and approximate {EFX} allocations.
\newblock In Stefan Szeider, Robert Ganian, and Alexandra Silva, editors, \emph{47th International Symposium on Mathematical Foundations of Computer Science, {MFCS} 2022, August 22-26, 2022, Vienna, Austria}, volume 241 of \emph{LIPIcs}, pages 17:1--17:13. Schloss Dagstuhl - Leibniz-Zentrum f{\"{u}}r Informatik, 2022.
\newblock \doi{10.4230/LIPICS.MFCS.2022.17}.
\newblock URL \url{https://doi.org/10.4230/LIPIcs.MFCS.2022.17}.

\bibitem[Berger et~al.(2022)Berger, Cohen, Feldman, and Fiat]{AlmostFullEFXforFourAgents}
Ben Berger, Avi Cohen, Michal Feldman, and Amos Fiat.
\newblock Almost full {EFX} exists for four agents.
\newblock In \emph{Thirty-Sixth {AAAI} Conference on Artificial Intelligence, {AAAI} 2022, Thirty-Fourth Conference on Innovative Applications of Artificial Intelligence, {IAAI} 2022, The Twelveth Symposium on Educational Advances in Artificial Intelligence, {EAAI} 2022 Virtual Event, February 22 - March 1, 2022}, pages 4826--4833. {AAAI} Press, 2022.
\newblock \doi{10.1609/AAAI.V36I5.20410}.
\newblock URL \url{https://doi.org/10.1609/aaai.v36i5.20410}.

\bibitem[Bhaskar and Pandit(2024)]{bhaskar2024efxallocationsmultigraphclasses}
Umang Bhaskar and Yeshwant Pandit.
\newblock Efx allocations on some multi-graph classes, 2024.
\newblock URL \url{https://arxiv.org/abs/2412.06513}.

\bibitem[Budish(2010)]{budish}
Eric Budish.
\newblock The combinatorial assignment problem: approximate competitive equilibrium from equal incomes.
\newblock In Moshe Dror and Greys Sosic, editors, \emph{Proceedings of the Behavioral and Quantitative Game Theory - Conference on Future Directions, {BQGT} '10, Newport Beach, California, USA, May 14-16, 2010}, page 74:1. {ACM}, 2010.
\newblock \doi{10.1145/1807406.1807480}.
\newblock URL \url{https://doi.org/10.1145/1807406.1807480}.

\bibitem[Budish and Cantillon(2010)]{courses}
Eric Budish and Estelle Cantillon.
\newblock The multi-unit assignment problem: Theory and evidence from course allocation at harvard.
\newblock \emph{American Economic Review}, 102, 01 2010.
\newblock \doi{10.1257/aer.102.5.2237}.

\bibitem[Caragiannis et~al.(2019{\natexlab{a}})Caragiannis, Gravin, and Huang]{CGH19}
Ioannis Caragiannis, Nick Gravin, and Xin Huang.
\newblock Envy-freeness up to any item with high nash welfare: The virtue of donating items.
\newblock In Anna~R. Karlin, Nicole Immorlica, and Ramesh Johari, editors, \emph{Proceedings of the 2019 {ACM} Conference on Economics and Computation, {EC} 2019, Phoenix, AZ, USA, June 24-28, 2019}, pages 527--545. {ACM}, 2019{\natexlab{a}}.
\newblock \doi{10.1145/3328526.3329574}.
\newblock URL \url{https://doi.org/10.1145/3328526.3329574}.

\bibitem[Caragiannis et~al.(2019{\natexlab{b}})Caragiannis, Kurokawa, Moulin, Procaccia, Shah, and Wang]{EFXCara}
Ioannis Caragiannis, David Kurokawa, Herv{\'{e}} Moulin, Ariel~D. Procaccia, Nisarg Shah, and Junxing Wang.
\newblock The unreasonable fairness of maximum nash welfare.
\newblock \emph{{ACM} Trans. Economics and Comput.}, 7\penalty0 (3):\penalty0 12:1--12:32, 2019{\natexlab{b}}.
\newblock \doi{10.1145/3355902}.
\newblock URL \url{https://doi.org/10.1145/3355902}.

\bibitem[Chaudhury et~al.(2021{\natexlab{a}})Chaudhury, Garg, Mehlhorn, Mehta, and Misra]{CGMMM21}
Bhaskar~Ray Chaudhury, Jugal Garg, Kurt Mehlhorn, Ruta Mehta, and Pranabendu Misra.
\newblock Improving {EFX} guarantees through rainbow cycle number.
\newblock In P{\'{e}}ter Bir{\'{o}}, Shuchi Chawla, and Federico Echenique, editors, \emph{{EC} '21: The 22nd {ACM} Conference on Economics and Computation, Budapest, Hungary, July 18-23, 2021}, pages 310--311. {ACM}, 2021{\natexlab{a}}.
\newblock \doi{10.1145/3465456.3467605}.
\newblock URL \url{https://doi.org/10.1145/3465456.3467605}.

\bibitem[Chaudhury et~al.(2021{\natexlab{b}})Chaudhury, Kavitha, Mehlhorn, and Sgouritsa]{CKMS21}
Bhaskar~Ray Chaudhury, Telikepalli Kavitha, Kurt Mehlhorn, and Alkmini Sgouritsa.
\newblock A little charity guarantees almost envy-freeness.
\newblock \emph{{SIAM} J. Comput.}, 50\penalty0 (4):\penalty0 1336--1358, 2021{\natexlab{b}}.
\newblock \doi{10.1137/20M1359134}.
\newblock URL \url{https://doi.org/10.1137/20M1359134}.

\bibitem[Chaudhury et~al.(2024)Chaudhury, Garg, and Mehlhorn]{CGM24}
Bhaskar~Ray Chaudhury, Jugal Garg, and Kurt Mehlhorn.
\newblock {EFX} exists for three agents.
\newblock \emph{J. {ACM}}, 71\penalty0 (1):\penalty0 4:1--4:27, 2024.
\newblock \doi{10.1145/3616009}.
\newblock URL \url{https://doi.org/10.1145/3616009}.

\bibitem[Christodoulou et~al.(2023)Christodoulou, Fiat, Koutsoupias, and Sgouritsa]{EFXsimplegraphs}
George Christodoulou, Amos Fiat, Elias Koutsoupias, and Alkmini Sgouritsa.
\newblock Fair allocation in graphs.
\newblock In Kevin Leyton{-}Brown, Jason~D. Hartline, and Larry Samuelson, editors, \emph{Proceedings of the 24th {ACM} Conference on Economics and Computation, {EC} 2023, London, United Kingdom, July 9-12, 2023}, pages 473--488. {ACM}, 2023.
\newblock \doi{10.1145/3580507.3597764}.
\newblock URL \url{https://doi.org/10.1145/3580507.3597764}.

\bibitem[Christoforidis and Santorinaios(2024)]{christoforidis2024pursuitefxchoresnonexistence2approx}
Vasilis Christoforidis and Christodoulos Santorinaios.
\newblock On the pursuit of efx for chores: Non-existence and approximations.
\newblock In Kate Larson, editor, \emph{Proceedings of the Thirty-Third International Joint Conference on Artificial Intelligence, {IJCAI-24}}, pages 2713--2721. International Joint Conferences on Artificial Intelligence Organization, 8 2024.
\newblock \doi{10.24963/ijcai.2024/300}.
\newblock URL \url{https://doi.org/10.24963/ijcai.2024/300}.
\newblock Main Track.

\bibitem[Deligkas et~al.(2024)Deligkas, Eiben, Goldsmith, and Korchemna]{deligkas2024ef1efxorientations}
Argyrios Deligkas, Eduard Eiben, Tiger-Lily Goldsmith, and Viktoriia Korchemna.
\newblock Ef1 and efx orientations, 2024.
\newblock URL \url{https://arxiv.org/abs/2409.13616}.

\bibitem[Foley(1966)]{foley1966resource}
Duncan~Karl Foley.
\newblock \emph{Resource allocation and the public sector}.
\newblock Yale University, 1966.

\bibitem[Gamow and Stern(1958)]{gamow1958puzzle}
G.~Gamow and M.~Stern.
\newblock \emph{Puzzle-math}.
\newblock Viking Press, 1958.
\newblock ISBN 9780670583355.
\newblock URL \url{https://books.google.gr/books?id=_vdytgAACAAJ}.

\bibitem[Goldberg et~al.(2023)Goldberg, H{\o}gh, and Hollender]{GoldbergHH23}
Paul~W. Goldberg, Kasper H{\o}gh, and Alexandros Hollender.
\newblock The frontier of intractability for {EFX} with two agents.
\newblock In Argyrios Deligkas and Aris Filos{-}Ratsikas, editors, \emph{Algorithmic Game Theory - 16th International Symposium, {SAGT} 2023, Egham, UK, September 4-7, 2023, Proceedings}, volume 14238 of \emph{Lecture Notes in Computer Science}, pages 290--307. Springer, 2023.
\newblock \doi{10.1007/978-3-031-43254-5\_17}.
\newblock URL \url{https://doi.org/10.1007/978-3-031-43254-5\_17}.

\bibitem[Hall(1935)]{Hall1935OnRO}
P.~Hall.
\newblock On representatives of subsets.
\newblock \emph{Journal of the London Mathematical Society}, s1-10\penalty0 (1):\penalty0 26--30, 1935.
\newblock \doi{https://doi.org/10.1112/jlms/s1-10.37.26}.
\newblock URL \url{https://londmathsoc.onlinelibrary.wiley.com/doi/abs/10.1112/jlms/s1-10.37.26}.

\bibitem[Hosseini et~al.(2021)Hosseini, Sikdar, Vaish, and Xia]{DBLP:conf/aaai/HosseiniSVX21}
Hadi Hosseini, Sujoy Sikdar, Rohit Vaish, and Lirong Xia.
\newblock Fair and efficient allocations under lexicographic preferences.
\newblock In \emph{Thirty-Fifth {AAAI} Conference on Artificial Intelligence, {AAAI} 2021, Thirty-Third Conference on Innovative Applications of Artificial Intelligence, {IAAI} 2021, The Eleventh Symposium on Educational Advances in Artificial Intelligence, {EAAI} 2021, Virtual Event, February 2-9, 2021}, pages 5472--5480. {AAAI} Press, 2021.
\newblock \doi{10.1609/AAAI.V35I6.16689}.
\newblock URL \url{https://doi.org/10.1609/aaai.v35i6.16689}.

\bibitem[Hsu(2024)]{hsu2024efxorientationsmultigraphs}
Kevin Hsu.
\newblock Efx orientations of multigraphs, 2024.
\newblock URL \url{https://arxiv.org/abs/2410.12039}.

\bibitem[Jahan et~al.(2023)Jahan, Seddighin, Javadi, and Sharifi]{jahan2023rainbowcyclenumberefx}
Shayan~Chashm Jahan, Masoud Seddighin, Seyed Mohammad~Seyed Javadi, and Mohammad Sharifi.
\newblock Rainbow cycle number and {EFX} allocations: (almost) closing the gap.
\newblock In \emph{Proceedings of the Thirty-Second International Joint Conference on Artificial Intelligence, {IJCAI} 2023, 19th-25th August 2023, Macao, SAR, China}, pages 2572--2580. ijcai.org, 2023.
\newblock \doi{10.24963/IJCAI.2023/286}.
\newblock URL \url{https://doi.org/10.24963/ijcai.2023/286}.

\bibitem[Kaviani et~al.()Kaviani, Seddighin, and Shahrezaei]{kaviani2024envyfreeallocationindivisiblegoods}
Alireza Kaviani, Masoud Seddighin, and AmirMohammad Shahrezaei.
\newblock Almost envy-free allocation of indivisible goods: A tale of two valuations.
\newblock In \emph{Web and Internet Economics - 20th International Conference, {WINE} 2024, Edinburgh, United Kingdom, December 2-5}.

\bibitem[Lipton et~al.(2004)Lipton, Markakis, Mossel, and Saberi]{LiptonEtAl}
Richard~J. Lipton, Evangelos Markakis, Elchanan Mossel, and Amin Saberi.
\newblock On approximately fair allocations of indivisible goods.
\newblock In Jack~S. Breese, Joan Feigenbaum, and Margo~I. Seltzer, editors, \emph{Proceedings 5th {ACM} Conference on Electronic Commerce (EC-2004), New York, NY, USA, May 17-20, 2004}, pages 125--131. {ACM}, 2004.
\newblock \doi{10.1145/988772.988792}.
\newblock URL \url{https://doi.org/10.1145/988772.988792}.

\bibitem[Mahara(2024)]{mahara2021extensionadditivevaluationsgeneral}
Ryoga Mahara.
\newblock Extension of additive valuations to general valuations on the existence of {EFX}.
\newblock \emph{Math. Oper. Res.}, 49\penalty0 (2):\penalty0 1263--1277, 2024.
\newblock \doi{10.1287/MOOR.2022.0044}.
\newblock URL \url{https://doi.org/10.1287/moor.2022.0044}.

\bibitem[Markakis and Santorinaios(2023)]{OrdinalAssumptions}
Evangelos Markakis and Christodoulos Santorinaios.
\newblock Improved {EFX} approximation guarantees under ordinal-based assumptions.
\newblock In Noa Agmon, Bo~An, Alessandro Ricci, and William Yeoh, editors, \emph{Proceedings of the 2023 International Conference on Autonomous Agents and Multiagent Systems, {AAMAS} 2023, London, United Kingdom, 29 May 2023 - 2 June 2023}, pages 591--599. {ACM}, 2023.
\newblock \doi{10.5555/3545946.3598689}.
\newblock URL \url{https://dl.acm.org/doi/10.5555/3545946.3598689}.

\bibitem[Plaut and Roughgarden(2020)]{PlautRough}
Benjamin Plaut and Tim Roughgarden.
\newblock Almost envy-freeness with general valuations.
\newblock \emph{{SIAM} J. Discret. Math.}, 34\penalty0 (2):\penalty0 1039--1068, 2020.
\newblock \doi{10.1137/19M124397X}.
\newblock URL \url{https://doi.org/10.1137/19M124397X}.

\bibitem[Procaccia(2020)]{Procaccia}
Ariel~D. Procaccia.
\newblock An answer to fair division's most enigmatic question: technical perspective.
\newblock \emph{Commun. {ACM}}, 63\penalty0 (4):\penalty0 118, 2020.
\newblock \doi{10.1145/3382131}.
\newblock URL \url{https://doi.org/10.1145/3382131}.

\bibitem[Steihaus(1948)]{h__steihaus_1948}
H.~Steihaus.
\newblock The problem of fair division.
\newblock \emph{Econometrica}, 16:\penalty0 101--104, 1948.

\bibitem[Stromquist(1980)]{Stromquist1980HowTC}
Walter~R. Stromquist.
\newblock How to cut a cake fairly.
\newblock \emph{American Mathematical Monthly}, 87:\penalty0 640--644, 1980.
\newblock URL \url{https://doi.org/10.1080/00029890.1980.11995109}.

\bibitem[Varian(1974)]{VARIAN197463}
Hal~R Varian.
\newblock Equity, envy, and efficiency.
\newblock \emph{Journal of Economic Theory}, 9\penalty0 (1):\penalty0 63--91, 1974.
\newblock ISSN 0022-0531.
\newblock \doi{https://doi.org/10.1016/0022-0531(74)90075-1}.
\newblock URL \url{https://www.sciencedirect.com/science/article/pii/0022053174900751}.

\bibitem[Woodall(1980)]{Woo80}
D.R Woodall.
\newblock Dividing a cake fairly.
\newblock \emph{Journal of Mathematical Analysis and Applications}, 78\penalty0 (1):\penalty0 233--247, 1980.
\newblock ISSN 0022-247X.
\newblock \doi{https://doi.org/10.1016/0022-247X(80)90225-5}.
\newblock URL \url{https://www.sciencedirect.com/science/article/pii/0022247X80902255}.

\bibitem[Zeng and Mehta(2024)]{OnTheStructureOfEFXOrientationsOnGraphs}
Jinghan~A Zeng and Ruta Mehta.
\newblock On the structure of envy-free orientations on graphs.
\newblock \emph{CoRR}, abs/2404.13527, 2024.
\newblock \doi{10.48550/ARXIV.2404.13527}.
\newblock URL \url{https://doi.org/10.48550/arXiv.2404.13527}.

\bibitem[Zhou et~al.(2024)Zhou, Wei, Li, and Li]{mixedmanna}
Yu~Zhou, Tianze Wei, Minming Li, and Bo~Li.
\newblock A complete landscape of efx allocations on graphs: Goods, chores and mixed manna.
\newblock In Kate Larson, editor, \emph{Proceedings of the Thirty-Third International Joint Conference on Artificial Intelligence, {IJCAI-24}}, pages 3049--3056, Jeju, Korea, 8 2024. International Joint Conferences on Artificial Intelligence Organization.
\newblock \doi{10.24963/ijcai.2024/338}.
\newblock URL \url{https://doi.org/10.24963/ijcai.2024/338}.
\newblock Main Track.

\end{thebibliography}

\end{document}